\newtheorem{theo}{Theorem}
\theoremstyle{definition}
\newtheorem{example}{Example}
\pgfplotsset{compat=1.6}
\definecolor{ClrFill1}{HTML}{e6ebf9}
\definecolor{ClrFill2}{HTML}{b897c7}
\definecolor{ClrFill3}{HTML}{824c9a}
\definecolor{ClrFill4}{HTML}{4e79c3}
\definecolor{ClrFill5}{HTML}{57a3ab}
\definecolor{ClrFill6}{HTML}{7eb975}
\definecolor{ClrFill7}{HTML}{d0b541}
\definecolor{ClrFill8}{HTML}{e67f32}
\definecolor{ClrFill9}{HTML}{ce2123}
\definecolor{ClrFill10}{HTML}{531914}
\definecolor{Clr1}{HTML}{1f77b4}
\definecolor{Clr2}{HTML}{ff7f0e}
\definecolor{Clr3}{HTML}{2ca02c}
\definecolor{Clr4}{HTML}{d62728}
\definecolor{Clr5}{HTML}{9467bd}
\definecolor{Clr6}{HTML}{8c564b}
\definecolor{Clr7}{HTML}{e377c2}
\definecolor{Clr8}{HTML}{7f7f7f}
\definecolor{Clr9}{HTML}{bcbd22}
\definecolor{Clr10}{HTML}{17becf}
\definecolor{TaskClrFill0}{HTML}{696969}
\definecolor{TaskClrFill1}{HTML}{1f77b4}
\definecolor{TaskClrFill2}{HTML}{a0522d}
\definecolor{TaskClrFill3}{HTML}{7f0000}
\definecolor{TaskClrFill4}{HTML}{aec7e8}
\definecolor{TaskClrFill5}{HTML}{808000}
\definecolor{TaskClrFill6}{HTML}{ff7f0e}
\definecolor{TaskClrFill7}{HTML}{3cb371}
\definecolor{TaskClrFill8}{HTML}{008080}
\definecolor{TaskClrFill9}{HTML}{4682b4}
\definecolor{TaskClrFill10}{HTML}{ffbb78}
\definecolor{TaskClrFill11}{HTML}{2ca02c}
\definecolor{TaskClrFill12}{HTML}{98df8a}
\definecolor{TaskClrFill13}{HTML}{daa520}
\definecolor{TaskClrFill14}{HTML}{7f007f}
\definecolor{TaskClrFill15}{HTML}{b03060}
\definecolor{TaskClrFill16}{HTML}{d62728}
\definecolor{TaskClrFill17}{HTML}{ff0000}
\definecolor{TaskClrFill18}{HTML}{00ced1}
\definecolor{TaskClrFill19}{HTML}{ff9896}
\definecolor{TaskClrFill20}{HTML}{6a5acd}
\definecolor{TaskClrFill21}{HTML}{9467bd}
\definecolor{TaskClrFill22}{HTML}{c5b0d5}
\definecolor{TaskClrFill23}{HTML}{8c564b}
\definecolor{TaskClrFill24}{HTML}{ba55d3}
\definecolor{TaskClrFill25}{HTML}{00fa9a}
\definecolor{TaskClrFill26}{HTML}{c49c94}
\definecolor{TaskClrFill27}{HTML}{e377c2}
\definecolor{TaskClrFill28}{HTML}{f08080}
\definecolor{TaskClrFill29}{HTML}{f7b6d2}
\definecolor{TaskClrFill30}{HTML}{ff00ff}
\definecolor{TaskClrFill31}{HTML}{7f7f7f}
\definecolor{TaskClrFill32}{HTML}{f0e68c}
\definecolor{TaskClrFill33}{HTML}{ffff54}
\definecolor{TaskClrFill34}{HTML}{c7c7c7}
\definecolor{TaskClrFill35}{HTML}{add8e6}
\definecolor{TaskClrFill36}{HTML}{ff1493}
\definecolor{TaskClrFill37}{HTML}{7fffd4}
\newcommand{\AbbrMPSOC}{MPSoC}
\newcommand{\AbbrWSS}{WSS}
\newcommand{\AbbrIPS}{IPS}
\newcommand{\ModelSumMax}{\texttt{SM}}
\newcommand{\ModelLR}{\texttt{LR}}
\newcommand{\PIIntervals}{processing-idling}
\newcommand{\AlgGA}{\texttt{GA}}
\newcommand{\ModelILPSMOrig}{\texttt{ILP-SM}}
\newcommand{\ModelQPLR}{\texttt{QP-LR-UB}}
\newcommand{\ModelILPFeasibility}{\texttt{ILP-FEAS}}
\newcommand{\ModelILPIdleMin}{\texttt{ILP-IDLE-MIN}}
\newcommand{\ModelILPIdleMax}{\texttt{ILP-IDLE-MAX}}
\newcommand{\ModelBlackBox}{\texttt{BB}}
\newcommand{\ModelBlackBoxSM}{\texttt{BB-SM}}
\newcommand{\ModelHeur}{\texttt{HEUR}}
\newcommand{\SetIntPos}{\mathbb{Z}_{>0}}
\newcommand{\SetIntNonNeg}{\mathbb{Z}_{\geq 0}}
\newcommand{\SetRealNonNeg}{\mathbb{R}_{\geq 0}}
\newcommand{\MF}{h}
\newcommand{\PTotal}{P}
\newcommand{\PIdle}{P_{\text{idle}}}
\newcommand{\ResSymb}{C}
\newcommand{\ResSet}{\mathcal{\ResSymb}}
\newcommand{\ResNum}{m}
\newcommand{\ResIdx}{k}
\newcommand{\Res}[1]{\ResSymb_{#1}}
\newcommand{\ResCap}[1]{c_{#1}}
\newcommand{\CoreIdx}{r}
\newcommand{\TaskSymb}{\tau}
\newcommand{\TaskSet}{\mathcal{T}}
\newcommand{\TaskNum}{n}
\newcommand{\TaskIdx}{i}
\newcommand{\Task}[1]{\TaskSymb_{#1}}
\newcommand{\TaskProc}[2]{e_{#1,#2}}
\newcommand{\TaskChar}[1]{\mathcal{F}^{(t)}_{#1}}
\newcommand{\TaskCoefSlope}[2]{a_{#1,#2}}
\newcommand{\TaskCoefOffset}[2]{o_{#1,#2}}
\newcommand{\WinSymb}{W}
\newcommand{\WinSet}{\mathcal{\WinSymb}}
\newcommand{\Win}[1]{\WinSymb_{#1}}
\newcommand{\WinNum}{q} % \ell
\newcommand{\WinIdx}{j}
\newcommand{\WinLen}[1]{l_{#1}}
\newcommand{\VarAss}[3]{x_{#1,#2,#3}}
\begin{document}

\begin{frontmatter}

% \title{Thermal Modeling and Optimization of Safety-critical Tasks Allocation on Heterogeneous MPSoC: A Comparative Study}
\title{Thermal Modeling and Optimal Allocation of Avionics Safety-critical Tasks on Heterogeneous MPSoCs}

% \tnotetext[mytitlenote]{Fully documented templates are available in the elsarticle package on \href{http://www.ctan.org/tex-archive/macros/latex/contrib/elsarticle}{CTAN}.}

%% Group authors per affiliation:
% \author{TBA}
% \address{addr}

%% or include affiliations in footnotes:
\author[cvut]{Zdeněk Hanzálek}
\author[cvut]{Ondřej Benedikt}
\author[cvut]{Přemysl Šůcha}
\author[honeywell]{Pavel Zaykov}
\author[cvut]{Michal Sojka\corref{mycorrespondingauthor}} \ead{michal.sojka@cvut.cz}

\cortext[mycorrespondingauthor]{Corresponding author}
\address[cvut]{Czech Technical University in Prague, Jugoslávských partyzánů 1580/3, 160 00 Praha 6, Czech Republic}
\address[honeywell]{Honeywell International s.r.o., Brno, Czech Republic}

\begin{abstract}
  Multi-Processor Systems-on-Chip (MPSoC) can deliver high performance needed in many industrial domains, including aerospace. However, their high power consumption, combined with avionics safety standards, brings new thermal management challenges. This paper investigates techniques for offline thermal-aware allocation of periodic tasks on heterogeneous MPSoCs running at a fixed clock frequency, as required in avionics. The goal is to find the assignment of tasks to (i) cores and (ii) temporal isolation windows, as required in ARINC~653 standard, while minimizing the MPSoC temperature. To achieve that, we formulate a new optimization problem, we derive its
NP-hardness, and we identify its subproblem solvable in polynomial time.  Furthermore, we propose and analyze three power models, and integrate them within several novel optimization approaches based on heuristics, a black-box optimizer, and Integer Linear Programming (ILP). We perform the experimental evaluation on three popular MPSoC platforms (NXP i.MX8QM MEK, NXP i.MX8QM Ixora, NVIDIA TX2) and observe a difference of up to 5.5°C among the tested methods (corresponding to a 22\% reduction w.r.t. the ambient temperature). We also show that our method, integrating the empirical power model with the ILP, outperforms the other methods on all tested platforms.
\end{abstract}

\begin{keyword}
MPSoC \sep resource allocation \sep thermal-aware optimization \sep data-driven power modeling \sep temporal isolation
\sep ARINC~653 %\MSC[2010] 00-01\sep  99-00
\end{keyword}

\end{frontmatter}

% \tableofcontents

\section{Introduction}

Nowadays, modern avionics systems are based on Integrated Modular Avionics (IMA), which simplifies software development and certification efforts~\cite{watkins07:trans_from_feder_avion_archit}. The ARINC~653 standard~\cite{ARINC653P1-4}, which specifies the Real-Time Operating System (RTOS) interface~\cite{2008:Prisaznuk}, has been widely accepted. One of the prominent concepts adopted by ARINC~653 is time-partitioned scheduling, which ensures the required separation of individual application components.

Implementing partitioned scheduling on new platforms naturally brings some challenges. To a certain extent, many of them have already been addressed, such as resource sharing~\cite{2019:Dugo}, scheduling~\cite{2021:Han}, and security~\cite{2022:Potteigerb}.  However, the increasing demand for computing power in avionics applications brings new challenges related to thermal properties. Usually, recent avionics systems utilize modern and powerful Multi-Processor System-on-Chips (\AbbrMPSOC)~\cite{2012:Salloum,2018:Lentaris} which require  careful thermal management.

It is well known that overheating negatively affects system reliability and safety \cite{2014:Lakshminarayanan}. Moreover, the rise in the operational temperature of on-chip components may lead to irreversible permanent failures \cite{2019:Paul}. On the other hand, reducing the on-chip temperature leads to a decrease in leakage power, which nowadays accounts for a significant part of modern \AbbrMPSOC{} power consumption \cite{2013:Zhuravlev,2017:Pi}.
Typically, the temperature of aircraft computing systems is kept within the allowed limits by means of an external cooling system, but its high cost, weight and/or low reliability may represent a weakness in the overall system design. Therefore, the motivation of this paper is to investigate whether a software-based technique could be used to replace or reduce the cooling system.
Many thermal-related issues have already been addressed individually, including power and thermal modeling~\cite{2006:Huang, 2017:Yoon}, design of energy-efficient real-time scheduling algorithms~\cite{2016:Zhou, 2016:Chien, 2018:Zhou}, energy-efficient execution of redundant, safety-critical tasks~\cite{2023:Wu,2024:Xu}, scheduling of real-time tasks with energy harvesting~\cite{2023:Schieber}, and study of thermal behavior of hardware platforms~\cite{2020:Li, 2022:Sadiqbatcha}.
However, the combination of ARINC-653 partitioned scheduling and thermal management has not yet been addressed.

Throughout this paper, we address the problem of thermally efficient task allocation under ARINC-653 temporal isolation windows (further called ``windows'' only) on heterogeneous \AbbrMPSOC{}s. We assume that the allocation is computed offline (in the design phase) and that Dynamic Voltage and Frequency Scaling (DVFS) is not used, as it increases failure rates~\cite{2023:Wu} and therefore is typically forbidden by typical safety certification requirements~\cite{2017:Fakih}. Since the combination of thermal-aware offline scheduling with time-partitioning constraints has not been addressed before, we propose several optimization methods to allocate periodic workload on heterogeneous \AbbrMPSOC{} while minimizing the steady-state on-chip temperature.
We focus on the relationship between the selected thermal model and the optimization method that integrates it, and how the inaccuracy of the former affects the efficiency of the latter.
As required by our industry partner, we emphasize a data-driven evaluation based on the measured characteristics of real physical platforms.
Therefore, we conduct a series of experiments using three hardware platforms (I.MX8QM MEK~\cite{imx8-mek}, I.MX8QM Ixora~\cite{2021:NXP-IMX8QM}, NVIDIA TX2~\cite{TX2devkit}) to assess the quality of the proposed methods.

In our experiments, we use an open-source ARINC-653-like Linux scheduler called DEmOS that we developed and publicly released (see \url{https://github.com/ CTU-IIG /demos-sched}). It provides independence from proprietary avionics RTOSes. Furthermore, we make all measured data, as well as optimization methods, publicly available at \url{https://github.com/benedond/safety-critical-scheduling}. This allows easy reproducibility of our results and enables the use of the proposed methods in an industrial context, possibly with different hardware platforms.

\noindent\textbf{Contributions.}
\begin{itemize}
    \item  We formally define an \textit{ARINC problem} which considers a simple \emph{thermal model} first. We derive its NP-hardness by a polynomial reduction from a PARTITION problem. Furthermore, we derive the polynomial time complexity of the \textit{ARINC problem} with fixed lengths of windows by reduction to a Minimum Cost Flow problem,
    
    \item  We propose multiple optimization methods to tackle the problem of thermal-aware task allocation on heterogeneous \AbbrMPSOC{} under ARINC-653 temporal isolation constraints, including two informed methods based on mathematical programming, two informed methods based on genetic algorithms, one local informed heuristic, and two uniformed heuristics.

    \item  We analyze the trade-offs between the accuracy of the power model and the performance of the optimization method by integrating several power models with the proposed optimization methods.

    \item We conduct physical experiments on three different hardware platforms demonstrating practical applicability of the results and showing that, across all tested scenarios and platforms, the empirical Sum-Max power model (\ModelSumMax{}) integrated within an Integer Linear Programming formalism outperforms the other methods in terms of the thermal objective.

    \item We make all measured data and source code of proposed optimization methods publicly available.

\end{itemize}

This paper substantially extends our preliminary study \cite{2021:Benedikt} by: (i) introducing more optimization methods; (ii) extending the evaluation of the empirical \ModelSumMax{} power model; (iii) introducing a linear regression-based power model for comparison; (iv) conducting experiments on three hardware platforms; (v) introducing new benchmarks based on the industrial standard EEMBC~Autobench~2.0; and (vi) overall extending the scope of the experiments.

% Outline
\textbf{Outline.} The rest of this paper is organized as follows. \Cref{sec:proposed-methodology} summarizes the related work. \Cref{sec:system-model-and-pd} describes the system and formalizes the scheduling problem definition. Thermal and power modeling with regard to the allocation of the safety-critical tasks and implementation of specific models is addressed in \Cref{sec:thermal-modeling}. The main outcome -- the integration of the thermal models with the optimization procedures is discussed in \Cref{sec:solution-methods}.
Experimental evaluation is split between \cref{sec:hw-and-benchmarks}, where we describe used hardware platforms and benchmarks, and \Cref{sec:experiments}, where we compare the proposed methods. Finally, \Cref{sec:conclusion} concludes the paper.

\section{Related Work}
\label{sec:proposed-methodology}

To the best of our knowledge, our paper is the first to address the unique challenge of combining avionics-inspired time-partitioned scheduling of safety-critical workloads with thermal issues on real heterogeneous hardware platforms. We have not come across any other work that has tackled this specific problem.
In~\Cref{sec:rw-temporal-isolation}, we review the previous research on the use of windows as a method for reducing interference in real-time safety-critical systems, and in~\Cref{sec:rw-thermal-aware-schedulibng} the research on thermal-aware scheduling and optimization.

\subsection{Temporal Isolation}
\label{sec:rw-temporal-isolation}

 Temporal isolation, which is a key concept in ARINC-653 standard, ensures that different tasks within the avionics system are executed in a predictable and deterministic manner, even in the presence of hardware or software failures. Additionally, temporal isolation ensures that the system can meet real-time performance requirements, such as worst-case response times, even in the presence of changes in the system's workload or environment.

Several works address the problem of scheduling under temporal isolation constraints. 
Han et al. proposed a model-based optimization method for addressing the temporal isolation of systems using the ARINC 653 standard~\cite{2021:Han}. Their method based on heuristic search was intended to minimize the processor occupancy of the system, making it possible to accommodate additional application workload.
Tama\c{s}-Selicean and Pop researched the time-partitioned scheduling of safety-critical and non-critical applications in their paper~\cite{2011:Selicean}. They proposed an optimization approach based on Simulated Annealing to determine the sequence and length of partitions, assuming a fixed mapping of tasks to processing elements. In their subsequent work~\cite{2015:Selicean}, they extended the problem to also optimize the allocation of tasks to processing elements, considering different criticality levels of the applications. To solve this problem, they proposed an optimization approach based on Tabu Search. Waszniowski et al.~\cite{2009:Waszniowski} used timed automata to verify a distributed fault-tolerant real-time application.
Chen et al.~\cite{2016:Chen} investigated the scheduling of independent partitions in the context of IMA systems. They proposed a Mixed Integer Linear Programming (MILP) model to represent the schedulability constraints of the independent partitions, where each partition was modeled as a non-preemptive periodic task. Additionally, the authors proposed a heuristic approach to determine a start time and processor allocation for each partition.

Even though the aforementioned works address scheduling under temporal isolation constraints, none of them aims for energy-efficient schedule. As energy consumption (thermal-efficiency) and timeliness are conflicting objectives, direct application of the proposed methods in the context of thermal optimization is not easily possible. We aim at designing such methods that would respect the temporal constraints while minimizing the on-chip temperature.
 
\subsection{Thermal-aware Scheduling}
\label{sec:rw-thermal-aware-schedulibng}

Thermal-aware and energy-efficient scheduling for real-time systems has been studied for many years~\cite{chen07:energ_effic_sched_for_real,2011:Fisher,2016:Gerards}. Even though the individual approaches differ in many aspects, there are several common steps -- namely \emph{benchmarking}, \emph{optimization}, and  \emph{evaluation}.
The decisions made at each step relate to the other steps and have an influence on the overall properties of achieved results. In the following, we describe each step and related decisions in more detail and review relevant literature.

\begin{figure}[htb]
    \centering
    \resizebox{\textwidth}{!}{
        \newcommand{\Border}[3]{
    \draw[thick] ($(#1.north west)+(-0.3,0.3)$)  rectangle ($(#2.south east)+(0.3,-0.3)$);
	\node at ($(#1.north) + (0,0.3) $) [anchor=south] {\footnotesize  \textbf{#3}};
}

\begin{tikzpicture}

\tikzset{RNode/.style={rectangle, draw, font=\footnotesize, minimum size=1cm, text centered, inner sep=2pt, text width=2cm}}
\tikzset{Arrow/.style={-latex, thick}}
\tikzset{Desc/.style={font=\scriptsize, pos=0.5}}

\node[RNode] (NTaskSet) at (0,0) {Task set};
\node[RNode] (NHW) [below of=NTaskSet, yshift=-0.4cm] {Hardware platform};
\Border{NTaskSet}{NHW}{Benchmarking};

\node[RNode] (NAlg) [right of=NTaskSet, xshift=6cm] {Algorithm};
\node[RNode] (NModel) [below of=NAlg, yshift=-0.4cm] {Thermal model};
\Border{NAlg}{NModel}{Optimization};
\draw[Arrow] ($(NAlg.south) - (0.2,0)$) -- ($(NModel.north) - (0.2,0)$);
\draw[Arrow] ($(NModel.north) + (0.2,0)$) -- ($(NAlg.south) + (0.2,0)$);

\draw[Arrow] (NTaskSet.east) -- ($(NAlg.west)-(0.3,0)$) node[Desc, anchor=south] {Task Characteristics};
\draw[Arrow] (NHW.east) -- ($(NModel.west)-(0.3,0)$) node[Desc, anchor=north] {Platform Characteristics};

\node[RNode,text width=3cm] (NEval) [right of=NAlg, xshift=4.5cm] {Evaluation on the physical platform};
\node[RNode,text width=3cm, minimum size=0.5cm] (NData) [below of=NEval, yshift=0.00cm] {Data Collection};
\node[RNode,text width=3cm, minimum size=0.5cm] (NStats) [below of=NData, yshift=0.25cm] {Statistical evaluation};
\Border{NEval}{NStats}{Evaluation};
\draw[Arrow] ($(NAlg.east) + (0.3,-0.65)$) -- ($(NEval.west) - (0.3,0.65) $) node[Desc,anchor=south] {Schedule};

\node[RNode] (NInst) [right of=NTaskSet, xshift=2.3cm, yshift=-0.7cm] {Problem instance};
\draw[Arrow] (NInst.east) -- ($(NAlg.west) - (0.3, 0.7)$);

\draw[thick,dashed] (1.7,-3) -- (1.7,2);
\draw[Arrow] (1.7,-2.5) -- (0,-2.5) node[Desc, anchor=north, pos=0.7] {Done only once};
\draw[Arrow] (1.7,-2.5) -- (4,-2.5) node[Desc, anchor=north, pos=0.7] {Done for each instance};

\end{tikzpicture}
    }
    \caption{Three steps (bechmarking, optimization and evaluation) towards thermally efficient scheduling.}
    \label{fig:optimization-procedure}
\end{figure}

\paragraph{Benchmarking} Any thermal-aware optimization algorithm requires information about the platform and tasks to be executed, referred to as \emph{platform characteristics} and \emph{task characteristics}. 
% platform char
 Platform characteristics describe the thermal and power behavior without relation to the workload and might include the area of the chip, power consumption w.r.t. selected frequency, and thermal conductances and capacitances. These parameters are often taken from technical documentation \cite{2019:Perez, nvidiaJetsonTX2Series2019} or pre-defined configurations provided with the simulation software like HotSpot \cite{2018:Manna, 2018:Kanduri}. However, these may not accurately reflect reality as thermal parameters are often influenced by the Printed Circuit Board (PCB) layout chosen by a particular board manufacturer and manufacturing variations. To obtain more accurate parameters, benchmarking can be used \cite{2020:Sojka,2022:Ara}.
 % task char
Task characteristics allow for distinctions between workloads and their thermal effects, and are typically obtained through benchmarking.
However, the complexity of task characteristics can vary; some works assume all tasks to be identical \cite{2014:Chen}, while others assume a single numerical coefficient~\cite{2020:Zhou, 2018:Manna}, multiple coefficients \cite{2021:Benedikt,2016:Salami,2019:Balsinia}, or even very complex characteristics obtained, e.g., from CPU performance counters~\cite{2021:Sahid}, or neural networks~\cite{2018:Zhang}.
% overall
Care must be taken when selecting characteristics as their benchmarking can be time-consuming. In this paper, we show that even simple characteristics can be sufficient for significant temperature reduction.

\paragraph{Optimization}
% intro
The optimization procedure integrates the scheduling algorithm and the thermal model to allocate and schedule tasks while ensuring all other timing, thermal, and resource constraints are met. For safety-critical applications, such as in avionics, offline algorithms are often required~\cite{2002:Cofera}.
% thermal model
The thermal model predicts the evolution of on-chip temperature in time, based on the system state and workload. From the viewpoint of the systems dynamics, the thermal model can be transient-state~\cite{2022:Ara,2011:Fisher,2016:Zhou,2018:Li} or steady-state~\cite{2015:Zhu,2016:Salami,2018:Manna,2021:Abdollahi}, where the former is more general, while the latter is simpler to implement. According to Chantem et al.~\cite{2011:Chantem}, the steady-state model is sufficient if the temporal parameters of the workload are short enough, which is the case for our target applications in avionics. 
From the spacial point of view, the thermal model can be single-output~\cite{2021:Benedikt,2018:Jiang}, which provides a prediction for a single thermal node only, or multi-output~\cite{2022:Ara,2021:Abdollahi,2019:Li}.
As we experimentally show, it is quite difficult to distinguish between the temperatures of the individual computing elements of our tested platforms. Therefore, we use a single-output thermal model.
% optimization
Considering the optimization procedures for task scheduling and allocation, many different approaches have been studied. Due to the inherent complexity of thermal-aware scheduling, authors often rely on local or greedy heuristics~\cite{2021:Abdollahi, 2019:Paul, 2019:Li, 2015:Zhu,2015:Kuo,2016:Zhou,2024:Xu}. Other approaches include meta-heuristics~\cite{2018:Manna, 2019:Cao,2018:Liu}, or formulations based on mixed-integer linear programming~\cite{2018:Manna, 2021:Abdollahi,2011:Chantem,2018:Jiang}. In this paper, we use all these three approaches and compare them.
% integration
Another aspect to consider is the interaction between the scheduling algorithm and the thermal model. In the literature, we can find the following approaches: (i) The thermal model is used only to validate whether the schedule provided by the scheduling algorithm can be executed under the given thermal constraints or not \cite{2022:Ara}; (ii)
In more complex cases, the loop is closed, and the information about  thermal constraints violation is fed back to the scheduling algorithm, which, in turn, tries to rebuild the schedule~\cite{2021:Abdollahi,2019:Perez}; (iii)
Finally, the thermal model and the thermal constraints can be integrated directly within the scheduling algorithm, thus providing the most integrated solution \cite{2011:Chantem,2021:Benedikt}. In this paper, we implement and evaluate all three approaches.

\paragraph{Evaluation} Evaluation of the properties and performance of the resulting schedule is typically conducted  (i) in a simulator or (ii) by evaluation on a physical platform. Simulation-based approaches are found more often~\cite{2021:Abdollahi, 2019:Paul, 2019:Li, 2018:Zhou, 2018:Kanduri,2011:Chantem,2015:Kuo,2018:Jiang,2023:Wu,2024:Xu}. The reasons justifying this approach include simpler execution of the experiments and better reproducibility of the results. On the other hand, simulation is always based on models, which might fail to properly capture all details of the hardware platform. Thus some authors evaluate thermal effects of the schedules experimentally on real hardware~\cite{lucasMEMPowerDataAwareGPU2019, mantovaniPerformancePowerAnalysis2018, leeThermalAwareSchedulingIntegratedCPUs2019,2015:Pallister,2022:Ara}.
Among those papers, the majority uses just one hardware platform~\cite{leeThermalAwareSchedulingIntegratedCPUs2019,lucasMEMPowerDataAwareGPU2019,2022:Ara}, and those who evaluate on more platforms do not combine that with optimization techniques.
We believe that experimenting on real hardware provides a more accurate representation of the thermal behavior of a system due to its ability to reflect real-world conditions, take into account hardware variations, and allows observing interactions with other components. Therefore, we follow the experimental path in this paper.
Paper in hand is based on tight integration of the hardware and optimization algorithms, evaluates on three platforms, and addresses all aspects of the pipeline illustrated in \Cref{fig:optimization-procedure}.

\section{Goal and Problem Formalization} \label{sec:system-model-and-pd}

In this section, we define the goal of the thermal optimization and formalize the scheduling problem of the thermal-aware safety-critical tasks (further called ``tasks'' only) allocation on \AbbrMPSOC{} under temporal isolation constraints. 

\subsection{Goal}

We want to find an assignment of tasks to CPUs of a heterogeneous multi-core platform together with an allocation of tasks to the windows such that the steady-state temperature of the platform is minimized.

There are at least three factors that make this problem complex.
(i) All the tasks must be scheduled within the pre-defined major frame, which repeats indefinitely; therefore, any task allocation with the makespan exceeding the major frame is not feasible.
(ii) The number of used windows and their lengths are not known a priori.
(iii) The steady-state temperature depends on the thermal interference of the tasks running in parallel on different CPUs.

\subsection{Problem Statement} 

We define our problem and its parameters as follows.

% Hardware - clusters, etc.
\paragraph{Processing Elements} We assume a heterogeneous architecture, i.e., \AbbrMPSOC{} having $\ResNum$ computing \emph{clusters} denoted by $\{ \Res{1}, \Res{2}, \dots, \Res{\ResNum} \} = \ResSet$. Different clusters have different hardware architectures and can have different frequencies, therefore, they are unrelated resources in the scheduling terminology.
Cluster $\Res{\ResIdx}$ has $\ResCap{\ResIdx} \in \SetIntPos$ cores, which are assumed to be identical. All cores in the cluster share the same clock frequency, which we assume to be fixed due to typical safety requirements~\cite{2017:Fakih}. 

% Tasks
\paragraph{Tasks} We assume a set of independent, non-preemptive, periodic tasks \hbox{$\TaskSet = \{\Task{1}, \Task{2}, \dots, \Task{\TaskNum} \}$}. Each task represents a single-threaded safety-critical process that needs to be executed on one of the platform cores. By $\TaskProc{\TaskIdx}{\ResIdx} \in \SetIntPos$ we denote the \emph{worst-case execution time} of task $\Task{\TaskIdx} \in \TaskSet$ on cluster $\Res{\ResIdx} \in \ResSet$. All tasks are ready at time $0$ and have a common period $\MF \in \SetIntPos$, which is called a \emph{major frame length}. We assume that the deadline of each task is equal to $\MF$.  In this section, we consider a simple \emph{thermal model} related to the energy consumption as follows:  by $d_{{\TaskIdx},{\ResIdx}} \in \SetIntPos$ we denote the  \emph{energy consumption} of task $\Task{\TaskIdx}  \in \TaskSet$ when executed on cluster $\Res{\ResIdx} \in \ResSet$. % zde nepouzivame $\TaskCoefSlope{\TaskIdx}{\ResIdx}$ t.j. a_ik, ktery see jeste naspobi proctimem a deli delkou major frame, baby z toho byla spotreba tasku

% Arinc Windows
\paragraph{Arinc Temporal Isolation} The temporal isolation of the tasks is ensured by so-called \emph{windows}, which are non-overlapping intervals partitioning the major frame (see \cref{fig:example-schedule}) as in the ARINC-653 standard; we discuss more details in \cite{2021:Benedikt}. We denote the set of such windows as $\WinSet = \{\Win{1}, \Win{2}, \dots, \Win{\WinNum}\}$. Length of window $\Win{\WinIdx} \in \WinSet$ is denoted by $\WinLen{\WinIdx} \in \SetIntNonNeg$. Each task needs to be assigned to a single window, within which it will be executed on one core. At most one task per core can be executed within each window. 
Note that the number of available windows $\WinNum$ is upper-bounded by $\TaskNum$ (a situation when only one task will be present in each window) and lower-bounded by $  \lceil \TaskNum / \sum_k \ResCap{\ResIdx} \rceil$ (a situation when every core is used in every window). 
% For the rest of the paper, unless stated otherwise, we assume that $\WinNum = \TaskNum$. 
The number of used windows (i.e., the ones with at least one task assigned) is not known a priori, and it is a part of the decision.

\paragraph{Criterion function} Let $\TaskSet_{\ResIdx}$ be the set of tasks assigned to cluster $\Res{\ResIdx}$. In this section, we consider a simple \emph{thermal model}: the objective is to minimize the energy consumption $E= \sum\limits_{\Res{\ResIdx} \in \ResSet} \sum\limits_{\Task{\TaskIdx} \in \TaskSet_{\ResIdx}} d_{{\TaskIdx},{\ResIdx}}$ of all tasks in the major frame. 

\vspace{0,5 cm}
Therefore we define an\textit{ ARINC problem}  as follows: 

\vspace{0,5 cm}
\noindent\fbox{
\scalebox{1.0}[1.0]{
\begin{minipage}[l]{0.95\linewidth}
Input: $n$ tasks, $m$ clusters with $\ResCap{\ResIdx}$ cores on cluster $\Res{\ResIdx}$, $q$ windows  $\Win{1}, \dots, \Win{\WinNum}$, major frame length~$\MF$, processing time $\TaskProc{\TaskIdx}{\ResIdx}$ and energy consumption $d_{{\TaskIdx},{\ResIdx}}$ of task $i$ on cluster $k$.

Constraints: 
\begin{itemize}
    \item all tasks are assigned,
    \item at most $\ResCap{\ResIdx}$ tasks are assigned to cluster $\Res{\ResIdx}$ in each window,
%    \item at most one task is assigned to the core in each window.
    \item each window is at least as long as the longest task assigned to it, i.e., for each task $\Task{\TaskIdx}$ assigned to cluster $\Res{\ResIdx}$ and window $\Win{\WinIdx}$ holds $\WinLen{\WinIdx} \geq \TaskProc{\TaskIdx}{\ResIdx}$, and
    \item the total length of all windows is at most equal to the major frame length $\MF$, i.e., $\sum\limits_{\Win{\WinIdx} \in \WinSet} \WinLen{\WinIdx} \leq \MF$.
\end{itemize} 

Objective: Find an assignment of tasks to clusters and windows such that all constraints are satisfied and $E= \sum\limits_{\Res{\ResIdx} \in \ResSet} \sum\limits_{\Task{\TaskIdx} \in \TaskSet_{\ResIdx}} d_{{\TaskIdx}{\ResIdx}}$ is minimum.
\end{minipage}
}
}
\vspace{0,5 cm}

While obtaining the assignment of tasks to clusters and windows, it is trivial to derive the schedule (i.e., assignment of tasks to cores in time) since every cluster-window pair contains at most $\ResCap{\ResIdx}$ tasks and, therefore, it is trivial to assign at most one task to each core in each window. An order of windows is arbitrary. 

Used notation is illustrated in \Cref{fig:example-schedule} on a schedule of seven tasks.

\begin{figure}[htb]
    \centering
    \begin{tikzpicture}[xscale=1.5, yscale=0.5]

\draw[thick,blue,dotted] (0,-4) -- (0,3.5);
\draw[thick,blue,dotted] (1.5,-4) -- (1.5,2.5);
\draw[thick,blue,dotted] (4,-4) -- (4,2.5);
\draw[thick,blue,dotted] (6,-4) -- (6,3);

\node[anchor=south] at (0,3.5) {0};
\node[anchor=south] at (6,3.5) {$\MF$};

\draw[thick, latex-latex] (0,3.5) -- node[anchor=south] {Major frame} (6,3.5);

\begin{scope}[font=\scriptsize, yshift=-0.5cm]
    \draw[thick, latex-latex] (0,3) -- node[anchor=south] {$\WinLen{1}$} (1.5,3);
    \draw[thick, latex-latex] (1.5,3) -- node[anchor=south] {$\WinLen{2}$} (4,3);
    \draw[thick, latex-latex] (4,3) -- node[anchor=south] {$\WinLen{3}$} (6,3);
\end{scope}

\node[anchor=north,font=\scriptsize] at (0.75,-4) {Window $\Win{1}$};
\node[anchor=north,font=\scriptsize] at (2.75,-4) {Window $\Win{2}$};
\node[anchor=north,font=\scriptsize] at (5,-4) {Window $\Win{3}$};

\node[font=\footnotesize,anchor=east] at (0,1.5) {Core 0};
\node[font=\footnotesize,anchor=east] at (0,0.5) {Core 1};
\node[font=\footnotesize,anchor=east] at (0,-0.5) {Core 2};
\node[font=\footnotesize,anchor=east] at (0,-1.5) {Core 3};
\node[font=\footnotesize,anchor=east] at (0,-2.5) {Core 4};
\node[font=\footnotesize,anchor=east] at (0,-3.5) {Core 5};

 \draw [thick,decorate,decoration={brace,amplitude=5pt,mirror},xshift=0pt,yshift=0pt] (-0.75, 2) -- (-0.75,-1.95) node [black,midway,xshift=-0.9cm,align=center, font=\footnotesize] { Cluster $\Res{1}$\\$\ResCap{1} = 4$};
 
 \draw [thick,decorate,decoration={brace,amplitude=5pt,mirror},xshift=0pt,yshift=0pt] (-0.75, -2.05) -- (-0.75,-4) node [black,midway,xshift=-0.9cm,align=center, font=\footnotesize] { Cluster $\Res{2}$\\$\ResCap{2} = 2$}; 

 % window 1
\draw[rounded corners,fill=blue!10] (0,1.1) rectangle node[font=\scriptsize, align=center] {$\Task{2}$} (1.5,1.9);
\draw[rounded corners,fill=blue!10] (0,0.1) rectangle node[font=\scriptsize, align=center] {$\Task{4}$} (1.3,0.9);
\draw[rounded corners,fill=blue!10] (0,-1.9) rectangle node[font=\scriptsize, align=center] {$\Task{7}$} (1.45,-1.1);
\draw[rounded corners,fill=blue!10] (0,-3.9) rectangle node[font=\scriptsize, align=center] {$\Task{3}$} (0.5,-3.1);

% window 2 
\draw[rounded corners,fill=blue!10] (1.5,0.1) rectangle node[font=\scriptsize, align=center] {$\Task{6}$} (4,0.9);
\draw[rounded corners,fill=blue!10] (1.5,-2.9) rectangle node[font=\scriptsize, align=center] {Task $\Task{1}$} (3,-2.1);
\draw[rounded corners,fill=blue!10] (1.5,-0.9) rectangle node[font=\scriptsize, align=center] {$\Task{5}$} (3.75,-0.1);

\draw[latex-latex] (1.5,-3.2) -- node[anchor=north] {\footnotesize $\TaskProc{1}{2}$}(3,-3.2);

% Window 3 

\end{tikzpicture}%
    \caption{Schedule of seven tasks on six cores and three windows}
    \label{fig:example-schedule}
\end{figure}

\subsubsection{Problem Complexity} \label{sssec:complexity}

\begin{theo}
\label{theo:ARINC_NP_hard}
The \textit{ARINC problem} is at least weakly NP-hard. 
\end{theo}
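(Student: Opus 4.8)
The plan is to establish NP-hardness of the decision version of the \textit{ARINC problem} — ``given an instance and a bound $E^\star$, does there exist a feasible assignment with $E \le E^\star$?'' — by a polynomial-time reduction from PARTITION, which is (weakly) NP-complete. Since the number of used windows and their lengths are themselves part of the decision, whereas the pure task-to-cluster choice, or an instance with a single window, would be easy (an assignment / transportation problem), the gadget must arrange for the interaction between the major-frame budget $h$ and the per-cluster core limits $c_k$ to carry the difficulty.

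\textbf{Construction.} Given a PARTITION instance with positive integers $a_1,\dots,a_n$ and $\sum_{i=1}^n a_i = 2S$, I would build an ARINC instance with $n$ tasks, $m = 2$ clusters of capacities $c_1 = 1$ and $c_2 = n$, $q = n$ windows, major frame $h = 2S$, processing times $e_{i,1} = 2a_i$ and $e_{i,2} = 1$, energies $d_{i,1} = 1$ and $d_{i,2} = 1 + a_i$, and ask whether a feasible schedule with $E \le n + S$ exists. The idea: every task placed on cluster $1$ needs its own private window (only one core is available), so a subset $A$ placed there consumes exactly $\sum_{i \in A} 2a_i$ of the budget $2S$; a task placed on cluster $2$ is tiny ($e_{i,2} = 1$) and can be piled for free into any already-opened window, but costs $a_i$ more energy than on cluster $1$. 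Hence one wants to offload as much ``$a$-mass'' as possible onto cluster $1$, yet the budget caps $\sum_{i\in A} a_i$ at $S$.

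\textbf{Correctness.} For the forward direction, a subset $A$ with $\sum_{i\in A} a_i = S$ yields a schedule: give each task of $A$ a private window (total length $\sum_{i\in A} 2a_i = 2S = h$), and pile all remaining tasks onto cluster $2$ inside one of those windows (legal since $e_{i,2} = 1$ is at most any opened length and $n - |A| \le n = c_2$); its energy is $|A| + \sum_{i\notin A}(1 + a_i) = n + (2S - S) = n + S$. Conversely, in any feasible schedule let $A$ be the set of tasks placed on cluster $1$; these lie in $\ge |A|$ distinct windows, each of length $\ge 2a_i$, so $2\sum_{i\in A} a_i \le \sum_j l_j \le h = 2S$, i.e.\ $\sum_{i\in A} a_i \le S$, and the total energy equals $|A| + \sum_{i\notin A}(1 + a_i) = n + (2S - \sum_{i\in A} a_i) \ge n + S$, with equality iff $\sum_{i\in A} a_i = S$. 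Thus a feasible schedule of value $\le n + S$ exists iff the PARTITION instance is a yes-instance, and the map is computable in polynomial time.

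\textbf{Main obstacle.} The construction and its verification are elementary once the ``cluster-1 tasks need private windows'' observation is in place; the points needing care are: (i) checking that the cluster-2 tasks can always be absorbed into an opened window without violating the $c_2$ capacity, and handling the degenerate case $A = \emptyset$ (which only gives the worse energy $n + 2S$); (ii) the role of the $+1$ offsets in the energies, needed because $d_{i,k}$ must be strictly positive; (iii) the harmless edge case $a_i > S$, where $e_{i,1} = 2a_i > h$ simply forbids placing task $i$ on cluster $1$, consistent with such an item never belonging to a subset of sum $S$; and (iv) observing that every number in the produced instance is $O(\sum_i a_i)$, so the reduction is genuinely polynomial and transfers only the pseudo-polynomial hardness of PARTITION — which is precisely why the statement claims ``at least weakly'' NP-hard. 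I expect the only genuinely delicate step to be the clean argument that \emph{every} feasible schedule, regardless of how many windows it opens and how it spreads the cluster-2 tasks, has energy at least $n + S$; the rest is bookkeeping.
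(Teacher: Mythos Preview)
Your reduction is correct, and the argument that cluster-$1$ tasks each occupy a private window (so $2\sum_{i\in A}a_i\le h$) together with the energy accounting $E=n+(2S-\sum_{i\in A}a_i)$ cleanly pins $\sum_{i\in A}a_i=S$. The edge cases you flag (positivity of $d_{i,k}$, $a_i>S$, $A=\emptyset$) are all handled.

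The paper's proof is in the same spirit but differs in mechanism. It reduces from the \emph{equal-cardinality} PARTITION variant, sets $c_1=c_2=1$ and $q=n/2$, and uses $e_{i,1}=b_i$, $e_{i,2}=1$, $d_{i,1}=B-b_i$, $d_{i,2}=B$, $h=B$, $D=(n-1)B$. There the window count $q=n/2$ together with two single-core clusters \emph{forces} exactly $n/2$ tasks onto each cluster, so the cardinality constraint of the source problem is encoded structurally rather than left free as in your construction. Your route is arguably more elementary: it uses the plain PARTITION problem, lets $|A|$ float, and sidesteps the paper's implicit requirement $b_i<B$ (needed so that $d_{i,1}=B-b_i\in\mathbb{Z}_{>0}$). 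The paper's route, in turn, shows that hardness persists even when both clusters have a single core and the number of windows is tight, which is a slightly sharper structural statement. Both gadgets exploit the same tension---the major-frame budget caps how much ``$a$-mass'' can be pushed onto the energy-cheap cluster---so neither proof dominates the other; they simply trade a cardinality-forcing trick for a capacity-asymmetry trick.
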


\begin{proof}
In order to derive the complexity of the \textit{ARINC problem}, we formulate its decision version while extending the problem input by maximum energy $D$. The decision question is the following: Is there any feasible assignment of tasks to clusters and windows such that all constraints of the \textit{ARINC problem} are satisfied and $\sum\limits_{\Res{\ResIdx} \in \ResSet} \sum\limits_{\Task{\TaskIdx} \in \TaskSet_{\ResIdx}} d_{{\TaskIdx},{\ResIdx}} \leq D$? 

In the following, we show that the decision version of the \textit{ARINC problem} is NP-complete.

Consider a \textit{PARTITION problem with an equal number of items}: 

\vspace{0,5 cm}\noindent
\fbox{
\scalebox{1.0}[1.0]{
\begin{minipage}[l]{0.95\linewidth}

Input: Finite set $A$ of items with positive integer sizes $b_1,\ldots,b_{|A|}$ and positive integer $B=\sum_{i \in A}b_i/2$.

Question: Is there a subset $A' \subseteq A$ such that  $ \sum_{i \in A'} b_i = B$ and $|A'|=|A|/2$.
\end{minipage}
}
}
\vspace{0,5 cm}

This PARTITION problem is known to be weakly NP-complete \cite{1990:Garey} 
and it is often used to derive a complexity of scheduling problems  \cite{2024:Hanen}.
%C.Hanen and A.Munier Kordon studied tractability of a scheduling problem \cite{2024:Hanen}
Now consider the following polynomial reduction of the \textit{PARTITION problem with an equal number of items} to the decision version of the \textit{ARINC problem}:

We consider two clusters $\Res{1}, \Res{2}$ with one core on each cluster, thus $\ResCap{1}=\ResCap{2}=1$. We consider  $n=|A|$ tasks where task $i$ is related to item $i$, such that, execution times are $\TaskProc{\TaskIdx}{1}=b_i$, $\TaskProc{\TaskIdx}{2}=1$ and energy consumptions are $d_{{\TaskIdx},{1}}=B-b_i$, $d_{{\TaskIdx},{2}}=B$. The major frame length $\MF=B$, maximum energy $D=(n-1)B$, and number of available intervals $q=|A|/2=n/2$.

We claim that since we have $n/2$ intervals and two cores (one in each cluster), then the algorithm solving the decision version of the \textit{ARINC problem} assigns exactly one task on each core in each window. Therefore, we have $\TaskSet_{1}$ set of $n/2$ tasks on the first core corresponding to the subset $A'$. The items in $A'$ have the size less than or equal to $B$ since $\sum\limits_{i \in A'} b_i =  \sum\limits_{\Task{\TaskIdx} \in \TaskSet_{1}} e_{{\TaskIdx},{1}} \leq \MF= B$. 

Consequently, we have $n/2$ tasks on the second core, and its energy consumption is equal to $nB/2$. Thus energy consumption of the first machine is less than $(n-1)B - nB/2 = nB/2 - B$. Formally:

\begin{equation}  \label{eq:proof_energy}
\begin{aligned}
    \sum\limits_{\Task{\TaskIdx} \in \TaskSet_{1}} d_{{\TaskIdx},{1}} = \sum\limits_{\Task{\TaskIdx} \in \TaskSet_{1}} B - b_i & \leq 
    nB/2 - B
    \\
    nB/2 - \sum\limits_{\Task{\TaskIdx} \in \TaskSet_{1}} b_i & \leq 
    nB/2 - B
    \\
   \sum\limits_{\Task{\TaskIdx} \in \TaskSet_{1}} b_i & \geq 
    B
\end{aligned}
\end{equation}

With the conclusion of the previous paragraph, we obtain $\sum\limits_{i \in A'} b_i = B$. Thus, any yes answer to the \textit{ARINC decision problem} implies a yes answer to the \textit{PARTITION problem with an equal number of items}.
%The energy on the first machine is nB-sum of tasks durations on the first machine.  So the duration on the first machine cannot be greater than B due to the interval length and not less than B to achieve bound W.

Conversely, a subset $A'$ obtained from the \textit{PARTITION problem with an equal number of items} gives a feasible schedule on the first core since  $\sum\limits_{i \in A'} b_i = B = \MF$, and we can always process the remaining $n/2$ tasks corresponding to items not in $A'$ on the second core. Energy consumption on both cores is also satisfied since $ nB/2 - \sum\limits_{\Task{\TaskIdx} \in \TaskSet_{1}} b_i + nB/2 = nB-B = D$. 

Therefore, the decision version of the \textit{ARINC problem} is at least weakly NP-complete, and the \textit{ARINC problem} itself is at least weakly NP-hard.
\end{proof}

Let us further study, which part of the problem is polynomialy solvable. 
Therefore, we define a ``fixed version of the \textit{ARINC problem}'' as the  \textit{ARINC problem} where the length 
$l_j$ is given for each window $\Win{\WinIdx}$ in the problem input.

\begin{theo}
\label{theo:ARINC_fixed_polynomial}
The fixed version of the \textit{ARINC problem} is solvable in polynomial time. 
\end{theo}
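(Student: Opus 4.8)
The plan is to show that, once the window lengths are fixed, what remains is a pure assignment problem with capacities, and to solve it by reduction to a Minimum Cost Flow problem on an integral network, invoking the integrality and polynomial-time solvability of min-cost flow.

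First I would observe that fixing the lengths $\WinLen{1},\dots,\WinLen{\WinNum}$ decouples the constraints of the \textit{ARINC problem}. The total-length constraint $\sum_{\WinIdx}\WinLen{\WinIdx}\le\MF$ no longer involves any decision variable, so it can be verified once in $O(\WinNum)$ time; if it fails, the instance is infeasible. The ``window at least as long as its longest task'' constraint then becomes a purely local compatibility relation: task $\Task{\TaskIdx}$ may be placed in window $\Win{\WinIdx}$ on cluster $\Res{\ResIdx}$ if and only if $\TaskProc{\TaskIdx}{\ResIdx}\le\WinLen{\WinIdx}$ (in particular, windows of length $0$ admit no task, since $\TaskProc{\TaskIdx}{\ResIdx}\ge 1$). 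The only remaining global requirement is the capacity bound: at most $\ResCap{\ResIdx}$ tasks on cluster $\Res{\ResIdx}$ in each window. Since recovering the actual schedule from such an assignment is trivial (as already noted in the problem statement), it suffices to optimally assign each task to a compatible cluster--window pair respecting the capacities, minimizing $\sum d_{\TaskIdx,\ResIdx}$.

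Next I would build the flow network. Create a source $s$ and a sink $t$. For each task add a node together with an arc from $s$ to it of capacity $1$ and cost $0$. For each cluster--window pair $(\Res{\ResIdx},\Win{\WinIdx})$ add a node together with an arc from it to $t$ of capacity $\ResCap{\ResIdx}$ and cost $0$. Finally, whenever $\TaskProc{\TaskIdx}{\ResIdx}\le\WinLen{\WinIdx}$, add an arc of capacity $1$ and cost $d_{\TaskIdx,\ResIdx}$ from the node of task $\Task{\TaskIdx}$ to the node of $(\Res{\ResIdx},\Win{\WinIdx})$. This network has $O(\TaskNum+\ResNum\WinNum)$ nodes and $O(\TaskNum\cdot\ResNum\WinNum)$ arcs, hence size polynomial in the input (recall $\WinNum\le\TaskNum$), and all capacities and costs are nonnegative integers. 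I would then compute a minimum-cost flow of value $\TaskNum$ from $s$ to $t$. If the maximum $s$--$t$ flow is below $\TaskNum$ (e.g., some task's execution time exceeds every window length), the instance is infeasible; otherwise, by integrality of min-cost flows on integral networks there is an optimal $\{0,1\}$-valued flow on the task arcs, so each task sends its unit to exactly one compatible pair $(\Res{\ResIdx},\Win{\WinIdx})$, and the arc capacities $\ResCap{\ResIdx}$ enforce the per-window cluster bound. Conversely, any feasible ARINC assignment induces an integral $s$--$t$ flow of value $\TaskNum$ with the same total cost $\sum d_{\TaskIdx,\ResIdx}$; hence the min-cost flow value equals the optimum energy, and reading off the saturated task arcs yields an optimal assignment. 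Since min-cost flow is solvable in (strongly) polynomial time, the theorem follows.

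The only points requiring care, none of them a real obstacle, are: (i) handling infeasibility on both counts -- the length precheck $\sum_{\WinIdx}\WinLen{\WinIdx}\le\MF$ and the maximum-flow-reaches-$\TaskNum$ test; and (ii) making explicit that, with lengths fixed, there is no coupling between windows beyond the capacities, so that the local compatibility plus capacity model faithfully captures the fixed version of the \textit{ARINC problem}. The argument is essentially a direct modeling reduction.
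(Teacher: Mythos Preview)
Your proposal is correct and takes essentially the same approach as the paper: both reduce the fixed version to a Minimum Cost Flow instance with task nodes, cluster--window nodes, compatibility arcs gated by $\TaskProc{\TaskIdx}{\ResIdx}\le\WinLen{\WinIdx}$, and capacities $\ResCap{\ResIdx}$, then invoke integrality and polynomial-time solvability of min-cost flow. The only cosmetic differences are that the paper uses the balance formulation (task vertices with balance $+1$, a single sink with balance $-\TaskNum$) rather than an explicit source--sink pair, and your write-up is more careful about stating the arc costs and the two infeasibility checks.
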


\begin{proof}
Consider well-known \textit{Minimum Cost Flow problem}: 

\vspace{0,5 cm}\noindent
\fbox{%
\scalebox{1.0}[1.0]{
\begin{minipage}[l]{0.96\linewidth}

Input:  directed graph $G$, upper bound $u_g \in \SetRealNonNeg$ and cost $p_g \in \mathbb{R}$ of every edge $g$ of graph $G$, balance~$b_v \in \mathbb{R}$ of every vertex $v$ of graph $G$. Balances satisfy $\sum_{v} b_v = 0$.

Objective:  Find the feasible flow $f_g \in \SetRealNonNeg$ for every edge $g$ of graph $G$, that minimizes $\sum_{g} f_g \cdot p_g$ and satisfies all the constraints, or decide that it does not exist. 

Constraints: 
\begin{itemize}
    \item the upper bound is respected, i.e., $f_g\leq u_g$ for every edge $g$ of graph $G$,
    \item the first Kirchhof's law is respected, i.e.,  $\sum_{g \in \delta^+(v)} f_g - \sum_{g \in \delta^-(v)} f_g = b_v$ for every vertex $v$ of graph $G$, where $\delta^+(v)$ is a set of edges leaving vertex $v$ and $\delta^-(v)$ is a set of edges entering vertex~$v$.
\end{itemize}
\end{minipage}
}
}
\vspace{0,5 cm}

We will formulate the fixed version of the \textit{ARINC problem} as the {Minimum Cost Flow problem} in the following way:

1) Create $\TaskNum$ tasks vertices $\Task{i}$ with balances $b_{\Task{i}}=1$.

%) $n\cdot m$ nodes $u_{ij}$ for each task $\tau_i$ and cluster $\mathcal{C}_j$. Add arcs with capacity $1$ from each task node $i$ to all nodes $u_{ij}$ : these arcs with the previous ones express the fact that each task is performed on a single cluster.

2) Create $\WinNum \cdot \ResNum$ vertices $WC_{\WinIdx\ResIdx}$ for each window $\Win{\WinIdx}$ and cluster 
$\Res{\ResIdx}$. Create an arc from vertex $\Task{i}$ to $WC_{\WinIdx\ResIdx}$ with upper bound $1$ exists if and only if $e_{i,\ResIdx} \leq l_\WinIdx$, i.e.,  the duration of task $\Task{i}$ on cluster $\Res{\ResIdx}$ fits in window $\Win{\WinIdx}$. A unit flow on this arc expresses that task $\Task{i}$ is assigned on cluster $\Res{\ResIdx}$ in window $\Win{\WinIdx}$. The balance of each $WC_{\WinIdx\ResIdx}$ vertex is equal to zero.

3) Create a sink vertex, and arcs from each $WC_{\WinIdx\ResIdx}$ to the sink vertex with upper bound $\ResCap{\ResIdx}$ indicating that the number of tasks assigned to each cluster does not exceed the number of cores. The balance of the sink vertex is equal to $- \TaskNum$.

If there is a solution to the \textit{Minimum Cost Flow problem}, we will obtain an integer solution due to the integer upper bounds. Therefore, every arc from vertex $\Task{i}$ to $WC_{\WinIdx\ResIdx}$ will have a flow equal to 0 or 1. The well-known cycle-canceling algorithm can provide an optimal solution in polynomial time. 
\end{proof}

\subsubsection{Complex Thermal Model}
%We assume that the platform has so-called \emph{platform characteristic}, which is obtained by benchmarking and characterizes the platform's thermal/power behavior.

%We denote the task characteristics associated with task $\Task{\TaskIdx}$ as $\TaskChar{\TaskIdx}$. Note that both task characteristics and platform characteristics are platform-specific, i.e., they need to be obtained separately for each tested platform. 

In order to minimize the steady-state temperature of the real platform as formulated by the goal at the beginning of this section, the task \emph{thermal model} and \emph{criterion function} will get more complex in Section~\ref{sec:thermal-modeling}.  We assume that both the platform and tasks have an influence on the steady-state temperature. Therefore, Section \ref{sec:thermal-modeling} defines so-called \emph{platform characteristic} and \emph{task characteristic} that are obtained by benchmarking and characterize the platform's thermal/power behavior. We consider two different models (namely a \emph{Sum-Max Model} and \emph{Linear Regression Model}). Both models can be easily reduced to the simple thermal model considered in this section, and thus optimization problems based on those models are NP-hard as well.  

% Thermal model
\section{Thermal and Power Modeling} \label{sec:thermal-modeling}

A lot of attention must be paid when designing a thermal model.
In our view, the main aspects of the thermal model to be considered and balanced are its \emph{simplicity} and \emph{accuracy}. A simpler model is easier to integrate with the optimization procedures. Also, it takes less effort to identify its parameters. However, a too simple model fails to predict the system's behavior accurately. Therefore, the trade-off between simplicity and accuracy needs to be taken into account.

In this paper, we decided to use a \emph{steady-state}, \emph{single-output} thermal model. This decision is based on the fact that typical lengths of the major fame used in avionics applications are less than one second, while the assumed hardware platforms have relatively slow thermal dynamics and sufficiently low differences in temperature at different places on the chip. This decision is experimentally justified in \cref{sec:thermal-model-analysis}.

In the following \Cref{sec:thermal-to-power}, we discuss the transition from thermal to power modeling, and then, in \Cref{sec:thermal-model-implementation}, we define the power models that we later integrate with the optimization procedures.

\subsection{Transition from Thermal to Power Model} \label{sec:thermal-to-power}

The single-output steady-state thermal model of an MPSoC adopted in this paper is tightly related to a simpler model based on average power consumption. In some sense, they can be used interchangeably, but from the practical viewpoint, the difference is very important. Deviations in the ambient temperature cause deviations in the steady-state temperature. On the other hand, measuring the power input is usually more stable and easily reproducible. Further, the time needed to reach stabilized temperature can be rather long, whereas the power measurements reflect the immediate state. 

Widely used methodology for creating thermal models of \AbbrMPSOC{} relies on Resistance-Capacitance (RC) thermal networks \cite{2006:Huang, 2014:Pagani, 2019:Perez}. The system is modeled as a set of thermal nodes, that are interconnected via thermal conductances and associated with thermal capacitances. The relation between the temperature of every thermal node, its power consumption, and the ambient temperature can then be expressed by a set of differential equations~\cite{2014:Pagani}:

\begin{equation} \label{eq:rc-system}
    \bm{A} \bm{T}^{\prime} + \bm{B}\bm{T} = \bm{P} +  \bm{G} T_{\text{amb}},
\end{equation}

\noindent where $\eta$ is the number of thermal nodes, $\bm{A} \in \mathbb{R}^{\eta\times \eta}$ is a matrix of capacitances, $\bm{B} \in \mathbb{R}^{\eta \times \eta}$ is a matrix of thermal conductances, $\bm{T} \in \mathbb{R}^{\eta \times 1}$ is a vector of temperatures at each node, $\bm{P} \in \mathbb{R}^{\eta \times 1}$ is a vector of power consumption of the nodes, and $\bm{G} \in \mathbb{R}^{\eta \times 1}$ is a vector containing the thermal conductance between each node and the ambient.

When the system reaches a steady state, $\bm{A} \bm{T}^{\prime}$ becomes zero as the temperature remains constant in time. Then, considering a single thermal node only, $\bm{B}$, $\bm{G}$ and $\bm{P}$ become scalars (denoted by $B$, $G$, and $P$) and the whole system reduces to linear relation with respect to $P$:
\begin{equation} \label{eq:tp-linear}
    T = \frac{1}{B} P + \frac{G}{B} T_{\text{amb}},
\end{equation}
where $T$ is the steady-state temperature at the thermal node, and $P$ is the power consumption.

In the rest of the paper, we work with the power model instead of the thermal model, assuming that the final transformation from the average power to the steady-state temperature can be done according to \eqref{eq:tp-linear}.

Note that model \eqref{eq:rc-system} does not take into account the temperature-dependent leakage power, contrary to, e.g., Guo et al.~\cite{2019:Guo}. While this might look like a significant drawback, our results in \cref{sec:experiments} show that even such a model is sufficient for temperature reduction when integrated within the optimization framework. 

\subsection{Power Models} \label{sec:thermal-model-implementation}

Following the general discussion on thermal modeling, we continue with descriptions of specific models. As noted in \Cref{sec:thermal-to-power}, since the average power and the steady-state temperature are linearly related, we implement just the models estimating the average power consumption.

\subsubsection{Empirical Sum-Max Model}

First, we summarize the \emph{sum-max} model (\ModelSumMax{}) that we proposed in \cite{2021:Benedikt}. The model is purely empirical; given a window with the allocated tasks, it predicts the average power consumed during the execution of such a window. 

Specifically, given window $\Win{\WinIdx}$ of length $\WinLen{\WinIdx}$ and tasks allocation $(\TaskSet_{1}^{\WinIdx}, \dots, \TaskSet_{\ResNum}^{\WinIdx})$, where set $\TaskSet_{\ResIdx}^{\WinIdx}$ represents tasks allocated to cluster $\Res{\ResIdx}$ in window $\Win{\WinIdx}$, the \ModelSumMax{} model predicts the average power consumption $\PTotal(\Win{\WinIdx})$ as:

\begin{equation}
    \PTotal(\Win{\WinIdx}) = \sum\limits_{\Res{\ResIdx} \in \ResSet} \sum\limits_{\Task{\TaskIdx} \in \TaskSet_{\ResIdx}^{\WinIdx}} \left( \TaskCoefSlope{\TaskIdx}{\ResIdx} \cdot \frac{\TaskProc{\TaskIdx}{\ResIdx}}{\WinLen{\WinIdx}} \right) +  \max\limits_{\substack{\Res{\ResIdx} \in \ResSet \\ \Task{\TaskIdx} \in \TaskSet_{\ResIdx}^{\WinIdx}}} \TaskCoefOffset{\TaskIdx}{\ResIdx} + \PIdle, \label{eq:max-sum-model}
\end{equation}
where $\PIdle$ is the idle power consumption of the platform, and  $\TaskCoefOffset{\TaskIdx}{\ResIdx}$ and $\TaskCoefSlope{\TaskIdx}{\ResIdx}$ are task-specific coefficients obtained via benchmarking. 
The average power of a schedule consisting of multiple windows is calculated as a weighted average of their individual contributions (the weights correspond to the window lengths).

The model is built upon the assumption that power consumption of $z$ instances of task $\Task{\TaskIdx}$ executed independently and in parallel on $z \in \{1,2,\dots,\ResCap{\ResIdx}\}$ cores of cluster $\Res{\ResIdx}$ can be expressed as $(z \cdot \TaskCoefSlope{\TaskIdx}{\ResIdx} + \TaskCoefOffset{\TaskIdx}{\ResIdx} + \PIdle)$. Coefficients $\TaskCoefSlope{\TaskIdx}{\ResIdx}$ and $\TaskCoefOffset{\TaskIdx}{\ResIdx}$ can be related to dynamic and static power consumption incurred by execution of task $\Task{\TaskIdx}$ on $\Res{\ResIdx}$. The intuition behind choosing the maximum $\TaskCoefOffset{\TaskIdx}{\ResIdx}$ is that different tasks need to power up different number of resources shared between cores (interconnects, shared execution units, etc.) and therefore only the task requesting the most resources determines the static power consumption of these resources.
At the end of this chapter, we present a numerical example illustrating the calculation of the \ModelSumMax{} power model for one specific window.

\paragraph{Platform Characteristic} In the context of the modeling and optimization framework discussed in this paper, the platform characteristic is given by a single parameter only, which is the idle power consumption, $\PIdle$.
% Task characteristics
\paragraph{Task Characteristics} The sum-max model needs two coefficients for each task and cluster, i.e., task characteristics $\TaskChar{\TaskIdx}$ for a platform with two clusters are represented by a four-tuple for each $\Task{\TaskIdx} \in \TaskSet$, $\TaskChar{\TaskIdx} = (\TaskCoefSlope{\TaskIdx}{1}, \TaskCoefOffset{\TaskIdx}{1}, \TaskCoefSlope{\TaskIdx}{2}, \TaskCoefOffset{\TaskIdx}{2})$.  Following our methodology introduced in~\cite{2021:Benedikt}, these coefficients are determined experimentally, as shown in \cref{sec:platf-task-char}.

\subsubsection{Linear Regression Model}

The sum-max model was designed to estimate the average power consumption of a whole window. Its simple form allows for relatively straightforward integration with the optimization methods, e.g., with Integer Linear Programming, as we have shown in \cite{2021:Benedikt}. However, the model may fail to provide accurate prediction when the tasks of very distinct lengths are present in the window. Specifically, the max term counts with the largest $\TaskCoefOffset{\TaskIdx}{\ResIdx}$ only, which might represent one of the short tasks that possibly ends early in the window. In that case, the predicted power might overestimate the actual one.

To overcome this shortcoming, we designed another power estimation model based on linear regression, which was successfully used in the context of power consumption estimation \cite{2020:Shahid}. Instead of estimating the average power of the whole window, we split the window into several intervals, within which each core either executes a single benchmark for the whole time or remains idle for the whole time. We call them \emph{\PIIntervals{}} intervals. The situation is illustrated in \Cref{fig:splitting-window}. Then, the model estimates the power consumption of each such interval. Similarly to the \ModelSumMax{} model, the overall average power consumption is then estimated by a weighted average of the intervals' individual contributions.

Thanks to the decomposition into the \PIIntervals{} intervals, data acquisition and parameter identification of the model becomes easier since the timing and overlaps of the individual tasks do not need to be considered (each core is either processing or idling for the whole interval). A further advantage is that such a model can be used even if the temporal isolation constraints (windows) are not considered since essentially any multi-core schedule can be divided into such intervals (simply by projecting the start times and end times of tasks to the time axis; the intervals are then defined by every two consecutive projected time points). On the contrary, the integration of the \ModelLR{} model with the optimization might be harder since the lengths of the \PIIntervals{} intervals are not known a priori as they depend on the allocation of the tasks to windows. 

\begin{figure}
    \centering
    %Task{Core}{Start}{End}{Desc}{Clr}
\newcommand{\TaskGantt}[5]{
	\draw[fill=#5] ($(#2,-#1) - (0,0.2)$) rectangle node[font=\scriptsize,pos=0.5] {#4} ($(#3, -#1) - (0,0.8)$);
}

\begin{tikzpicture}[yscale=0.6,xscale=1.4]

% Draw the tasks
\TaskGantt{0}{0}{4.5}{Task 1 (a2time-4K)}{Clr1!40!white};
\TaskGantt{2}{0}{5.5}{Task 2 (canrdr-4M)}{Clr2!40};
%\TaskGantt{3}{0}{1.5}{Task 3}{Clr3!40};
\TaskGantt{4}{0}{7}{Task 3 (membench-1M-RO-S)}{Clr3!40};

% Draw lines between intervals
\foreach \i in {0, 4.5, 5.5, 7}{
	\draw[dashed] (\i, 0) -- (\i, -7);
}

% Draw interval descriptions
\draw[latex-latex, thick] (0,-6.5) -- node[anchor=north,font=\tiny] {Interval 1} (4.5,-6.5);
%\draw[latex-latex, thick] (1.5,-6.5) -- node[anchor=north,font=\tiny] {Interval 2} (4,-6.5);
\draw[latex-latex, thick] (4.5,-6.5) -- node[anchor=north,font=\tiny] {Interval 2} (5.5,-6.5);
\draw[latex-latex, thick] (5.5,-6.5) -- node[anchor=north,font=\tiny] {Interval 3} (7,-6.5);

% Draw dotted lines in between cores
\foreach \i in {1,2,3,5}{
	\draw[thick, densely dotted] (0,-\i) -- (7,-\i);
}

% Draw thick lines in between clusters
\draw[thick] (0,-4) -- (7,-4);

% Draw border around the while window
\draw[thick] (0,0) rectangle (7,-6);
\node[anchor=south] at (3.5, 0) {\footnotesize Window $W$};

% Draw cores descriptions
\foreach \i in {1,...,6}{
	\node[anchor=west, font=\footnotesize, xshift=5pt] at  ($(7, -\i) + (0,0.5)$) {Core \i};
}

% Draw cluster descriptions
\begin{scope}[xshift=-5pt]
	\draw[latex-latex, thick] (0,-4) -- node[sloped, anchor=south, pos=0.5] {\scriptsize little cluster} (0,0);
	\draw[latex-latex, thick] (0,-6) -- node[sloped, anchor=south, pos=0.5] {\scriptsize big cluster} (0,-4);
\end{scope}

\end{tikzpicture}
    \caption{Illustration of \PIIntervals{} intervals needed for the \ModelLR{} model on a platform with two clusters named \emph{big} and \emph{little}.}
    \label{fig:splitting-window}
\end{figure}

To describe the Linear Regression model (\ModelLR{}), let us assume that we have some interval $I$ with allocated tasks. By $i(\ResIdx, \CoreIdx)$ we denote the index of the task, which is allocated in interval $I$ to cluster $\Res{\ResIdx} \in \ResSet$ and its core $\CoreIdx \in \{1,\dots, \ResCap{\ResIdx}\}$. If the core is idle, we assume $i(\ResIdx, \CoreIdx) = 0$ (where $\Task{0}$ represents an idle task). Now, let us assume that behavior of task $\Task{\TaskIdx}$ on each cluster $\Res{\ResIdx}$ is characterized by a vector of real numbers $\bm{\hat{x}}_{\TaskIdx, \ResIdx}$ (e.g., $\bm{\hat{x}}_{\TaskIdx, \ResIdx} = (\TaskCoefOffset{\TaskIdx}{\ResIdx}, \TaskCoefSlope{\TaskIdx}{\ResIdx})$). We assume that idle task $\Task{0}$ is characterized by zero vector, $\bm{\hat{x}}_{0, \ResIdx} = \bm{0} \ \forall \Res{\ResIdx} \in \ResSet$. Then, the average power consumption of interval $I$ can be estimated by \ModelLR{} model as:

\begin{equation} \label{eq:lr-general}
    P(I) = 
    \sum\limits_{\Res{\ResIdx} \in \ResSet}
    \sum\limits_{\CoreIdx=1}^{\ResCap{\ResIdx}} 
    \left( 
        \bm{\hat{x}}_{i(\ResIdx,\CoreIdx),\ResIdx} \circ \bm{\beta}_{\ResIdx, \CoreIdx}
    \right) + \PIdle,
\end{equation}
where $\circ$ is the scalar product operator and $\PIdle$ is the constant intercept term. Note that when no task is executed (all are idle), the prediction is exactly $\PIdle$, which is the platform's idle power consumption.
In this general form, the regression coefficients $\bm{\beta}_{\ResIdx, \CoreIdx}$ are possibly different for each core. However, we assume that all cores of each cluster are identical, so arbitrary permutation of tasks allocated to a single cluster should lead to the same power consumption. By this, we can simplify the model \eqref{eq:lr-general} to:

\begin{equation}  \label{eq:lr-specific}
\begin{aligned}
    P(I) & = 
    \sum\limits_{\Res{\ResIdx} \in \ResSet}
    \sum\limits_{\CoreIdx=1}^{\ResCap{\ResIdx}} 
    \left( 
        \bm{\hat{x}}_{i(\ResIdx,\CoreIdx),\ResIdx} \circ \bm{\beta}_{\ResIdx}
    \right) + \PIdle \\
     & = 
    \sum\limits_{\Res{\ResIdx} \in \ResSet}
    \bm{\beta}_{\ResIdx} \circ \left(
    \sum\limits_{\CoreIdx=1}^{\ResCap{\ResIdx}} 
        \bm{\hat{x}}_{i(\ResIdx,\CoreIdx),\ResIdx}
    \right) + \PIdle,
\end{aligned}
\end{equation}
where all cores of the single cluster have the same regression coefficients. 
For simpler understanding of the calculation of the \ModelLR{} power model, we present a numerical example at the end of this chapter.

\paragraph{Platform and Tasks Characteristics} Similarly to \ModelSumMax{} model, the \ModelLR{} model needs just the idle power $\PIdle$ as the platform characteristic. On the other hand, the individual tasks characteristics now correspond to the elements of input vector $\bm{\hat{x}}_{\TaskIdx,\ResIdx}$. For simplicity and better comparison, we can use already identified coefficients $\TaskCoefOffset{\TaskIdx}{\ResIdx}$ and $\TaskCoefSlope{\TaskIdx}{\ResIdx}$, which are also the only tasks characteristics used by \ModelSumMax{} model. In general, we could also include more characteristics obtained, e.g., by monitoring the performance counters during the tasks execution \cite{2021:Sahid}.

The regression coefficients $\bm{\beta}_{\ResIdx}$ are obtained experimentally as shown in \cref{sec:ident-regr-coeff}.

\begin{example}
  To illustrate the calculation of both power models numerically, let us assume three tasks on a platform with two clusters named \emph{big} and \emph{little}, as illustrated in \cref{fig:splitting-window}. Each task executes different code (called kernel) described in detail later in \cref{sec:benchmarking-kernels}. The first task executes \emph{a2time-4K} kernel for \SI{450}{\milli\second} at the little cluster ($\ResIdx = 1$). The second task, also assigned to the little cluster, executes \emph{canrdr-4M} kernel for \SI{550}{\milli\second}. Finally, the third task, which is assigned to the big cluster ($\ResIdx = 2$), executes \emph{membench-1M-RO-S} kernel for \SI{700}{\milli\second}. Hence, the length of the window is \SI{700}{\milli\second}. Assume that the platform characteristic $P_{\text{idle}} = \SI{5.5}{\watt}$ and relevant task characteristics coefficients are\footnote{The values correspond to the I.MX8~MEK platform -- see \cref{sec:platf-task-char} and \ref{app:kernel-chars-values}.}: $\TaskCoefSlope{1}{1} = 0.25$, $\TaskCoefOffset{1}{1} = 0.25$, $\TaskCoefSlope{2}{1} = 0.41$, $\TaskCoefOffset{2}{1} = 1.36$, $\TaskCoefSlope{3}{2} = 1.24$, and $\TaskCoefOffset{3}{2} = 1.22$.
The \ModelSumMax{} power model just takes the maximum offset coefficient (1.36) and adds the activity coefficients contributions of the individual tasks, i.e., the estimated power consumption is:
  $P_{\text{idle}} + 1.36 + \left( 0.25 \cdot \frac{450}{700} + 0.41 \cdot \frac{550}{700} + 1.24 \cdot \frac{700}{700} \right) \doteq \SI{8.58}{\watt}.$
To evaluate the \ModelLR{} model, the whole window is split to three \PIIntervals{} intervals. The first interval is \SI{450}{\milli\second} long and all three tasks are executed during it. The second interval is \SI{100}{\milli\second} long and covers the second and third task. The last interval is \SI{150}{\milli\second} long, and covers only the third task. The power is estimated individually for each interval, and averaged out in the end. Assume the linear regression coefficients are (see \cref{tab:regression-coefficients}): $\beta_{1,1}=1.205$, $\beta_{1,2}=0.969$, $\beta_{2,1}=0.270$, $\beta_{2,2}=0.456$. % , $R^2=0.822$
For the first interval, the prediction is $ \left[\beta_{1,1}(\TaskCoefSlope{1}{1} + \TaskCoefSlope{2}{1}) + \beta_{1,2} \cdot\TaskCoefSlope{3}{2}\right] + \left[\beta_{2,1}(\TaskCoefOffset{1}{1} + \TaskCoefOffset{2}{1}) + \beta_{2,2} \cdot\TaskCoefOffset{3}{2}\right]$, which is $[1.205 \cdot (0.25+0.41) + 0.969 \cdot 1.24] + [0.270 \cdot (0.25+1.36) + 0.456 \cdot 1.22] \doteq 2.99$. Since this interval lasts only \SI{450}{\milli\second}, this number is then averaged to $2.99 \cdot \frac{450}{700} \doteq \SI{1.92}{\watt}$. The calculations for the other two intervals are analogous; we report just their averaged contributions, which are $\SI{0.37}{\watt}$ and $\SI{0.38}{\watt}$, respectively. In total, the power predicted by \ModelLR{} power model is $P_{\text{idle}} + 1.92 + 0.37+0.38 = \SI{8.17}{\watt}$. % The visual comparison is shown in \cref{fig:power-model-vis}.

% \begin{figure}
%     \centering
%     \resizebox{\textwidth}{!}{
%         \input{figures/tikz/power-models-vis}
%     }
%     \caption{Comparison of \ModelLR{} and \ModelSumMax{} power models on an example task set from \cref{fig:splitting-window}.}
%     \label{fig:power-model-vis}
% \end{figure}
\end{example}

\section{Optimization Methods} \label{sec:solution-methods}

In \Cref{sec:thermal-modeling}, we revised the thermal modeling, and summarized two specific power models, namely \ModelSumMax{} and \ModelLR{} models. Further, we defined all the necessary platform and task characteristics. In this section, we develop optimization methods that incorporate the power models and solve the problem of task allocation while minimizing the estimated power consumption. The resulting combination presents a novel solution of the mentioned problem.

First, \cref{sec:method-ilp} summarizes the optimizer based on Integer Linear Programming (ILP) and the \ModelSumMax{} model that we proposed in \cite{2021:Benedikt}. 
%Then we propose its reformulation improving its scalability. 
Second, we discuss the optimization based on the \ModelLR{} power model in \cref{sec:method-lr-general}. Finally, we introduce an informed greedy heuristic and uninformed idle-time optimizers used for further comparison in \cref{sec:method-ref,sec:idle-optimizer}, respectively. 

\subsection{ILP and Sum-Max Model} \label{sec:method-ilp}

The original implementation that we proposed in \cite{2021:Benedikt} relies on a simple encoding of the scheduling problem to the ILP formalism. Binary variable $\VarAss{\TaskIdx}{\WinIdx}{\ResIdx}$ equals to 1 if and only if task $\Task{\TaskIdx} \in \TaskSet$ is allocated to window $\Win{\WinIdx} \in \WinSet$ and cluster $\Res{\ResIdx} \in \ResSet$.  Then, all the resource constraints can be simply written as:

\begin{align}
    & \sum\limits_{\Win{\WinIdx} \in \WinSet} \sum\limits_{\Res{\ResIdx} \in \ResSet} \VarAss{\TaskIdx}{\WinIdx}{\ResIdx} = 1 \quad \forall \Task{\TaskIdx} \in \TaskSet, \label{eq:model-assignment}\\
    & \sum\limits_{\Task{\TaskIdx} \in \TaskSet} \VarAss{\TaskIdx}{\WinIdx}{\ResIdx} \leq \ResCap{\ResIdx} \quad \forall {\Win{\WinIdx} \in \WinSet}, \Res{\ResIdx} \in \ResSet, \label{eq:model-capacity}
\end{align}
meaning that each task is assigned to some cluster and core and that the capacity of each cluster is respected. To model the length of the individual windows, variable $\WinLen{\WinIdx}$ is introduced for each window $\Win{\WinIdx} \in \WinSet$. The length is then linked to the assignment variables by

\begin{equation}
    \WinLen{\WinIdx} \geq \VarAss{\TaskIdx}{\WinIdx}{\ResIdx} \cdot \TaskProc{\TaskIdx}{\ResIdx} \quad  \forall \Task{\TaskIdx} \in \TaskSet, \Win{\WinIdx} \in \WinSet, \Res{\ResIdx} \in \ResSet, \label{eq:model-len-max-partition}
\end{equation}
and constrained by the major frame length $\MF$:

\begin{equation}
    \sum\limits_{\Win{\WinIdx} \in \WinSet} \WinLen{\WinIdx} \leq \MF. \label{eq:model-MF}
\end{equation}

Finally, the \ModelSumMax{} model \eqref{eq:max-sum-model} is linearized to fit the ILP formalism and rewritten to the objective function~\eqref{eq:model-objective}. The idle power $\PIdle$ is not included in the objective since it is considered to be constant. The power consumption predictions are averaged over all windows with the weights corresponding to the windows lengths. The non-linear \emph{max} term is (in each window $\Win{\WinIdx} \in \WinSet$) replaced by continuous variable $y_{\WinIdx}$, which serves as its upper bound. When the solver reaches the optimum, this upper bound becomes tight. The link between $y_{\WinIdx}$ and $\VarAss{\TaskIdx}{\WinIdx}{\ResIdx}$ is formed by \emph{big-M}, where $M$ is a sufficiently large constant. The objective~\eqref{eq:model-objective} represents the estimated average power consumption. The whole ILP model encoding the scheduling problem and integrating the \ModelSumMax{} model then becomes:

\begin{align}
    \text{\ModelILPSMOrig:} \quad & \min \frac{1}{\MF} 
    \sum\limits_{\Win{\WinIdx} \in \WinSet} 
        \left( \sum\limits_{\Task{\TaskIdx} \in \TaskSet} 
        \sum\limits_{\Res{\ResIdx} \in \ResSet} 
        (\VarAss{\TaskIdx}{\WinIdx}{\ResIdx} \cdot \TaskCoefSlope{\TaskIdx}{\ResIdx} \cdot \TaskProc{\TaskIdx}{\ResIdx}) 
        + y_{\WinIdx} \right) 
    \label{eq:model-objective} \\
    & \text{subject to:} \nonumber \\
    & y_{\WinIdx} \geq \TaskCoefOffset{\TaskIdx}{\ResIdx} \cdot \WinLen{\WinIdx} - M \cdot (1-\VarAss{\TaskIdx}{\WinIdx}{\ResIdx}) \quad \forall \Task{\TaskIdx} \in \TaskSet, \Win{\WinIdx}  \in \WinSet, \Res{\ResIdx} \in \ResSet, \label{eq:model-y-ub} \\
    & \text{\eqref{eq:model-assignment}, \eqref{eq:model-capacity}, \eqref{eq:model-len-max-partition}, \eqref{eq:model-MF}.} \nonumber \\
    & \VarAss{\TaskIdx}{\WinIdx}{\ResIdx} \in \{0,1\}, \WinLen{\WinIdx} \in \SetIntNonNeg, y_{\WinIdx} \in \SetRealNonNeg \nonumber
\end{align}

\subsection{Optimization Based on the Linear Regression Model} \label{sec:method-lr-general}

Contrary to the \ModelSumMax{} model, which predicts the power for each window, the \ModelLR{} model predicts the power for \PIIntervals{} intervals only.
Direct integration within the ILP formalism proved to be quite laborious\footnote{We implemented the model, but it became rather complex, mainly due to the necessary linearization, and its performance was poor even for small instances (we failed to obtain reasonable upper and lower bounds for instance with 20 tasks even after one day of solving by state-of-the-art solver Gurobi~\cite{gurobi}).}; therefore, we followed two different paths. 
First, we neglect the \PIIntervals{} intervals and assume that each task executes for the whole duration of the window to which it is allocated. We call this simplified model as \texttt{LR-UB}. Then we can simply formulate the optimization as a quadratic programming optimization problem as shown in \cref{sec:lr-and-QP}. 
Second, we use a different optimization framework, namely the black-box optimization based on a genetic algorithm, which can simply integrate the \ModelLR{} model as a part of the fitness evaluation. We describe this approach in \cref{sec:bb-lr-model}.

\subsubsection{Integration of LR to QP} \label{sec:lr-and-QP}

We assume that all tasks allocated to one window are executed for the whole length of the window. Therefore, the execution times of the individual tasks are assumed to be potentially longer than they are in reality. In consequence, \PIIntervals{} intervals are completely neglected (each window becomes a single \PIIntervals{} interval). In \cref{fig:intervals-relaxed}, we illustrate the assumed extensions of the task execution times in gray (the original \PIIntervals{} intervals are shown in \cref{fig:splitting-window}).
In a sense, the idea is similar to the \emph{max} term in the \ModelSumMax{} model. We hope that by minimizing the upper bound instead of the original objective, we get a schedule that performs reasonably well in practice while keeping the formulation relatively simple. 

\begin{figure}
    \centering
    %Task{Core}{Start}{End}{Desc}{Clr}
\newcommand{\TaskGantt}[5]{
	\draw[fill=#5] ($(#2,-#1) - (0,0.2)$) rectangle node[font=\tiny,pos=0.5] {#4} ($(#3, -#1) - (0,0.8)$);
}

\begin{tikzpicture}[yscale=0.6,xscale=1.4]

% Draw the tasks
\TaskGantt{0}{0}{4.5}{Task 1}{Clr1!40!white}; \TaskGantt{0}{4.5}{7}{Task 1 (extended)}{lightgray};
\TaskGantt{2}{0}{5.5}{Task 2}{Clr2!40}; \TaskGantt{2}{5.5}{7}{Task 2 (extended)}{lightgray};
%\TaskGantt{3}{0}{1.5}{Task 3}{Clr3!40}; \TaskGantt{3}{1.5}{7}{Task 2 (extended)}{lightgray};
\TaskGantt{4}{0}{7}{Task 3}{Clr3!40};

% Draw interval descriptions
\draw[latex-latex, thick] (0,-6.5) -- node[anchor=north,font=\tiny] {Interval 1} (7,-6.5);

% Draw dotted lines in between cores
\foreach \i in {1,2,3,5}{
	\draw[thick, densely dotted] (0,-\i) -- (7,-\i);
}

% Draw thick lines in between clusters
\draw[thick] (0,-4) -- (7,-4);

% Draw border around the while window
\draw[thick] (0,0) rectangle (7,-6);
\node[anchor=south] at (3.5, 0) {\footnotesize Window $W$};

% Draw cores descriptions
\foreach \i in {1,...,6}{
	\node[anchor=west, font=\footnotesize, xshift=5pt] at  ($(7, -\i) + (0,0.5)$) {Core \i};
}

% Draw cluster descriptions
\begin{scope}[xshift=-5pt]
	\draw[latex-latex, thick] (0,-4) -- node[sloped, anchor=south, pos=0.5] {\scriptsize little cluster} (0,0);
	\draw[latex-latex, thick] (0,-6) -- node[sloped, anchor=south, pos=0.5] {\scriptsize big cluster} (0,-4);
\end{scope}

\end{tikzpicture}
    \caption{Illustration of a simplified window with a single \PIIntervals{} interval only, as used by \texttt{LR-UB}.}
    \label{fig:intervals-relaxed}
\end{figure}

Then, we follow the same steps as for the integration of the \ModelSumMax{} model with the ILP. We use the same binary variables $\VarAss{\TaskIdx}{\WinIdx}{\ResIdx}$ deciding whether task $\Task{\TaskIdx} \in \TaskSet$ is allocated to window $\Win{\WinIdx} \in \WinSet$ and cluster $\Res{\ResIdx} \in \ResSet$. Constraints modeling the task allocation and the resource capacity and limiting the major frame length are the same as before, only the objective changes, as the power is now predicted by the \ModelLR{} model for each window. The whole model is now as follows:

\begin{align}
    \text{\ModelQPLR:} \quad & \min\frac{1}{\MF} \cdot \sum\limits_{\Win{\WinIdx} \in \WinSet}
    \WinLen{\WinIdx} \cdot \sum\limits_{\Task{\TaskIdx} \in \TaskSet} \sum\limits_{\Res{\ResIdx} \in \ResSet} \VarAss{\TaskIdx}{\WinIdx}{\ResIdx} \cdot \underbrace{\left( \TaskCoefSlope{\TaskIdx}{\ResIdx} \cdot \beta_{1,\ResIdx} + \TaskCoefOffset{\TaskIdx}{\ResIdx} \cdot \beta_{2,\ResIdx} \right)}_{\star}
    \label{eq:model-lr-objective} \\
    & \text{subject to:} \nonumber \\
    & \text{\eqref{eq:model-assignment}, \eqref{eq:model-capacity}, \eqref{eq:model-len-max-partition}, \eqref{eq:model-MF}.} \nonumber \\
    & \VarAss{\TaskIdx}{\WinIdx}{\ResIdx} \in \{0,1\}, \WinLen{\WinIdx} \in \SetIntNonNeg \nonumber
\end{align}

Clearly, the objective becomes quadratic (due to multiplication of $\VarAss{\TaskIdx}{\WinIdx}{\ResIdx}$ and $\WinLen{\WinIdx}$).\footnote{We use Gurobi solver for both Linear and Quadratic optimization.} Note that for all the tested benchmarking kernels (see \ref{app:kernel-chars-values}) and identified linear regression coefficients (see \cref{tab:regression-coefficients}), the expression denoted by $\star$ in \eqref{eq:model-lr-objective} is positive. Furthermore, $\star$ is zero for the idle task.
Therefore, the objective~\eqref{eq:model-lr-objective} becomes an upper bound on the original \ModelLR{} value (by the original \ModelLR{} we mean the \ModelLR{} model, which does not neglect the \PIIntervals{} intervals). 

\subsubsection{Black-Box Optimizer} \label{sec:bb-lr-model}

Instead of using Mathematical Programming and building a complicated model, we can find the solution using a conceptually different black-box optimization framework.
The objective function is not given in a closed form, but only its outputs can be observed provided the inputs. For us, given the full tasks allocation (schedule), we can compute the average power consumption based on the \ModelLR{} model (or possibly any other model). The black-box optimization algorithm searches through the space of all allocations for some solutions that are probably better than other feasible solutions and choose the best one among them w.r.t the given fitness function (here the \ModelLR{} model). Note that the search does not enumerate all possible solutions and as such, it may not find the global optimum.

There are many algorithms that can be used for black-box optimization, including, for example, Particle Swarm Optimization, Differential Evolution, or Genetic Algorithms. Some of these algorithms are already implemented in existing libraries, which are often optimized for speed, easily usable, and open-source. 
We use \emph{Genetic Algorithm} (\AlgGA{}) from \emph{Evolutionary} package implemented in Julia \cite{2022:Julia-Evolutionary}. We use standard two-point crossover and BGA mutation; mutation and crossover rates are set to 0.2 and 0.8, respectively. The selection is done according to a uniform ranking scheme (discarding the lowest \SI{10}{\percent} of the population), and the population size is set to $50 \cdot \vert \TaskSet \vert $.

We represent the position of each task $\Task{\TaskIdx} \in \TaskSet$ in the schedule by continuous variable $x_{\TaskIdx} \in [0,1)$. In order to optimize the allocation problem using the continuous variables, we introduce the following transformation: Each variable $x_{\TaskIdx}$ is evenly split to $\ResNum$ intervals, i.e., for two clusters, we obtain interval $[0,0.5)$ representing the allocation to the first cluster and interval $[0.5, 1)$ representing the allocation to the second one. Each such sub-interval is then again evenly split to $\WinNum$ intervals representing the allocation to the individual windows.

Still, it might happen that allocation represented by variables $x_i$ would be infeasible -- either due to the major frame length (when allocated windows are too long) or due to resource capacity constraints (when too many tasks are allocated to the same window and resource). There are several ways to handle this issue. One option is to use such a black-box solver that supports constrained optimization (\AlgGA{} can do that). Another option is to introduce post-processing that would try to reconstruct some feasible solution from the infeasible assignment. Even though it would appear that solely the former option solves the problem, too many infeasible solutions slow down the convergence of the optimization algorithm. Therefore, we use both presented options -- the former for the major frame length constraint and the latter for the resource constraints.

The post-processing (reconstruction) procedure is described by Algorithm~\ref{alg:reconstruction} in \ref{app:reconstruction-algorithm}. Informally, the preferred allocation of the tasks is pre-computed based on the transformation described above. Then, starting from the first window, the allocation of the tasks is iteratively fixed. If the task cannot be added to the current window (i.e., the resource capacity would be exceeded), its preferred allocation is moved to the next window (in a cyclic manner). The iteration over all windows is repeated twice. If there are still some unassigned tasks or the major frame length is exceeded, the solution is discarded; otherwise, feasible allocation of the tasks to windows and clusters is returned.

The black-box optimizer iterates over many possible instantiations of $x_{\TaskIdx}$. Every time some instantiation is tested, the schedule (defined by allocations created by Algorithm~\ref{alg:reconstruction}) is reconstructed and evaluated by the \ModelLR{} power model. After a termination condition is met (e.g., time limit or iteration limit is exceeded), the best-so-far solution is returned. 
We execute the algorithm with a pre-defined time limit; if it terminates sooner, it is restarted from a random instantiation of $x_{\TaskIdx}$ until exhausting the time limit.

\subsection{Greedy Heuristic} \label{sec:method-ref}

As a reference method, we describe a greedy heuristic. Such heuristics are often used in on-line real-time scheduling algorithms due to their low computation demand. Contrary to all the previous methods based on ILP, QP, or black-box optimization, the greedy heuristic does not try to search through the whole optimization space (here, set of all possible allocations). Instead, the search space is intentionally restricted in order to decrease the computation time and improve the scalability. 

The heuristic that we present is based on the works of Zhou et al. \cite{2016:Zhou} and Kuo et al. \cite{2015:Kuo}, but its main idea is rather general and applicable in the wider context. The tasks are sorted by their energy consumption and processed one by one in a non-increasing order (the most energy-consuming task goes first). In each iteration, the currently processed task is assigned to the cheapest computing cluster (w.r.t. energy consumption). The assignment is done only if some feasible schedule exists even for all the remaining (still unprocessed) tasks, i.e., the assignment cannot be fixed if it would cause infeasibility.

For the tasks ordering, we use analogous methodology that is used in \cite{2016:Zhou} (in Algorithm 1) -- we can identify the parameter $\mu_{\TaskIdx}$ used in \cite{2016:Zhou} with task characteristic $\TaskCoefSlope{\TaskIdx}{\ResIdx}$ since both of these parameters represent tasks dynamic power consumption to some extent. Then, the task $\Task{\TaskIdx}$ is assigned to cluster $\Res{\ResIdx} \in \ResSet$ that minimizes $\TaskCoefSlope{\TaskIdx}{\ResIdx} \cdot \TaskProc{\TaskIdx}{\ResIdx}$ (i.e., expected task energy consumption). Before each assignment, feasibility needs to be checked. When considering the windows, it becomes a bit tricky because these windows make the scheduling on the individual clusters and their cores dependent on each other (without the windows, the situation is much simpler since only the utilization bound needs to be checked). 

To check the feasibility, we use a modified ILP model as presented in \Cref{sec:method-ilp}:

\begin{align}
    \text{\ModelILPFeasibility:} \quad & \min 0 \\
    & \text{subject to:} \nonumber \\
    &  \sum_{\Win{\WinIdx} \in \WinSet} \VarAss{\TaskIdx}{\WinIdx}{r(\Task{\TaskIdx})} = 1\quad \forall \Task{\TaskIdx} \in \mathcal{T}_{\text{fixed}} \label{eq:heur-fixed-assignment} \\
    & \text{\eqref{eq:model-assignment}, \eqref{eq:model-capacity}, \eqref{eq:model-len-max-partition}, \eqref{eq:model-MF}}, \nonumber\\
    & \VarAss{\TaskIdx}{\WinIdx}{\ResIdx} \in \{0,1\}, \WinLen{\WinIdx} \in \SetIntNonNeg \nonumber
\end{align}
where $\mathcal{T}_{\text{fixed}}$ represents the set of tasks with already fixed assignment and $r: \TaskSet \rightarrow \{1,\dots, \ResNum\}$ maps the tasks with fixed assignment to the index of their assigned cluster. 
The whole greedy heuristic is summarized in Algorithm~\ref{alg:greedy}.

Note that solving \ModelILPFeasibility{} model is easier compared to \ModelILPSMOrig{} as the solver can terminate after finding any feasible solution, whereas in the latter case, it needs to explore the whole search space somehow (iterating over multiple feasible solutions).

\begin{algorithm}
\SetKwInOut{Input}{input}
\SetKwInOut{Output}{output}
\SetKwProg{Fun}{Function}{ is}{end}
\Input{set of tasks $\TaskSet$, set of clusters $\ResSet$, major frame length $\MF$}
\Output{assignment of the tasks to resources}

\Fun{CheckFeasibility($\TaskSet_{\text{fixed}} \subseteq \TaskSet, r: \TaskSet_{\text{fixed}} \rightarrow \{ 1, \dots, \ResNum \}$)}{
    \If{a feasible solution to \ModelILPFeasibility{} with fixed tasks assignment of $\mathcal T_{\text{fixed}}$ given by $r$ exists}{\Return{true}}
    \lElse{\Return{false}}
}

sort all tasks $\Task{\TaskIdx} \in \TaskSet$ by $\max_{\Res{\ResIdx} \in \ResSet} \{ \TaskCoefSlope{\TaskIdx}{\ResIdx} \cdot \TaskProc{\TaskIdx}{\ResIdx} \}$ in non-increasing order

$\mathcal T_{\text{fixed}} = \{ \}$

\ForEach{task $\Task{\TaskIdx} \in \mathcal T$}{
    
    $\ResSet_{\text{sorted}} \gets {}$ sort $\ResSet$ by  $\max_{\Res{\ResIdx} \in \ResSet} \{ \TaskCoefSlope{\TaskIdx}{\ResIdx} \cdot \TaskProc{\TaskIdx}{\ResIdx} \}$ in non-decreasing order
    
    \ForEach{cluster $\Res{\ResIdx} \in \ResSet_{\text{sorted}}$}{
        assign $\Task{\TaskIdx}$ to $\Res{\ResIdx}$, $r(\Task{\TaskIdx}) = \ResIdx$
    
        \If{ CheckFeasibility($\TaskSet_{\text{fixed}} \cup \{ \Task{\TaskIdx} \}, r$) }{
            $\TaskSet_{\text{fixed}} \gets \TaskSet_{\text{fixed}} \cup \{ \Task{\TaskIdx} \}$
            
            break
        }
    }

}

 \If{$\TaskSet_{\text{fixed}} = \TaskSet$}{
    \Return{assignment of tasks to clusters and windows given by the solution of \ModelILPFeasibility{} with fixed cluster assignments defined by $r$}
 }
 \Else{{error: feasible assignment of tasks to resources does not exist}}

 \caption{Greedy heuristic.}
 \label{alg:greedy} 
\end{algorithm}

\subsection{Optimizer Minimizing/Maximizing the Idle Time} \label{sec:idle-optimizer}

Finally, we present two more optimizers, this time uninformed, i.e., not using any task or platform characteristics.
These methods simply optimize the idle (i.e., non-processing) time within the major frame. We present them as ILP models.

First, the total processing time $t_{\text{processing}}$, can be expressed in terms of variables $\VarAss{\TaskIdx}{\WinIdx}{\ResIdx}$ introduced in \ModelILPSMOrig{} as follows:

\begin{equation} \label{eq:t-proc-heur}
    t_{\text{processing}} = \sum\limits_{\Task{\TaskIdx} \in \TaskSet} \sum\limits_{\Win{\WinIdx} \in \WinSet} \sum\limits_{\Res{\ResIdx} \in \ResSet} \VarAss{\TaskIdx}{\WinIdx}{\ResIdx} \cdot \TaskProc{\TaskIdx}{\ResIdx}.
\end{equation}

Then the total idle time $t_{\text{idle}}$ within the major frame is 

\begin{equation} \label{eq:t-idle-heur}
    t_{\text{idle}} = \left(\MF \cdot \sum\limits_{\Res{\ResIdx} \in \ResSet} \ResCap{\ResIdx}\right) - t_{\text{processing}}.
\end{equation}

Now, the first model, \ModelILPIdleMax{}, maximizes the idle time in the hope that long idle periods allow for the platform to cool down. Also, a schedule with maximal idle time is beneficial from the perspective of the practitioner. Sometimes, the instance changes and some more tasks need to be added for the execution; in such a case, schedules with long idle periods offer the space to do so. The model can be formalized as:

\begin{align}
    \text{\ModelILPIdleMax:} \quad & \max t_{\text{idle}} \\
    & \text{subject to:} \nonumber \\
    & \text{\eqref{eq:model-assignment}, \eqref{eq:model-capacity}, \eqref{eq:model-len-max-partition}, \eqref{eq:model-MF}, \eqref{eq:t-proc-heur}, \eqref{eq:t-idle-heur}.} \nonumber \\
    & \VarAss{\TaskIdx}{\WinIdx}{\ResIdx} \in \{0,1\}, \WinLen{\WinIdx} \in \SetIntNonNeg,  t_{\text{processing}}\in \SetRealNonNeg ,  t_{\text{idle}}\in \SetRealNonNeg \nonumber
\end{align}

Contrary to that, the second model, \ModelILPIdleMin{} minimizes the idle time. The idea is that longer execution time is typically associated with the little cluster (see \Cref{fig:ips-ratio}), which might is more power-efficient. The model is described as

\begin{align}
    \text{\ModelILPIdleMin:} \quad & \min t_{\text{idle}} \\
    & \text{subject to:} \nonumber \\
    & \text{\eqref{eq:model-assignment}, \eqref{eq:model-capacity}, \eqref{eq:model-len-max-partition}, \eqref{eq:model-MF}, \eqref{eq:t-proc-heur}, \eqref{eq:t-idle-heur}.} \nonumber \\
    & \VarAss{\TaskIdx}{\WinIdx}{\ResIdx} \in \{0,1\}, \WinLen{\WinIdx} \in \SetIntNonNeg,  t_{\text{processing}}\in \SetRealNonNeg ,  t_{\text{idle}}\in \SetRealNonNeg \nonumber
\end{align}

% Thermal model / power model / ?
% Benchmarking, evaluation and Thermal model specifics
\section{Hardware Platforms and Benchmarks} \label{sec:hw-and-benchmarks}

As we discussed previously, we opt for an experimental evaluation of the proposed methods instead of a simulation.
To make the comparison of optimization methods and power models more representative, we conduct the benchmarking and evaluation phases on three platforms, which are briefly described in \Cref{sec:physical-hardware}. Further, we describe the benchmarking kernels selected for the experiments in \Cref{sec:benchmarking-kernels}. % Finally, \Cref{sec:thermal-modeling} summarizes the thermal model together with the selected platform and task characteristics.

% Hardware section - describe the platforms (TX2, IMX8)
\subsection{Physical Hardware for Evaluation} \label{sec:physical-hardware}

\begin{figure}
    \centering
    
    \begin{tabular}{ccc}
        \includegraphics[width=0.3\textwidth]{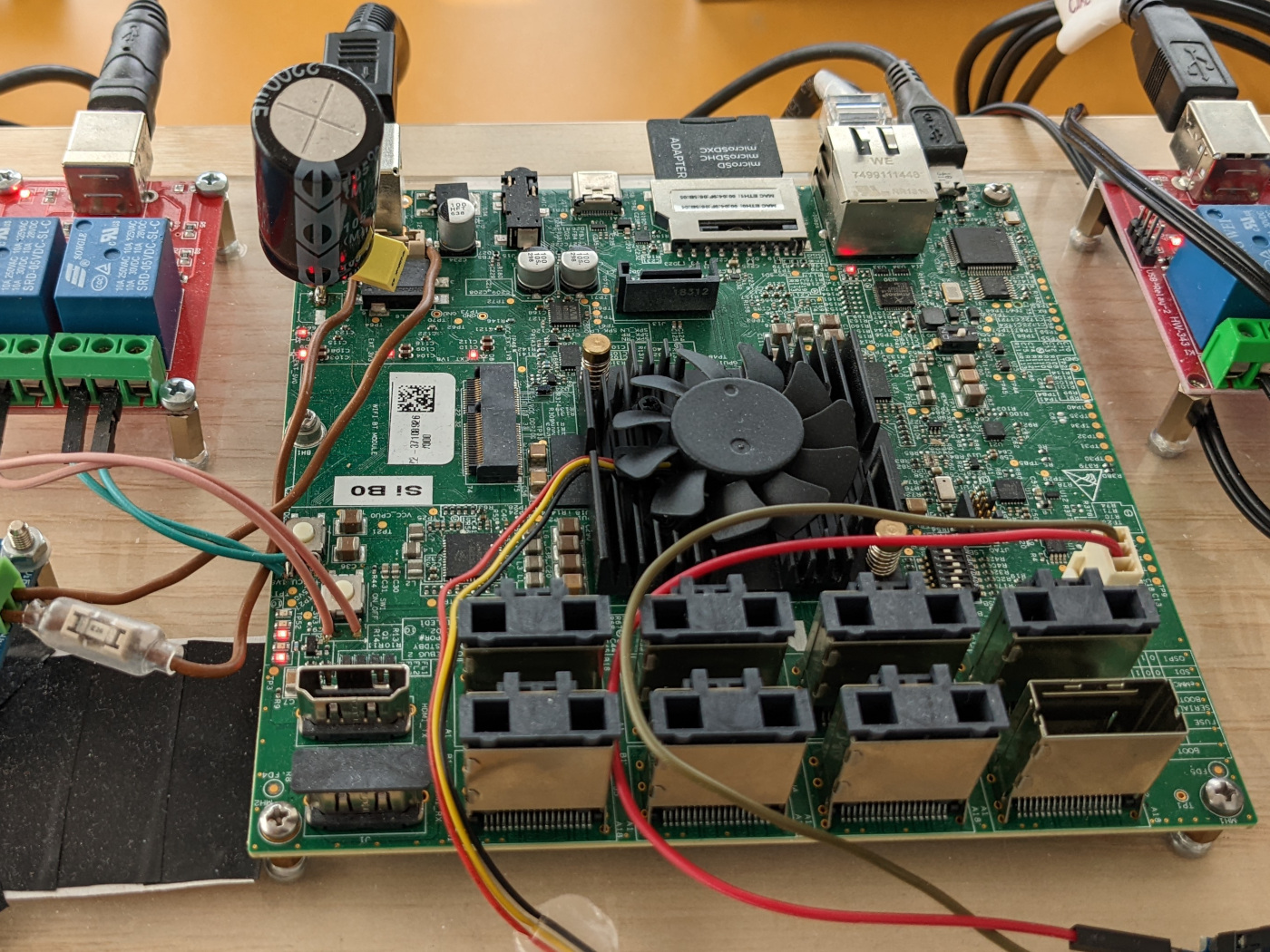} & 
        \includegraphics[width=0.3\textwidth]{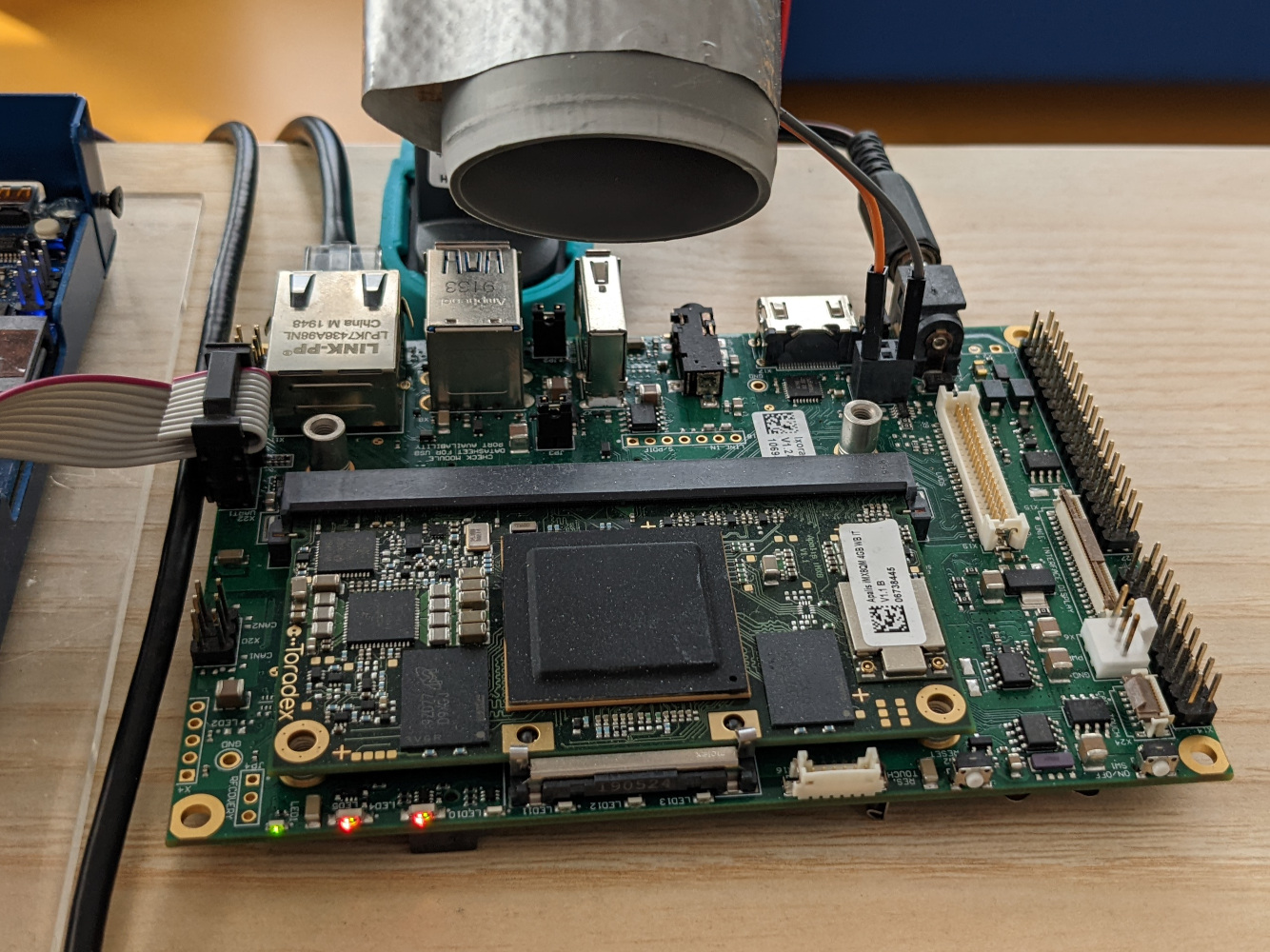} &
        \includegraphics[width=0.3\textwidth]{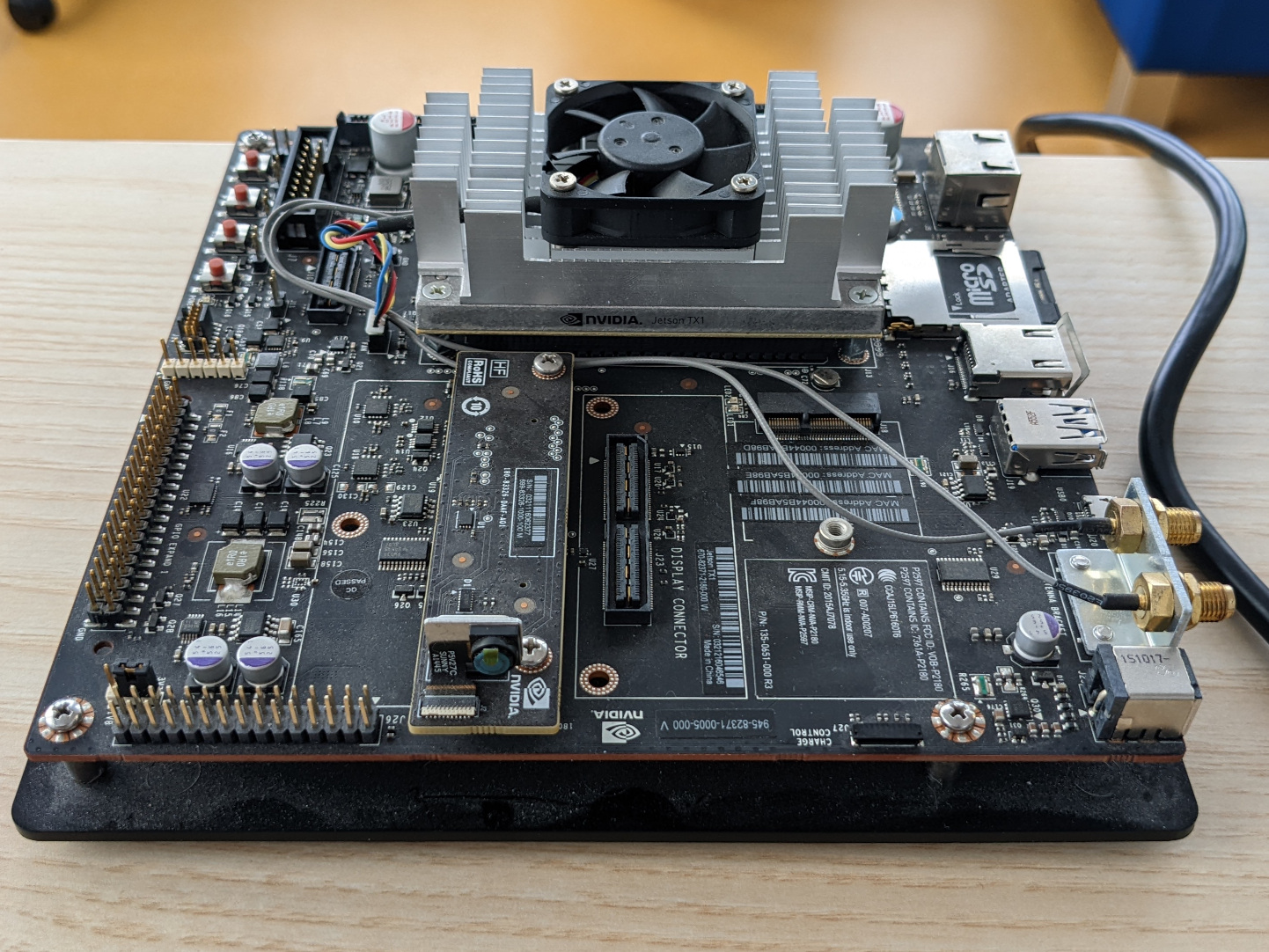} \\
         (a) & (b) & (c) 
    \end{tabular}
    
    \caption{Embedded platforms used for the evaluation: (a) I.MX8QuadMax Multisensory Enablement Kit by NXP (I.MX8~MEK), (b) Toradex Apalis I.MX8 board (I.MX8~Ixora), (c) Nvidia Jetson TX2 Developer Kit (TX2).}
    \label{fig:platforms}
\end{figure}

We selected modern high-performing \AbbrMPSOC{}s for the evaluation, namely I.MX~8QuadMax by NXP~\cite{2021:NXP-IMX8QM} and Nvidia Tegra X2 T186 \cite{2017:NVIDIA-TX2-TRM}. Both of these are based on ARM big.LITTLE heterogeneous architecture hosting two CPU clusters including so-called \emph{high-performing} and \emph{energy-efficient} cores. 

The I.MX~8QuadMax features four ARM Cortex-A53 cores and two ARM Cortex-A72 cores. Each of the cores has \SI{32}{\kilo\byte} data cache, and each cluster has \SI{1}{\mega\byte} L2 cache. We set the clock frequency of each cluster to the highest values, which is \SI{1200}{\mega\hertz} for the A53 cluster, and \SI{1600}{\mega\hertz} for the A72 cluster, respectively.

Similarly, Nvidia Tegra X2 T186 hosts four energy cores and two high-performing cores, which are of ARM Cortex-A57 architecture and Nvidia Denver architecture, respectively. Each A57 core has \SI{32}{\kilo\byte} data cache, and each Denver core has \SI{64}{\kilo\byte} data cache. The size of the L2 cache of each cluster is \SI{2}{\mega\byte}. 
We set the clock frequency of both clusters to \SI{2035}{\mega\hertz}.

In our testbed, we have two boards with I.MX8, namely I.MX8QuadMax Multisensory Enablement Kit (MEK) \cite{imx8-mek}, and Ixora carrier board with Toradex Apalis I.MX8 module \cite{imx8-toradex}. In the further text, these platforms are denoted as I.MX8~MEK and I.MX8~Ixora, respectively. Besides their different form factor and PCB layout, the first one has an Aluminum heat sink mounted on the chip while the latter has none, but we cool it by airflow from an external fan. In this way, the latter chip can be observed by a Workswell infrared camera~\cite{workswellWIC336}. 
We have extended both I.MX8 boards with external power meters.
Besides I.MX8, we have Nvidia \AbbrMPSOC{}, which is mounted on NVIDIA Jetson TX2 Developer Kit carrier board \cite{TX2devkit}. We henceforth denote this platform simply as TX2. 
A part of our testbed is shown in \Cref{fig:platforms}.
The configuration and used sensors are described in more detail in \cite{2020:Sojka}.

% Benchmark tasks
\subsection{Benchmarking Kernels} \label{sec:benchmarking-kernels}

To mimic the safety-critical workloads used in avionics and other similar domains, we use a set of relatively simple applications (kernels) written in C.
The set contains selected kernels based on EEMBC AutoBench~2.0~\cite{2022:EEMBC-Autobench} together with custom memory stressing tool \emph{membench} and OpenGL-like software rendering tool \emph{tinyrenderer} \cite{2020:Sojka}. 
%
% Renderer
We use tinyrenderer in two configurations -- rendering boggie objects \hbox{(\emph{-boggie})} and diablo objects \hbox{(\emph{-diablo})}.

% EEMBC Autobench and selected benchmarks
AutoBench is a general-purpose benchmark set containing generic workload tests, as well as automotive and signal-processing algorithms. We use twelve of its kernels including: \emph{a2time} (angle to time conversion), \emph{aifirf} (finite impulse response filter), \emph{bitmnp} (bit manipulation), \emph{canrdr} (CAN remote data request), \emph{idctrn} (inverse discrete cosine transform), \emph{iirflt} (infinite impulse response filter), \emph{matrix} (matrix arithmetic), \emph{pntrch} (pointer chasing), \emph{puwmod} (pulse width modulation), \emph{rspeed} (road speed calculation), \emph{tblook} (table lookup and interpolation), and \emph{ttsprk} (tooth to spark). Each benchmark is used in two variants, i.e. \emph{-4K} and \emph{-4M}, representing two different input data sizes (\SI{4}{\kilo\byte} and \SI{4}{\mega\byte}). Further information about the benchmarks can be found in \cite{2009:Poovey}.

% Membench
\emph{Membench} is a tool that stresses the memory hierarchy. It can be configured in many ways. We use it in three different configurations with respect to the working set size (\AbbrWSS), i.e. \emph{-1K}, \emph{-1M} and \emph{-4M}, representing \AbbrWSS{} of \SI{1}{\kilo\byte}, \SI{1}{\mega\byte}, and \SI{4}{\mega\byte}, respectively. Further, we test both sequential (\emph{-S}) and random (\emph{-R}) memory accesses in both read-only (\emph{-RO}) and read-and-write (\emph{-RW}) variants. Therefore, we have twelve membench kernels in our benchmark set. 

% performance analysis
Each of the kernels ($12\times2$ autobench, $12$ membench, $2$ tinyrenderer) is wrapped inside of an infinite loop. A single iteration represents one execution of the kernel. We report the iterations per second (\AbbrIPS) of each kernel (executed on a single core, without any interference) for each tested hardware platform in \Cref{tab:kernel-ips} in \ref{app:ips}. 

\Cref{fig:ips-ratio} shows the relative speedup $s$ on a high-performing (big) cluster compared to the energy-efficient (little) cluster, i.e., the ratio between runtimes $e$ on these two clusters normalized by their frequencies $f$. We calculate the relative speedup $s$ as:

\begin{equation}
  s
  =\frac{\frac{e_{\text{little}}}{f_{\text{little}}}}{\frac{e_{\text{big}}}{f_{\text{big}}}}
  =\frac{\text{IPS}_{\text{big}}f_{\text{big}}}{\text{IPS}_{\text{little}}f_{\text{little}}}.
\end{equation}

We observe that the big cluster of I.MX8 (TX2) platform is, on average, about $2.8 \times$ and ($1.3 \times$) more performant than the little one.

\begin{figure}
    \centering

    % \definecolor{Clr1}{HTML}{1f77b4}
% \definecolor{Clr2}{HTML}{ff7f0e}
% \definecolor{Clr3}{HTML}{2ca02c}
% \definecolor{Clr4}{HTML}{d62728}
% \definecolor{Clr5}{HTML}{9467bd}
% \definecolor{Clr6}{HTML}{8c564b}
% \definecolor{Clr7}{HTML}{e377c2}
% \definecolor{Clr8}{HTML}{7f7f7f}
% \definecolor{Clr9}{HTML}{bcbd22}
% \definecolor{Clr10}{HTML}{17becf}

\newcommand{\Data}{data/speedup.csv}

\begin{tikzpicture}
 \tikzset{every mark/.append style={scale=2}}

    \begin{axis}[width=\textwidth,
        height=6cm,
        ymin=0,
        grid=major,
        xmin=-0.1,
        xmax=37.1,
        xtick=data,
        table/col sep=comma,
        xticklabels from table={\Data}{benchmark},
        xticklabel style={rotate=90},
        legend style={
            at={([yshift=5pt]0.5,1)},
            anchor=south,
            font=\scriptsize,
        },
        legend columns=3,
        xlabel=\footnotesize Benchmark,
        ylabel=\footnotesize Relative speedup $s$,
	  ytick={0,1,2,3,4},
	  ticklabel style = {font=\scriptsize}
]

    \addplot[Clr1,thick,mark=x] table [x expr=\coordindex,y={runtime_imx8a}, col sep=comma] {\Data}; \addlegendentry{I.MX8 MEK}

    \addplot[Clr2,thick,mark=square] table [x expr=\coordindex,y={runtime_imx8b}, col sep=comma] {\Data}; \addlegendentry{I.MX8 Ixora}

    \addplot[Clr3,thick,mark=triangle] table [x expr=\coordindex,y={runtime_tx2}, col sep=comma] {\Data}; \addlegendentry{TX2}

    \end{axis}

\end{tikzpicture}
    
    \caption{Relative speedup on a CPU from the high-performing cluster.}
    \label{fig:ips-ratio}
\end{figure}
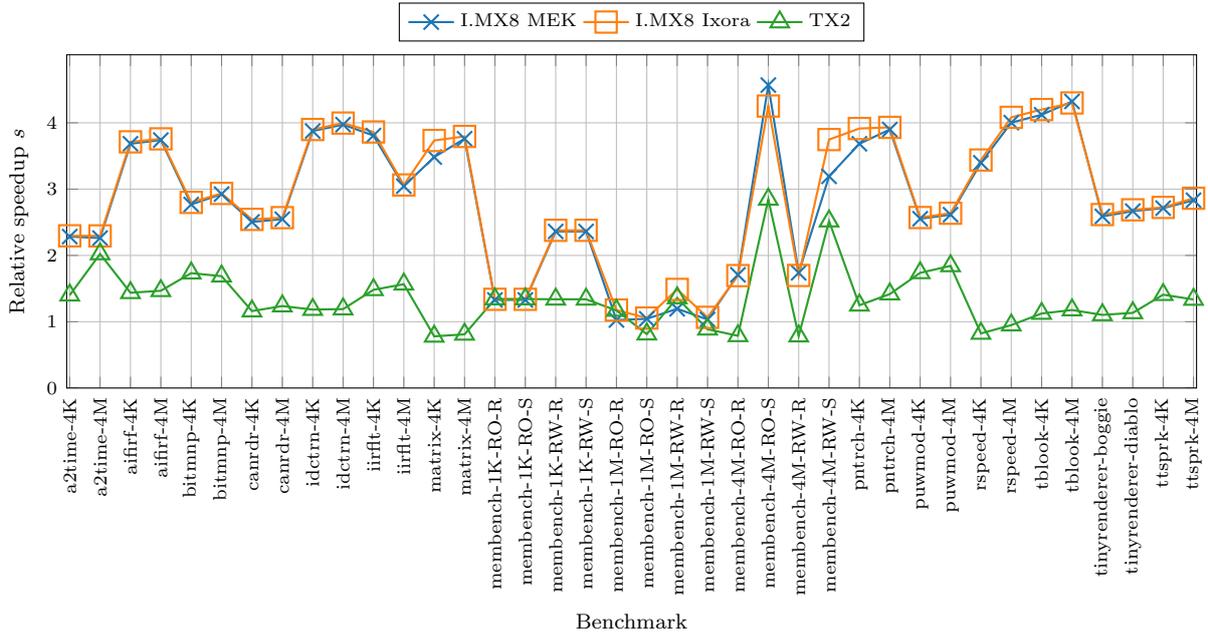

% Some characteristics ?

\section{Experimental Evaluation and Results} \label{sec:experiments}

To evaluate the described power models and the optimization methods, we conduct a series of experiments on three physical platforms introduced in~\cref{sec:physical-hardware}.

First, we experimentally justify our selection of the use thermal model (\cref{sec:thermal-model-analysis}), transition from it to a simpler power model (\cref{sec:relat-betw-therm}) and experimentally identify platform and task characteristics (\cref{sec:platf-task-char}).
Then, we assess the quality of the proposed power models in \cref{sec:power-model-eval}. Finally, we compare the optimization methods with respect to the capability of reducing the peak temperature (\cref{sec:experiments-comparison}) and based on their computational complexity and scalability (\cref{sec:experiments-performance}).

\subsection{Thermal Experiments with Used Platforms} \label{sec:thermal-model-analysis}

To justify the selection a steady-state, single-output thermal model, we perform a set of experiments to analyze the platform's thermal dynamics (\cref{sec:thermal-dynamics}), and assess the relationship between the temperatures of little and big CPU clusters (\cref{sec:single-multiple-output}).

\subsubsection{Thermal Dynamics and the Major Frame Length}
\label{sec:thermal-dynamics}

The steady-state thermal model is sufficient if the temporal parameters of the workload are significantly shorter than time constants of the thermal dynamics~\cite{2011:Chantem}. Since typical lengths of the major fame used in avionics applications are less than one second, we need to determine whether thermal dynamics of our platforms is slower. We perform the following experiment:  a schedule containing two windows of the same length is created. These two windows constitute the major frame. In the first window, all the cores are loaded, executing some workload (here \emph{pntrch-4M}), whereas, in the second window, all cores are idling. We alternately execute these two windows and monitor the temperature and power consumption of the platform. We create three instances, which differ in the major frame length and \Cref{fig:period-vs-temp} shows them with different colors. The first (denoted with suffix \emph{-1s}) has the major frame length equal to \SI{1}{\second} (each window is \SI{500}{\milli\second} long), the second (\emph{-10s}) has the major frame length \SI{10}{\second}, and the third (\emph{-100s}) has the major frame length \SI{100}{\second}. The resulting temperatures measured in the proximity of the big cluster are shown in \Cref{fig:period-vs-temp}.

\begin{figure}
    \centering
      \tikzset{every mark/.append style={scale=0.5}}
  \pgfplotsset{style DataStyle/.style={}}
  
  \pgfplotsset{
    compat=1.15,
    DataStyle/.style={
        mark size=0.5pt,
        mark=*,
        join=round
    }
}

    \begin{tikzpicture}
    \begin{groupplot}[
    group style={
        columns=3,
        rows=2,
        group name=plots,
        x descriptions at=edge bottom,
        y descriptions at=edge left,
        horizontal sep=10pt,
        vertical sep=5pt,
    },
    ylabel={Temperature [\si{\celsius}]},
    ymin=42, ymax=65,
    xmin=0, xmax=300,
    xlabel={Time [\si{\second}]},
    enlarge x limits={abs=5},
    width=0.33\textwidth,
    height=5cm,
    grid=major,
    ticklabel style = {font=\scriptsize},
    legend columns=3,
    legend style={
            at={([yshift=45pt]0.5,1)},
            anchor=south,
            font=\scriptsize,
    },
    title style={align=center}]

    \nextgroupplot[title={I.MX8 MEK\\(heat sink, no air flow)}]
        \addplot[Clr1,DataStyle] table [x={time/s},y={CPU_1_temp/C}, col sep = comma] {data/imx8a/pntrch-4M-50_50_1-transformed.csv};
        \addplot[Clr2,DataStyle] table [x={time/s},y={CPU_1_temp/C},col sep = comma] {data/imx8a/pntrch-4M-50_50_10-transformed.csv};
        \addplot[Clr4,DataStyle] table [x={time/s},y={CPU_1_temp/C},col sep = comma] {data/imx8a/pntrch-4M-50_50_100-transformed.csv};
        
    \nextgroupplot[title={I.MX8 Ixora\\(no heat sink, air flow)}]
        \addplot[Clr1,DataStyle] table [x={time/s},y={CPU_1_temp/C}, col sep = comma] {data/imx8b/pntrch-4M-50_50_1-transformed.csv}; \addlegendentry{pntrch-4M-1s}
        \addplot[Clr2,DataStyle] table [x={time/s},y={CPU_1_temp/C},col sep = comma] {data/imx8b/pntrch-4M-50_50_10-transformed.csv}; \addlegendentry{pntrch-4M-10s}
        \addplot[Clr4,DataStyle] table [x={time/s},y={CPU_1_temp/C},col sep = comma] {data/imx8b/pntrch-4M-50_50_100-transformed.csv}; \addlegendentry{pntrch-4M-100s};
        
    \nextgroupplot[title={TX2\\(heat sink, no air flow)}]
        \addplot[Clr1,DataStyle] table [x={time/s},y={CPU_1_temp/C}, col sep = comma] {data/tx2/pntrch-4M-50_50_1-transformed.csv};
        \addplot[Clr2,DataStyle] table [x={time/s},y={CPU_1_temp/C},col sep = comma] {data/tx2/pntrch-4M-50_50_10-transformed.csv};
        \addplot[Clr4,DataStyle] table [x={time/s},y={CPU_1_temp/C},col sep = comma] {data/tx2/pntrch-4M-50_50_100-transformed.csv};
    
    \nextgroupplot[ylabel={Power [\si{\watt}]}, height=3cm, ymin=0, ymax=11]
        \addplot[Clr1,DataStyle,only marks] table [x ={time/s},y={power/W}, col sep = comma] {data/imx8a/pntrch-4M-50_50_1-transformed.csv};
        \addplot[Clr2,DataStyle,only marks] table [x ={time/s},y={power/W}, col sep = comma] {data/imx8a/pntrch-4M-50_50_10-transformed.csv};
        \addplot[Clr4,DataStyle,only marks] table [x ={time/s},y={power/W}, col sep = comma] {data/imx8a/pntrch-4M-50_50_100-transformed.csv};

    \nextgroupplot[height=3cm, ymin=0, ymax=11]
        \addplot[Clr1,DataStyle,only marks] table [x ={time/s},y={power/W}, col sep = comma] {data/imx8b/pntrch-4M-50_50_1-transformed.csv};
        \addplot[Clr2,DataStyle,only marks] table [x ={time/s},y={power/W}, col sep = comma] {data/imx8b/pntrch-4M-50_50_10-transformed.csv};
        \addplot[Clr4,DataStyle,only marks] table [x ={time/s},y={power/W}, col sep = comma] {data/imx8b/pntrch-4M-50_50_100-transformed.csv};

    \nextgroupplot[height=3cm, ymin=0, ymax=11]
        \addplot[Clr1,DataStyle,only marks] table [x ={time/s},y={power/W}, col sep = comma] {data/tx2/pntrch-4M-50_50_1-transformed.csv};
        \addplot[Clr2,DataStyle,only marks] table [x ={time/s},y={power/W}, col sep = comma] {data/tx2/pntrch-4M-50_50_10-transformed.csv};
        \addplot[Clr4,DataStyle,only marks] table [x ={time/s},y={power/W}, col sep = comma] {data/tx2/pntrch-4M-50_50_100-transformed.csv};
    
     \end{groupplot}
 
  \end{tikzpicture}
    
    \caption{Influence of the on-chip temperature near the high-performing cluster on the major frame length for three instances alternating between computing and idling.}
    \label{fig:period-vs-temp}
\end{figure}
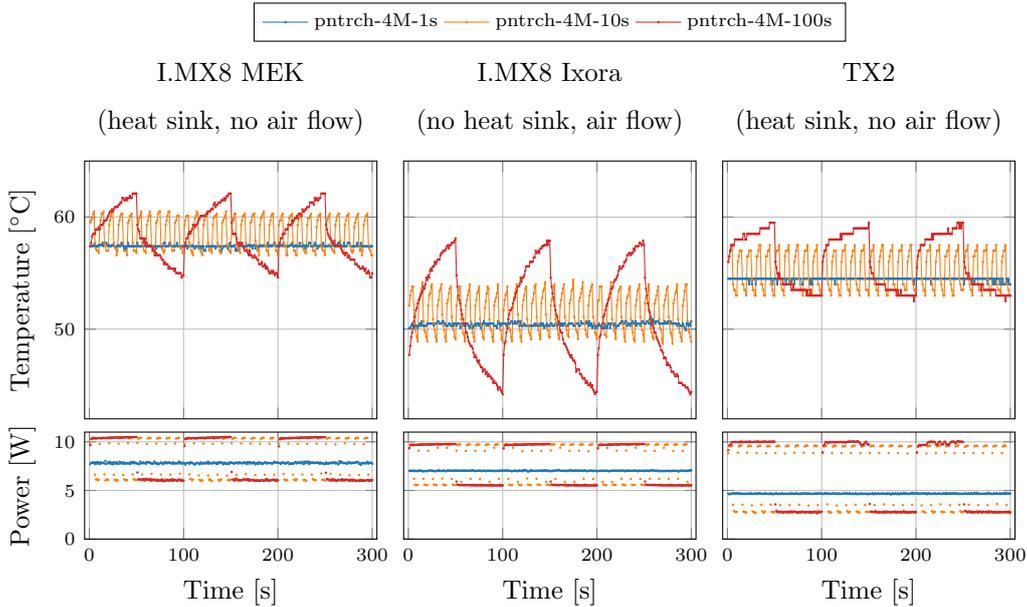

Indeed, when the major frame length of the instance is long enough, such as in the \emph{-100s} case, we clearly observe the heating and cooling curves corresponding to the individual windows.
However, for our use-case (\emph{-1s}), we see that the temperature is almost constant.

Note that the power consumption of both platforms based on I.MX8 is nearly the same; however, their thermal trajectories differ significantly due to different physical parameters (heat sink versus no heat sink, with/without airflow).

\subsubsection{Spatial Distribution of On-Chip Temperatures}
\label{sec:single-multiple-output}

When executing a workload, different parts of the chip start to produce heat. Ideally, we would like to monitor the temperature of each core. However, per-core temperature monitoring might not be possible for many platforms, including ours. Our three platforms provide us with just several temperature sensors associated with the major thermal zones (little cluster, big cluster, PMIC, GPU, etc.). We visualize the temperatures measured for the \emph{pntrch-4M-100s} benchmark (the one used in the previous section) near little and big clusters in \Cref{fig:temp-of-both-clusters}.

\begin{figure}
    \centering
        \tikzset{every mark/.append style={scale=0.5}}
    \begin{tikzpicture}
    \begin{groupplot}[
    group style={
        columns=3,
        rows=1,
        group name=plots,
        x descriptions at=edge bottom,
        y descriptions at=edge left,
        horizontal sep=10pt,
        vertical sep=5pt,
    },
    ticklabel style = {font=\scriptsize},
    ylabel={Temperature [\si{\celsius}]},
    ymin=42, ymax=65,
    xmin=0, xmax=300,
    xlabel={Time [\si{\second}]},
    enlarge x limits={abs=5},
    width=0.33\textwidth,
    height=5cm,
    grid=major,
    legend columns=3,
    legend style={
            at={([yshift=25pt]0.5,1)},
            anchor=south,
            font=\scriptsize,
    }]

    \nextgroupplot[title={I.MX8 MEK}]
        \addplot[Clr1,thick,mark=x] table [x={time/s},y={CPU_0_temp/C},col sep = comma] {data/imx8a/pntrch-4M-50_50_100-transformed.csv};
        \addplot[Clr4,thick,mark=x] table [x={time/s},y={CPU_1_temp/C},col sep = comma] {data/imx8a/pntrch-4M-50_50_100-transformed.csv};

    \nextgroupplot[title={I.MX8 Ixora}]
        \addplot[Clr1,thick,mark=x] table [x={time/s},y={CPU_0_temp/C},col sep = comma] {data/imx8b/pntrch-4M-50_50_100-transformed.csv}; \addlegendentry{Little cluster}
        \addplot[Clr4,thick,mark=x] table [x={time/s},y={CPU_1_temp/C},col sep = comma] {data/imx8b/pntrch-4M-50_50_100-transformed.csv}; \addlegendentry{Big cluster}
        
    \nextgroupplot[title=TX2]
        \addplot[Clr1,thick,mark=x] table [x={time/s},y={CPU_0_temp/C},col sep = comma] {data/tx2/pntrch-4M-50_50_100-transformed.csv};
        \addplot[Clr4,thick,mark=x] table [x={time/s},y={CPU_1_temp/C},col sep = comma] {data/tx2/pntrch-4M-50_50_100-transformed.csv};
     \end{groupplot}
 
  \end{tikzpicture}
    
    \caption{Temperatures obtained for \emph{pntrch-4M-100s} from on-chip sensors near little and big cluster thermal zones.}
    \label{fig:temp-of-both-clusters}
\end{figure}
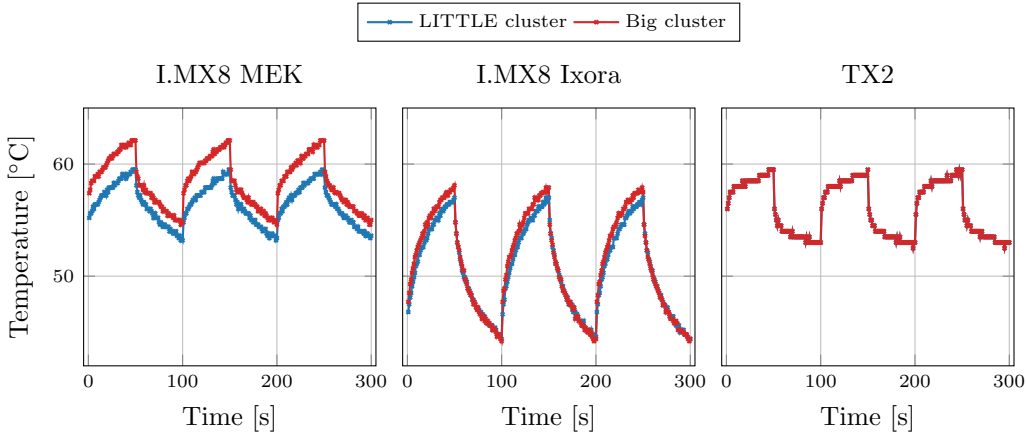

We observe that the temperature difference on I.MX8~Ixora is smaller compared to I.MX8~MEK, because of the absence of a heat sink on the Ixora board and active cooling that is employed.
Considering the TX2 platform, we observe that both thermal zones report the same value.
This might be caused by a massive heatsink, combined with the imprecision of the sensors and their possible spatial proximity.

To further investigate the thermal behavior near the CPU clusters, we look at I.MX8~Ixora using the Thermal camera. We execute \emph{pntrch-4M} on all cores of each cluster and compare the resulting images. \Cref{fig:thermocam} shows the spatial on-chip temperature $T(x,y)$, where the $x$ and $y$ coordinates are in pixels (each pixel corresponds to \SI{0.29}{\milli\meter}). Also, we show the heat sources on a chip $h(x,y)$, where $h(x,y) = \max\{0, -\kappa \nabla^2 T(x,y) \}$ is a positive part of negative Laplacian of $T(x,y)$ scaled by factor $\kappa > 0$, which follows from heat diffusion equation as explained in \cite{2020:zhang,2020:Sojka}.

\begin{figure}
    \centering
    \includegraphics[width=\textwidth]{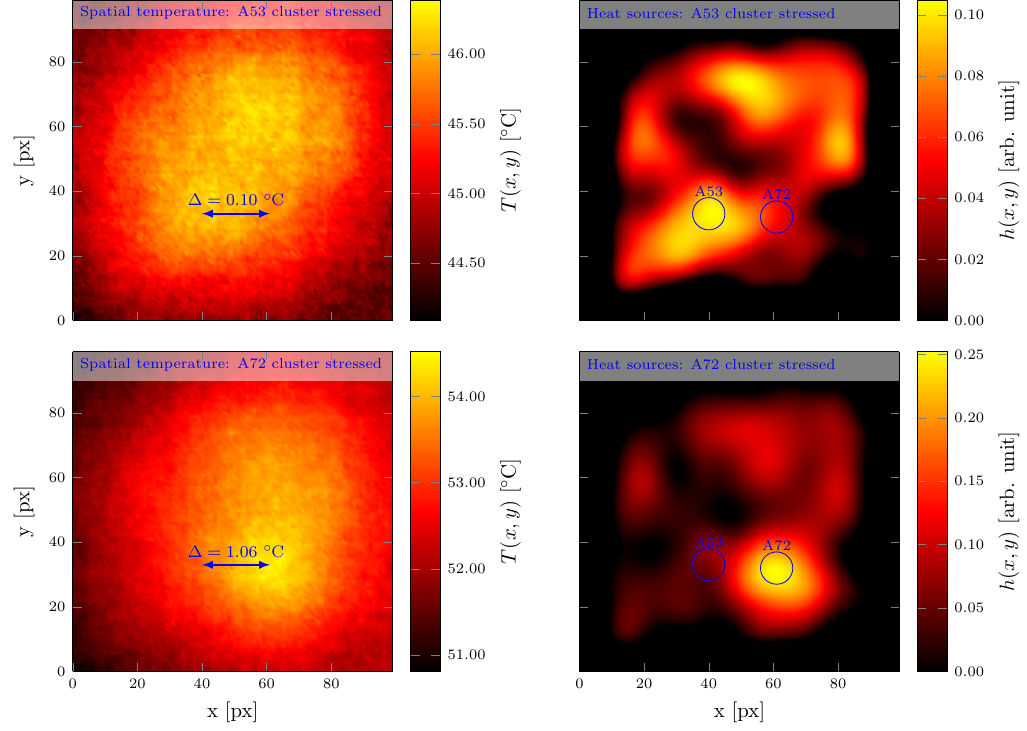}
    \caption{Spatial on-chip temperature $T(x,y)$ on the left and hot spots $h(x,y)$ on the right of I.MX8~Ixora with little (A53) cluster stressed at the top, and big (A72) cluster stressed at the bottom.}
    \label{fig:thermocam}
\end{figure}

\Cref{fig:thermocam} shows that the big cluster is heating the platform much more (the peak of $h(x,y)$ is about $2.5 \times$ higher) compared to the little one. Also, the left part of the figure shows how the on-chip heat spreader distributes the heat from the heat source to the borders of the chip. When only the little cluster is executing the workload, the difference between the individual cluster zones' temperatures is nearly negligible. When the big cluster is performing the computations, the difference is more apparent, but still only about \SI{1}{\celsius} for this particular workload.

To summarize the observations: although we see some differences between the temperatures  measured in the vicinity of the individual clusters, both of their thermal trajectories are similar, as seen in \Cref{fig:temp-of-both-clusters}. Due to the heat spreader and relative proximity of both clusters, the change of the temperature near one of the clusters influences the temperature near the other one as shown in \Cref{fig:thermocam}. Taking that into account, we decided to model only the temperature near the big cluster, which is thermally dominant.

\subsection{Relation Between Steady-State Temperature and Average Power Consumption}
\label{sec:relat-betw-therm}

In \cref{sec:thermal-to-power}, we decided to use average power consumption instead of steady-state temperature. To justify this decision, we performed experiments comparing these quantities for various benchmarks (both memory and CPU-bound) executed on our platforms. The results in \Cref{fig:power_vs_temp} show that both measured quantities are strongly correlated, approximating the linear relation derived in \cref{eq:tp-linear}.

\begin{figure}
    \centering
    \pgfplotsset{style DataStyle/.style={}}
\pgfplotsset{
    compat=1.15,
    DataStyle/.style={
        mark size=0.5pt,
        mark=*,
        join=round
    }
} 
\pgfplotscreateplotcyclelist{ClrList}{
{Clr1},
{Clr2},
{Clr3},
}
% \begin{tikzpicture}
%     \begin{axis}[
%     xlabel={Steady-State Temperature [\si{\celsius}]},
%     ylabel={Average Power [\si{\watt}]},
%     cycle list name=ClrList,
%     %xmin = 40,
%     %xmax = 68,
%     ymin=2,
%     ymax=12,
%     width=\textwidth,
%     height=5cm,
%     grid=major,
%     legend pos=south east,
%     legend cell align={left},
%     legend style={font=\scriptsize, },
%     ticklabel style = {font=\scriptsize},
%     legend image post style={scale=3},
%     ]

%     \addplot+[DataStyle,only marks] table [x ={T},y={P}, col sep = comma] {data/temp-and-power/imxa.csv};
%     \addlegendentry{I.MX8~MEK}
    
%     \addplot+[DataStyle,only marks] table [x ={T},y={P}, col sep = comma] {data/temp-and-power/imxb.csv};
%     \addlegendentry{I.MX8~Ixora}

%     \addplot+[DataStyle,only marks] table [x ={T},y={P}, col sep = comma] {data/temp-and-power/tx.csv};
%     \addlegendentry{TX2}

%     \end{axis}
% \end{tikzpicture}

\begin{tikzpicture}
    \begin{axis}[
    ylabel={Steady-State Temperature [\si{\celsius}]},
    xlabel={Average Power [\si{\watt}]},
    cycle list name=ClrList,
    xmin=2,
    xmax=12,
    width=0.6\textwidth,
    height=6cm,
    grid=major,
    legend pos=south east,
    legend cell align={left},
    legend style={font=\scriptsize, },
    ticklabel style = {font=\scriptsize},
    legend image post style={scale=3},
    ]

    \addplot+[DataStyle,only marks] table [y ={T},x={P}, col sep = comma] {data/temp-and-power/imxa.csv};
    \addlegendentry{I.MX8~MEK}
    
    \addplot+[DataStyle,only marks] table [y ={T},x={P}, col sep = comma] {data/temp-and-power/imxb.csv};
    \addlegendentry{I.MX8~Ixora}

    \addplot+[DataStyle,only marks] table [y ={T},x={P}, col sep = comma] {data/temp-and-power/tx.csv};
    \addlegendentry{TX2}

    \end{axis}
\end{tikzpicture}
    \caption{Average power and steady-state temperature (measured, at the thermal zone near the big cluster) of various benchmarks executed on tested platforms.}
    \label{fig:power_vs_temp}
\end{figure}

\subsection{Platform and Task Characteristics}
\label{sec:platf-task-char}

To evaluate the effectiveness of the proposed methods, we need to determine the platform and task characteristics used by the methods. The simplest characteristic to obtain is the platform idle power consumption $\PIdle$, which was measured with the connected power meters. The results are listed in \Cref{tab:p-idle}.

\begin{table}[ht]
    \centering
    \caption{Idle power consumption $\PIdle$ of tested platforms.}
    \begin{scriptsize}    
    \begin{tabular}{cc}
        \toprule
        Platform & $\PIdle$ [\si{\watt}]  \\
        \midrule
        I.MX8~MEK & 5.5 \\
        I.MX8~Ixora & 5.5 \\
        TX2 & 2.6 \\
        \bottomrule
    \end{tabular}
    \end{scriptsize}
    
    \label{tab:p-idle}
\end{table}

The task characteristics coefficients $\TaskCoefOffset{\TaskIdx}{\ResIdx}$ and $\TaskCoefSlope{\TaskIdx}{\ResIdx}$ for the \ModelSumMax{} model were obtained following our methodology introduced in~\cite{2021:Benedikt}. With it, we identify the coefficients for all benchmarks on each tested platform. Their values are visualized in \Cref{fig:task-coefficients} (and further listed in \ref{app:kernel-chars-values}). Note that the sum of $\TaskCoefOffset{\TaskIdx}{\ResIdx}$ and $\TaskCoefSlope{\TaskIdx}{\ResIdx}$ (i.e., the height of the bar in \Cref{fig:task-coefficients}) represents the increase the power consumption of the platform w.r.t. $\PIdle$ when executing the benchmark on a single core of cluster $\Res{\ResIdx}$.

\begin{figure}
    \centering
    \newcommand{\DataLittle}{data/parameters_little_nneg.csv}
\newcommand{\DataBig}{data/parameters_big_nneg.csv}

\begin{tikzpicture}
 \begin{groupplot}[
    group style={
        columns=1,
        rows=3,
        x descriptions at=edge bottom,
        y descriptions at=edge left,
        horizontal sep=10pt,
        vertical sep=5pt,
    },
  ybar stacked,
  stack negative=separate,
  width=\textwidth,
  height=6cm,
  legend style={
            at={([yshift=5pt]0.5,1)},
            anchor=south,
            font=\scriptsize,
        },
  legend columns= 2,
  ylabel={\footnotesize Coefficient value [\si{\watt}]},
  xtick=data,  
  xmin=-3.5,
  xmax=37.5,
  ymin=-4.2,
  ymax=4.2,
  table/col sep=comma,
  xticklabel style={rotate=90,font=\scriptsize},
  xticklabels from table={\DataLittle}{benchmark},
  height=6cm,
  grid=major,
  ytick={-4,-2,0,2,4},
  yticklabels={4,2,0,2,4},
ticklabel style = {font=\scriptsize}
]

\pgfplotsinvokeforeach{imxa,imxb,tx}{
    \ifthenelse{\equal{#1}{imxa}}
    {\nextgroupplot[legend entries={$\TaskCoefOffset{\TaskIdx}{\ResIdx}$,$\TaskCoefSlope{\TaskIdx}{\ResIdx}$}]}
    {\nextgroupplot[]}

    \ifthenelse{\equal{#1}{imxa}}{\node[anchor=north west,fill=white, fill opacity=0.7, text opacity=1] at (axis cs: 0,3.8) {\footnotesize I.MX8~MEK};}{}
    \ifthenelse{\equal{#1}{imxb}}{\node[anchor=north west,fill=white, fill opacity=0.7, text opacity=1] at (axis cs: 0,3.8) {\footnotesize I.MX8~Ixora};}{}
    \ifthenelse{\equal{#1}{tx}}{\node[anchor=north west,fill=white, fill opacity=0.7, text opacity=1] at (axis cs: 0,3.8) {\footnotesize TX2};}{}

    % LITTLE CLUSTER
     \addplot[Clr1,fill=Clr1, bar width=5pt] table [x expr=\coordindex,y={intercept#1}, col sep=comma] {\DataLittle};  
    
     \addplot[Clr2,fill=Clr2, bar width=5pt] table [x expr=\coordindex,y={slope#1}, col sep=comma] {\DataLittle};  
    
    % BIG CLUSTER
    \addplot[Clr1,fill=Clr1, bar width=5pt, opacity=0.7] table [x expr=\coordindex,y expr=-\thisrow{intercept#1}, col sep=comma] {\DataBig};  
    
     \addplot[Clr2,fill=Clr2, bar width=5pt, opacity=0.7] table [x expr=\coordindex,y expr=-\thisrow{slope#1}, col sep=comma] {\DataBig};  
    
    \draw[thick] (axis cs: -4,0) -- (axis cs: 40, 0);
    \draw[-latex, thick] (axis cs: -1.5,0) -- node[sloped,anchor=south,fill=white,fill opacity=0.7, text opacity=1] {\scriptsize little cluster} (axis cs: -1.5, 4.2);
    \draw[latex-, thick] (axis cs: -1.5,-4.2) -- node[sloped,anchor=south,fill=white,fill opacity=0.7, text opacity=1] {\scriptsize big cluster} (axis cs: -1.5, 0);
}

\end{groupplot}
\end{tikzpicture}
    
    \caption{Values of task characteristics coefficients $\TaskCoefOffset{\TaskIdx}{\ResIdx}$ and $\TaskCoefSlope{\TaskIdx}{\ResIdx}$ of tested benchmarks on little (top semi-axis) and big (bottom semi-axis) clusters.}
    \label{fig:task-coefficients}
\end{figure}

\subsubsection{Identification of Regression Coefficients}
\label{sec:ident-regr-coeff}

 To identify the regression coefficients for the \ModelLR{} model, we created 1000 unique instances (each representing one interval $I$), which were randomly populated with the benchmarking kernels described in \Cref{sec:benchmarking-kernels}. Specifically, each interval was \SI{1}{\second} long and contained from zero (all cores idling) up to 6 (all cores processing) kernels randomly picked from the set.  All these instances were executed on all tested platforms (each for \SI{180}{\second}), and the average power consumption was measured. The identified coefficients (i.e., elements of vectors $ \bm{\beta}_{\ResIdx}$ for each cluster $\Res{\ResIdx} \in \ResSet$) were obtained from the measured data by linear regression and are reported in \Cref{tab:regression-coefficients}.

\begin{table}[ht]
    \centering
    \caption{Regression coefficients identified for all tested platforms and coefficient of determination $R^2$.}
    \label{tab:regression-coefficients}
    \begin{scriptsize}
    \begin{tabular}{rccccc}
        \toprule
        & \multicolumn{2}{c}{little cluster ($\bm{\beta_1}$)} & \multicolumn{2}{c}{big cluster ($\bm{\beta_2}$)} \\
         \cmidrule(r){2-3} \cmidrule(l){4-5}
        platform & $\beta_{1,1}$ & $\beta_{2,1}$ & $\beta_{1,2}$ & $\beta_{2,2}$ & $R^2$ \\
        \midrule
       I.MX8 MEK    & 1.205 & 0.270  & 0.969 & 0.456 & 0.822 \\
       I.MX8 Ixora  & 1.227 & 0.232  & 0.981 & 0.420 & 0.814 \\
       TX2          & 0.857 & 0.648  & 0.946 & 0.801 & 0.974 \\
       \midrule
       \multicolumn{1}{r}{\scriptsize corresponding independent var. $\rightarrow$} & $\TaskCoefSlope{\TaskIdx}{1}$ & $\TaskCoefOffset{\TaskIdx}{1}$ & $\TaskCoefSlope{\TaskIdx}{2}$ & $\TaskCoefOffset{\TaskIdx}{2}$ \\
       \bottomrule
    \end{tabular}
    \end{scriptsize}
\end{table}

\subsection{Power Model Evaluation} \label{sec:power-model-eval}

Now, we can evaluate the accuracy of the proposed \ModelSumMax{}, \ModelLR{} and \texttt{LR-UB} power models.
For each tested platform, we generate one thousand instances of \emph{CPU-bound workload} (all kernels except membench) and one thousand instances of \emph{mixed workload} (all kernels including membench), that is two thousand distinct instances in total.
The workload consists of a single periodically repeated window with a length of \SI{1}{\second}. In that window, each CPU executes either nothing (probability 0.5) or a random kernel. The kernels are executed for duration uniformly selected from interval \SI{1}{\milli\second} to \SI{1000}{\milli\second}.

Each instance was executed for \SI{60}{\second}; this gives more than four days (100 hours)  of measured data in total. The power consumption was sampled every \SI{10}{\milli\second}, and the average value was reported. Further, we calculated the predicted power by \ModelSumMax{}, \ModelLR{}, and \texttt{LR-UB} power models.

The results for mixed workload instances are shown in \cref{fig:power_evaluation}. The instances are sorted by the measured power consumption on I.MX8 MEK.
\cref{tab:power_mae} shows the mean absolute error of all power models on both types of workload. We observe that the lowest prediction error is achieved by the linear regression (\texttt{LR}) model. In relative terms (w.r.t. the idle power consumption), its error is \SI{4.3}{\percent}, \SI{4.5}{\percent} and \SI{13.3}{\percent} for I.MX8 MEX, I.MX8 Ixora and TX2, respectively. The \ModelSumMax{} model performed slightly worse, with average relative error of \SI{11.2}{\percent}, \SI{5.6}{\percent}, and \SI{16.0}{\percent} for I.MX8 MEX, I.MX8 Ixora and TX2. Finally, the \texttt{LR-UB} model failed to deliver satisfactory predictions; its relative error is \SI{24.3}{\percent}, \SI{19.3}{\percent}, and \SI{74.4}{\percent} for I.MX8 MEX, I.MX8 Ixora and TX2.

The trends are also clearly visible in \cref{fig:power_evaluation};  \ModelSumMax{} model is more pessimistic than the \ModelLR{} model, which is expected due to the max term in \cref{eq:max-sum-model}. However, it steadily provides an upper bound on the measured power consumption. Even though the \texttt{LR-UB} mostly provides an upper bound as well, it is not as tight.

\begin{figure}
    \centering
    \includegraphics[width=\textwidth]{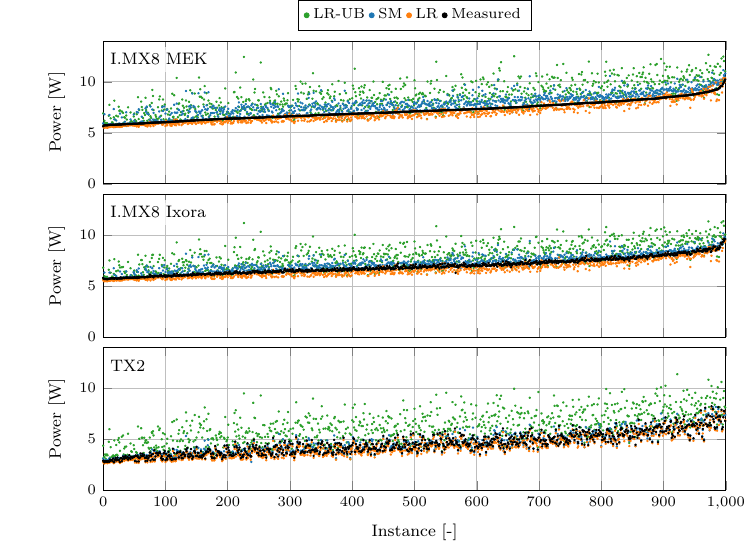}
    \caption{Measured and predicted power consumption of 1000 testing instances (mixed workload windows); instances are sorted by I.MX8~MEK measured power consumption.}
    \label{fig:power_evaluation}
\end{figure}

\begin{table}
    \centering
    \caption{Mean absolute error (in Watts) of the tested power models.} \label{tab:power_mae}
    \begin{scriptsize}
    \begin{tabular}{rrrrrrr}
    \toprule
         %& \multicolumn{6}{c}{MAE [\si{\watt}]} \\ \cmidrule{2-7}
         & \multicolumn{2}{c}{SM} & \multicolumn{2}{c}{LR} & \multicolumn{2}{c}{LR-UB} \\
         \cmidrule(r){2-3} \cmidrule(lr){4-5} \cmidrule(l){6-7}
        Platform & mixed & CPU & mixed & CPU & mixed & CPU \\
        \midrule
        I.MX8~MEK & 0.67 & 0.56 & \textbf{0.26} & \textbf{0.21} & 1.30 & 1.38 \\ 
        I.MX8~Ixora & 0.35 & 0.27 & \textbf{0.28} & \textbf{0.21} & 1.00 & 1.12 \\ 
        TX2 & \textbf{0.17} & 0.66 & 0.24 & \textbf{0.45} & 1.54 & 2.33 \\ 
        \bottomrule
    \end{tabular}
    \end{scriptsize}
\end{table}

\subsection{Optimization Methods Comparison} \label{sec:experiments-comparison}

Here, we discuss how well the power models integrate with the optimization. We compare the optimization methods on two types of workloads as in the previous section: CPU-bound and mixed. For each workload type, we construct six different instances. % We follow the same instance generation scheme as in \cite{2021:Benedikt}.
We generate 20 tasks; each of them executes a randomly selected kernel. Each task is assigned a randomly generated execution time on the big cluster in the range \SI{40}{\milli\second} to \SI{160}{\milli\second}. Execution time for the little cluster is scaled appropriately to perform the same work. The scaling coefficient is calculated from \cref{tab:kernel-ips}. The major frame length $\MF$ is calculated as $\MF = \frac{\TaskNum \cdot \bar{e}}{\kappa}$, where $\bar{e}$ is the average execution time across all clusters, $\TaskNum$ is the number of tasks (here 20), and $\kappa$ is empirical constant changing the tightness of the schedules (here set to $3.5$).

\begin{table}
    \centering
    \caption{List of compared optimization methods and corresponding power models.}
    \begin{tabular}{|c|c|c|}
    \hline
    Acronym & Power model & Optimization method \\ \hline
    \ModelILPSMOrig{} & \ModelSumMax{} &  ILP \\
    \ModelQPLR{} & \texttt{LR-UB} & QP \\
    \texttt{BB-LR} & \ModelLR{} & BB (generic alg.) \\
    \ModelBlackBoxSM{} & \ModelSumMax{} & BB (generic alg.) \\
    \ModelHeur{} & expected energy & greedy \\
    \ModelILPIdleMin{} & --- & ILP \\
    \ModelILPIdleMax{} & --- & ILP \\
    \hline
\end{tabular}
    \label{tab:opt-methods}
\end{table}

For each instance, all optimization methods, as listed in \cref{tab:opt-methods}, were executed to generate a schedule for each platform.
For better comparison, we execute the black-box optimizer with both \ModelSumMax{} and \ModelLR{} power models.
The solving time limit was set to \SI{300}{\second} per instance.
The schedules found for the first instance are illustrated in \ref{app:example-schedules}.

During the experiment, each schedule was executed on the respective platform for \SI{30}{\minute}; this gives 42 hours of measured data per platform. We measured the average power consumption and steady-state temperature. The power offset ($P_{\text{measured}} - \PIdle$) is reported in \cref{tab:experiment-power-offset} (the rows are sorted by the average power consumption on I.MX8~MEK). The \ModelILPSMOrig{} method achieved the best results in almost all cases. The difference from the lowest result is negligible in the few cases where it was not the best. Slightly worse, but still good results were obtained by the \ModelBlackBoxSM{} method. One practical difference between these two methods is that \ModelILPSMOrig{} requires an ILP solver (here commercial Gurobi solver~\cite{gurobi}) for its operation, while \ModelBlackBoxSM{} can be implemented with freely available tools.

An interesting observation is that the best results are obtained with the \ModelSumMax{} power model. Recall that the most accurate power model was \ModelLR{}, not \ModelSumMax{}. We account that to the fact that even though the \ModelSumMax{} is systematically overestimating the power consumption, it is consistent in a sense that windows with higher predicted power consumption indeed consume more than windows with lower predicted power consumption.

% Table with detailed results per instance
\begin{table}[htb]
    \centering
    \begin{scriptsize}
    \sisetup{round-mode=places, round-precision=2}
    
    \caption{Power offset ($P_{\text{measured}} - \PIdle$ [\si{\watt}]) observed for six instances with mixed workloads and six instances with CPU-bound workloads on tested platforms.} \label{tab:experiment-power-offset}
    
    \begin{tabular}{r|cccccc|cccccc|c}
    \toprule
     & \multicolumn{6}{c}{Mixed workloads instances} & \multicolumn{6}{c}{CPU-bound instances}  \\ \cmidrule(lr){2-7} \cmidrule(lr){8-13}
     Method & 1 & 2 & 3 & 4 & 5 & 6  & 1 & 2 & 3 & 4 & 5 & 6 & average \\
    \midrule
     & \multicolumn{12}{c}{I.MX8~MEK} \\
      \csvreader[head to column names, late after line=\\]{data/power_imx8a_relative.csv}{}{\method & \num{\csvcolii} & \num{\csvcoliii} & \num{\csvcoliv} & \num{\csvcolv} & \num{\csvcolvi} & \num{\csvcolvii} & \num{\csvcolviii} & \num{\csvcolix} & \num{\csvcolx} & \num{\csvcolxi} & \num{\csvcolxii} & \num{\csvcolxiii} & \num{\csvcolxiv}}
      \midrule
       & \multicolumn{12}{c}{I.MX8~Ixora}  \\
      \csvreader[head to column names,late after line=\\]{data/power_imx8b_relative.csv}{}{\method & \num{\csvcolii} & \num{\csvcoliii} & \num{\csvcoliv} & \num{\csvcolv} & \num{\csvcolvi} & \num{\csvcolvii} & \num{\csvcolviii} & \num{\csvcolix} & \num{\csvcolx} & \num{\csvcolxi} & \num{\csvcolxii} & \num{\csvcolxiii} & \num{\csvcolxiv}}
      \midrule
      & \multicolumn{12}{c}{TX2}  \\
      \csvreader[head to column names,late after line=\\]{data/power_tx2_relative.csv}{}{\method & \num{\csvcolii} & \num{\csvcoliii} & \num{\csvcoliv} & \num{\csvcolv} & \num{\csvcolvi} & \num{\csvcolvii} & \num{\csvcolviii} & \num{\csvcolix} & \num{\csvcolx} & \num{\csvcolxi} & \num{\csvcolxii} & \num{\csvcolxiii} & \num{\csvcolxiv}}
    \bottomrule
    \end{tabular}
    \end{scriptsize}
\end{table}

\Cref{fig:methods-temperatures} shows the temperatures near the individual clusters averaged over all six instances. 
We observe that the difference between the worst- and the best-performing methods are \SI{5.5}{\celsius}, \SI{4.9}{\celsius}, and \SI{3.6}{\celsius}, corresponding to \SI{22}{\percent}, \SI{19.6}{\percent}, and \SI{14.4}{\percent} differences relative to the ambient temperature (\SI{25}{\celsius}) for I.MX8~MEK, I.MX8~Ixora, and TX2. 

When comparing the greedy local heuristic \texttt{HEUR} with the \ModelILPSMOrig{} method, the exhaustive \ModelILPSMOrig{} can save, in average, up to \SI{1.6}{\celsius}, \SI{1.3}{\celsius}, and \SI{0.6}{\celsius} (corresponding to \SI{4.7}{\percent}, \SI{4.6}{\percent}, and \SI{1.8}{\percent}) for I.MX8~MEK, I.MX8~Ixora, and TX2, respectively.

\begin{figure}
    \centering
    \includegraphics[]{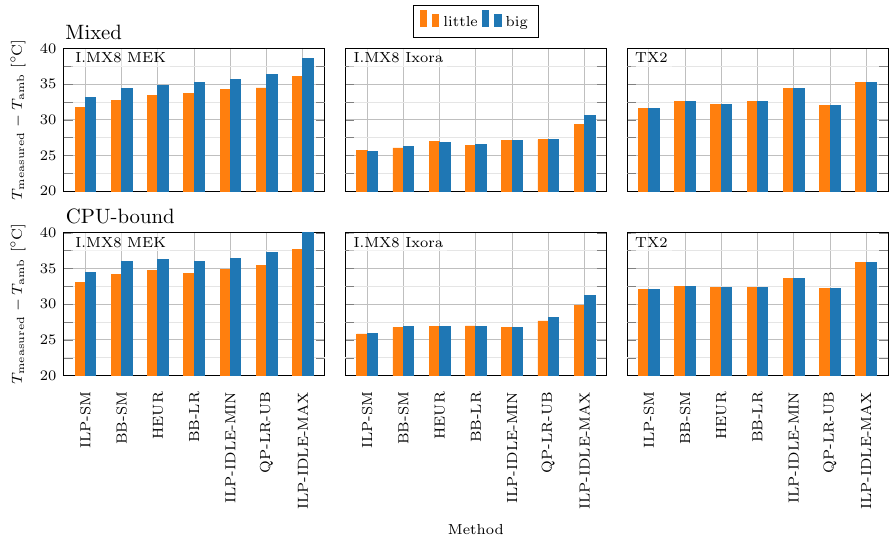}
    \caption{Average difference between the measured steady-state temperature $T_{\text{measured}}$ and the average ambient temperature $T_{\text{amb}}$ for tested optimization methods. Recall that Ixora's lower temperatures are caused by applied air flow.}
    \label{fig:methods-temperatures}
\end{figure}

\subsection{Performance evaluation} \label{sec:experiments-performance}

Finally, we evaluate the scalability of tested optimization methods. 
We study how the computation time increases with the increasing instance size corresponding to the number of tasks $\TaskNum$.

Ten instances are randomly generated for each $\TaskNum \in \{5, 10, \dots, 60\}$ (120 instances in total). Each optimization method is then executed for every instance; as some of the methods might be rather time-demanding for larger instance sizes, we limit the maximum computation time per instance to \SI{300}{\second}. We use the same generator as in \cref{sec:experiments-comparison}, but we perform the experiment only with the characteristics based on I.MX8~MEK. The outcome would be quite similar for the other platforms.

The average computation times for different values of $\TaskNum$ are shown in \cref{fig:scalability}. As the black-box optimizer (\ModelBlackBox{}) is programmed to randomly restart each time it converges to some solution, it always consumes all the provided time. Besides, the models globally optimizing the schedule w.r.t. the provided objective, i.e., \ModelQPLR{} and \ModelILPSMOrig{}, are the first to run out of time. Out of these two, the more complex model based on the quadratic programming (\ModelQPLR{}) is about $6 \times$ slower than \ModelILPSMOrig{} on instances with 15 and 20 tasks.
Comparing the global methods to the local one (\ModelHeur{}) on instances with 20 tasks, we see that the global methods \ModelILPSMOrig{} and \ModelQPLR{} need about $18 \times$ and $95 \times$ more time, respectively.
Performance of the heuristic method (\ModelHeur{}) is comparable with \ModelILPIdleMin{} on instances with $30$ and more tasks; for smaller instances, $\ModelHeur$ is a bit slower, especially due to the overhead caused by performing the feasibility check (solving \ModelILPFeasibility{}) multiple times. Even though the \ModelILPIdleMax{} scales the best, it fails to produce thermally efficient schedules, as shown in \cref{sec:experiments-comparison}.

\begin{figure}
  \centering
  \newcommand{\Data}{data/times.csv}

\begin{tikzpicture}

    \begin{axis}[width=12cm,
        height=5cm,
        ymin=0.0005,
        grid=major,    
        xtick=data,
        xticklabels from table={\Data}{n},
        table/col sep=comma,
		legend pos=south east,
        legend style={
            font=\scriptsize,
        },		
        legend columns=2,
        xlabel=\footnotesize Number of tasks $n$,
        ylabel=\footnotesize {Time [\si{\second}]},
		ticklabel style = {font=\scriptsize},
		ymode=log,
		ytick={1e-3,1e-2,1e-1,1e0,10,100,1000},
		legend cell align={left},
]

%	\foreach \m/\qer in {{HEUR}/1,	
%											{ILP-IDLE-MAX}/2,
%											{ILP-IDLE-MIN}/3,
%											{ILP-SM}/4,
%											{QP-LR-UB}/5} {
%		    \addplot[color=Clr\qer,thick,mark=x] table [x expr=\coordindex,y={\m_t-avg}, col sep=comma] {\Data}; 
%		}

 % Plot data
   \addplot[color=Clr1,thick,mark=x] table [x expr=\coordindex,y={HEUR_t-avg}, col sep=comma] {\Data}; \addlegendentry{HEUR};
  \addplot[color=Clr2,thick,mark=x] table [x expr=\coordindex,y={ILP-IDLE-MAX_t-avg}, col sep=comma] {\Data}; \addlegendentry{ILP-IDLE-MAX}

	\addplot[color=Clr3,thick,mark=x] table [x expr=\coordindex,y={ILP-IDLE-MIN_t-avg}, col sep=comma] {\Data}; \addlegendentry{ILP-IDLE-MIN}

   \addplot[color=Clr4,thick,mark=x] table [x expr=\coordindex,y={ILP-SM_t-avg}, col sep=comma] {\Data};   \addlegendentry{ILP-SM}

   \addplot[color=Clr5,thick,mark=x] table [x expr=\coordindex,y={QP-LR-UB_t-avg}, col sep=comma] {\Data}; 
\addlegendentry{QP-LR-UB}

	\addplot [Clr6, thick, mark=o ]  coordinates {(0,300) (1,300) (2,300) (3,300) (4,300) (5,300) (6,300) (7,300) (8,300) (9,300) (10,300) (11,300)};
\addlegendentry{BB-[SM/LR]}

\draw[dashed,gray,line width=2,on layer=axis tics] (-10,5.703)--(120,5.703);
\addplot [dashed,gray,line width=2]  coordinates {(0,300)}; \addlegendentry{Solver time limit}

% Plot min max
%   \addplot[color=Clr1,thick,mark=triangle, only marks] table [x expr=\coordindex,y={HEUR_t-min}, col sep=comma] {\Data}; 

    \end{axis}

\end{tikzpicture}
    \caption{Average computation time of different methods w.r.t. the instance size $\TaskNum$.}
    \label{fig:scalability}
\end{figure}
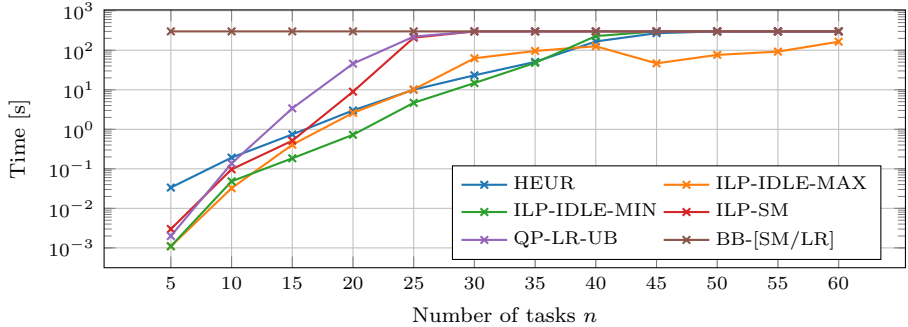

\subsection{Evaluation Summary}

To summarize the results, the linear regression-based power model (\texttt{LR}) exhibited lower errors than the empirical Sum-Max model (\texttt{SM}), but it proved to be harder to integrate with the optimization methods. Its simplified variant \texttt{LR-UB} failed to provide a tight upper bound and, therefore, performed rather poorly.

Considering the optimization methods, global \ModelILPSMOrig{} based on the integer linear programming and simpler \ModelSumMax{} power model provided overall best results. The black-box approach based on metaheuristics proved to be competitive as well; especially, it might be preferred for large-size instances, for which the integer linear programming fails to deliver high-quality solutions in a reasonable time. Also, the \texttt{BB} approach is based on an open-source implementation of a genetic algorithm, which might be an advantage when compared to the other tested methods based on the commercial Gurobi solver.

\section{Conclusion} \label{sec:conclusion}

To conclude, this work studied the problem of offline task allocation on a heterogeneous multi-core platform for safety-critical avionics applications with the aim of minimizing the platform’s steady-state temperature. 
The main limitations are the unavailability of DVFS due to safety requirements and the necessity to schedule the tasks into windows.
Three power models were compared, including the empirical sum-max model, the linear regression model, and its simplified variant providing the upper bound. 
Furthermore, their integration within the optimization procedures was discussed.
Several optimization approaches, including those based on mathematical programming, and both informed and uninformed heuristics, were described and evaluated on three hardware platforms. The data and source code are publicly available\footnote{\url{https://github.com/benedond/safety-critical-scheduling}}.

Extensive experimental evaluation showed that the best-performing method \ModelILPSMOrig{} reduces the platform temperature by up to \SI{16}{\percent}, \SI{14}{\percent} and \SI{10}{\percent} for I.MX8 Ixora, I.MX8 MEK and TX2, respectively, when compared to the worst method \ModelILPIdleMax{}. Moreover,  \ModelILPSMOrig{} saves up to \SI{4.7}{\percent}, \SI{4.6}{\percent}, and \SI{1.8}{\percent} when comparing to the heuristic method \ModelHeur{}, which is a simple model-free approach.
Furthermore, the second best method \ModelBlackBoxSM{}, which does not need an expensive ILP solver, provides better scalability and just \SI{3}{\percent} higher temperatures on average.

Surprisingly, all the best methods rely on the \ModelSumMax{} power model, which has the mean absolute error by \SI{66}{\percent} higher (in average) than the \ModelLR{} model. This shows the importance of designing the optimization method in harmony with the power model.

In our future work, we would like to develop new, scalable algorithms that can handle a larger number of tasks. We were not successful with a branch-and-price method due to its extensive branching. Nevertheless, we aim to develop a logic-based Bender's decomposition~\cite{2019:Hooker} where a master problem would suggest the lengths of intervals (their order is not important, so their ordering may be added as a symmetry-breaking constraint). A sub-problem would be solved as the fixed version of the \textit{ARINC problem} solvable in polynomial time by the Minimum Cost Flow as suggested in~\cref{sssec:complexity}.
This approach needs further investigation, development, and evaluation since the real power models and criterion functions are more complex than the one defined in the \textit{ARINC problem}, but it is definitely worth trying.  

\section*{Acknowledgments}
\label{sec:acknowledgments}

We thank the anonymous reviewers for their inspirational questions and remarks. We thank Claire Hanen from Sorbonne Université, LIP6 laboratory, for her substantial help on the complexity proofs. This work was co-funded by the European Union under the project ROBOPROX (reg. no. CZ.02.01.01/00/22\_008/0004590) and the Grant Agency of the Czech Republic under the Projects GACR 25-17904S and GACR 22-31670S.

\clearpage

\appendix
\section{Performance Characteristics of Benchmarking Kernels} \label{app:ips}

\begin{scriptsize}
\sisetup{round-mode=places, round-precision=2}

\begin{center}
\begin{longtable}{rrrrrrr}
    \caption{Iterations Per Second (IPS) of used kernels.} \label{tab:kernel-ips} \\
    %\begin{scriptsize}
    %\begin{tabular}{rrrrrrr}
        \toprule
            & \multicolumn{2}{c}{I.MX8 MEK} & \multicolumn{2}{c}{I.MX8 Ixora} & \multicolumn{2}{c}{TX2 Developer Kit} \\
            \cmidrule(r){2-3} \cmidrule(lr){4-5} \cmidrule(l){6-7}
         & \multicolumn{1}{c}{little (A53)} & \multicolumn{1}{c}{big (A72)} & \multicolumn{1}{c}{little (A53)} & \multicolumn{1}{c}{big (A72)} & \multicolumn{1}{c}{little (A57)} & \multicolumn{1}{c}{big (Denver)} \\
         & \scriptsize \SI{1200}{\mega\hertz} & \scriptsize \SI{1600}{\mega\hertz}
         & \scriptsize \SI{1200}{\mega\hertz} & \scriptsize \SI{1600}{\mega\hertz}
         & \scriptsize \SI{2035}{\mega\hertz} & \scriptsize \SI{2035}{\mega\hertz} \\
         \midrule
         \csvreader[head to column names,late after line=\\]{data/ips.csv}{}{\benchmark & \num{\ipsimxlow} & \num{\ipsimxhigh} & \num{\ipsimxblow} & \num{\ipsimxbhigh} & \num{\ipstxlow} & \num{\ipstxhigh}}
         \bottomrule
    %\end{tabular}
    %\end{scriptsize}
\end{longtable}
\end{center}
\end{scriptsize}

\clearpage

\section{Task Characteristic Coefficients} \label{app:kernel-chars-values}

\begin{scriptsize}
\sisetup{round-mode=places, round-precision=2}

\begin{center}
\begin{longtable}{rrrrrrr}    
    \caption{Task characteristics coefficients identified for used kernels on little cluster.} \label{tab:kernel-char-little} \\
    \toprule
    & \multicolumn{2}{c}{I.MX8 MEK} & \multicolumn{2}{c}{I.MX8 Ixora} & \multicolumn{2}{c}{TX2 Developer Kit} \\
    \cmidrule(r){2-3} \cmidrule(lr){4-5} \cmidrule(l){6-7}
     & \scriptsize $\TaskCoefOffset{\TaskIdx}{1}$ [\si{\watt}] & \scriptsize $\TaskCoefSlope{\TaskIdx}{1}$ [\si{\watt}]
     & \scriptsize $\TaskCoefOffset{\TaskIdx}{1}$ [\si{\watt}] & \scriptsize $\TaskCoefSlope{\TaskIdx}{1}$ [\si{\watt}]
     & \scriptsize $\TaskCoefOffset{\TaskIdx}{1}$ [\si{\watt}] & \scriptsize $\TaskCoefSlope{\TaskIdx}{1}$ [\si{\watt}]
      \\
     \midrule
     \csvreader[head to column names,late after line=\\]{data/parameters_little.csv}{}{\benchmark & \num{\interceptimxa} & \num{\slopeimxa} & \num{\interceptimxb} & \num{\slopeimxb} & \num{\intercepttx} & \num{\slopetx}}
     \bottomrule
\end{longtable}
\end{center}
\end{scriptsize}

\newpage

\begin{scriptsize}
\sisetup{round-mode=places, round-precision=2}
\begin{center}
\begin{longtable}{rrrrrrr}    \caption{Task characteristics coefficients identified for used kernels on big cluster.} \label{tab:kernel-char-big} \\
    \toprule
    & \multicolumn{2}{c}{I.MX8 MEK} & \multicolumn{2}{c}{I.MX8 Ixora} & \multicolumn{2}{c}{TX2 Developer Kit} \\
    \cmidrule(r){2-3} \cmidrule(lr){4-5} \cmidrule(l){6-7}
     & \scriptsize $\TaskCoefOffset{\TaskIdx}{2}$ [\si{\watt}] & \scriptsize $\TaskCoefSlope{\TaskIdx}{2}$ [\si{\watt}]
     & \scriptsize $\TaskCoefOffset{\TaskIdx}{2}$ [\si{\watt}] & \scriptsize $\TaskCoefSlope{\TaskIdx}{2}$ [\si{\watt}]
     & \scriptsize $\TaskCoefOffset{\TaskIdx}{2}$ [\si{\watt}] & \scriptsize $\TaskCoefSlope{\TaskIdx}{2}$ [\si{\watt}]
      \\
     \midrule
     \csvreader[head to column names,late after line=\\]{data/parameters_big.csv}{}{\benchmark & \num{\interceptimxa} & \num{\slopeimxa} & \num{\interceptimxb} & \num{\slopeimxb} & \num{\intercepttx} & \num{\slopetx}}
     \bottomrule
\end{longtable}
\end{center}
\end{scriptsize}

\clearpage

\section{Transformation of Continuous Variables to a Feasible Allocation} \label{app:reconstruction-algorithm}

\begin{algorithm}
\algsetup{linenosize=\scriptsize}
\footnotesize
  
\SetKwInOut{Input}{input}
\SetKwInOut{Output}{output}
\SetKwProg{Fun}{Function}{ is}{end}
\Input{instantiation of variables $x_{\TaskIdx} \ \forall \Task{\TaskIdx} \in \TaskSet$}
\Output{Functions mapping tasks to windows and tasks to clusters (or failure)}

$TaskAssignmentPreference(\TaskIdx) \gets \frac{x \% \frac{1}{\ResNum}}{\frac{1}{\WinNum \cdot \ResNum}} \quad \forall \Task{\TaskIdx} \in \TaskSet$

$TaskAssignmentCluster(\TaskIdx) \gets \left\lfloor \frac{x}{\ResNum} \right\rfloor + 1 \quad \forall \Task{\TaskIdx} \in \TaskSet$

$TaskAssignmentWindow(\TaskIdx) \gets \left\lfloor TaskAssignmentPreference(\TaskIdx) \right\rfloor + 1 \quad \forall \Task{\TaskIdx} \in \TaskSet$

$WindowCapacity(\WinIdx,\ResIdx) \gets \ResCap{\ResIdx} \quad \forall \Win{\WinIdx} \in \WinSet, \Res{\ResIdx} \in \ResSet$

$TasksInWindow(\WinIdx) \gets \{\} \quad \forall \Win{\WinIdx} \in \WinSet$

$WindowLength(\WinIdx) \gets 0 \quad \forall \Win{\WinIdx} \in \WinSet$

$TaskAssigned(\TaskIdx) \gets False \quad \forall \Task{\TaskIdx} \in \TaskSet$

\For{$ Iteration \in \{0,2,\dots,2 \cdot \WinNum - 1 \} $}{
    $CurrentWindow \gets (Iteration\, \% \, \WinNum) + 1 $

    $TasksToCurrentWindow \gets \{\TaskIdx \ | \ \Task{\TaskIdx} \in \TaskSet \wedge  TaskAssignmentWindow(\TaskIdx) = CurrentWindow \wedge \neg TaskAssigned(\TaskIdx) \}$
    
    sort $TasksToCurrentWindow$ by $TaskAssignmentPreference$ in non-decreasing order
    
    \For{$ \TaskIdx \in TasksToCurrentWindow$}{
        \If{$WindowCapacity(\WinIdx, TaskAssignmentCluster(\TaskIdx)) > 0$}{
            $WindowCapacity(\WinIdx,TaskAssignmentCluster(\TaskIdx)) \mathrel{{-}{=}} 1$
            
            $TasksInWindow(\WinIdx) \gets TasksInWindow(\WinIdx) \cup \{\TaskIdx\}$
            
            $TaskAssigned(\TaskIdx) \gets True$
            
            $WindowLength(\WinIdx) \gets \max\{ WindowLength(\WinIdx), \TaskProc{\TaskIdx}{TaskAssignmentCluster(\TaskIdx)} \}$
        }
        \Else{
            \If{$CurrentWindow = \WinNum - 1$}{
                $TaskAssignmentWindow(\TaskIdx) \gets CurrentWindow + 1$
            }
            \Else{
                $TaskAssignmentWindow(\TaskIdx) \gets (CurrentWindow + 1) \, \% \, \WinNum$
            }
            $TaskAssignmentPreference(\TaskIdx) \gets 0$
        }
    }
    
}

\If{$\sum\limits_{\Win{\WinIdx} \in \WinSet} WindowLength(\WinIdx) > \MF \vee \neg \bigwedge\limits_{\Task{\TaskIdx} \in \TaskSet} TaskAssigned(\TaskIdx)$}{\Return{Schedule reconstruction failed.}}
\Else{
    \Return{($TaskAssignmentWindow, TaskAssignmentCluster$)}
}

 \caption{Get a feasible allocation from the instantiation of variables $x_{\TaskIdx}$, or report a failure.}
 \label{alg:reconstruction} 
\end{algorithm}

\clearpage

\section{Example Schedules} \label{app:example-schedules}

\begin{figure}[htb]
    \centering
    \input{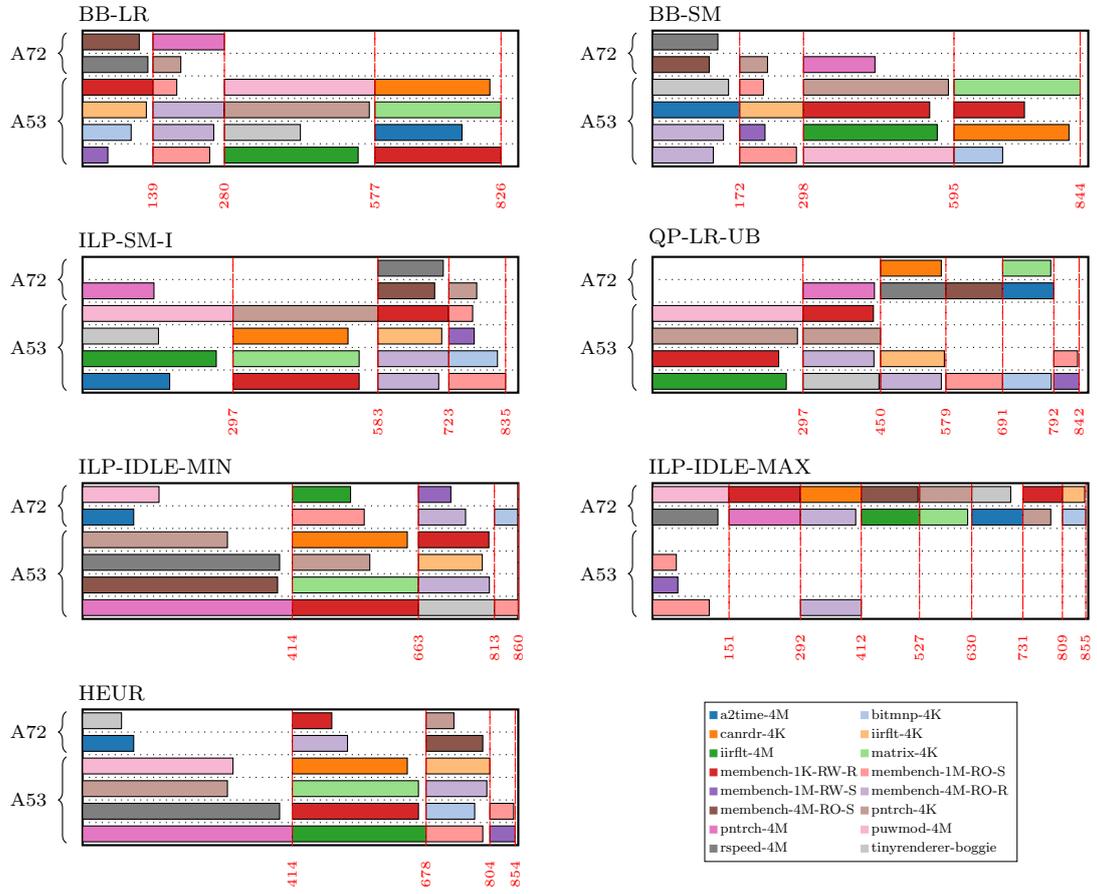}
    \caption{Example schedules for instance no. 1 on I.MX8~MEK (mixed-workload).}
    \label{fig:examples-schedules}
\end{figure}

\clearpage

\bibliography{mybibfile}

\begin{thebibliography}{10}
\expandafter\ifx\csname url\endcsname\relax
  \def\url#1{\texttt{#1}}\fi
\expandafter\ifx\csname urlprefix\endcsname\relax\def\urlprefix{URL }\fi
\expandafter\ifx\csname href\endcsname\relax
  \def\href#1#2{#2} \def\path#1{#1}\fi

\bibitem{watkins07:trans_from_feder_avion_archit}
C.~B. Watkins, R.~Walter,
  \href{https://ieeexplore.ieee.org/abstract/document/4391842}{Transitioning
  from federated avionics architectures to {Integrated Modular Avionics}}, in:
  2007 {IEEE}/{AIAA} 26th {Digital Avionics Systems Conference}, 2007, pp.
  2.A.1--1--2.A.1--10.
\newblock \href {http://dx.doi.org/10.1109/DASC.2007.4391842}
  {\path{doi:10.1109/DASC.2007.4391842}}.
\newline\urlprefix\url{https://ieeexplore.ieee.org/abstract/document/4391842}

\bibitem{ARINC653P1-4}
Avionics {{Application Software Standard Interface}}, {{Part}} 1, {{Required
  Services}}, {{ARINC Specification}} 653 {{Part}} 1 {{Supplement}} 4
  ({{653P1}}-4) (2015).

\bibitem{2008:Prisaznuk}
P.~J. Prisaznuk, Arinc 653 role in integrated modular avionics (ima), in: 2008
  IEEE/AIAA 27th Digital Avionics Systems Conference, 2008, pp.
  1.E.5--1--1.E.5--10.
\newblock \href {http://dx.doi.org/10.1109/DASC.2008.4702770}
  {\path{doi:10.1109/DASC.2008.4702770}}.

\bibitem{2019:Dugo}
A.~T.~A. Dugo, J.-B. Lefoul, F.~G. De~Magalhaes, D.~Assal, G.~Nicolescu,
  \href{https://doi.org/10.1145/3358196}{Cache locking content selection
  algorithms for {ARINC-653} compliant {RTOS}}, ACM Trans. Embed. Comput. Syst.
  18~(5s).
\newblock \href {http://dx.doi.org/10.1145/3358196}
  {\path{doi:10.1145/3358196}}.
\newline\urlprefix\url{https://doi.org/10.1145/3358196}

\bibitem{2021:Han}
P.~Han, Z.~Zhai, B.~Nielsen, U.~Nyman, Model-based optimization of
  {ARINC-653}partition scheduling, International Journal on Software Tools for
  Technology Transfer 23 (2021) 1--20.
\newblock \href {http://dx.doi.org/10.1007/s10009-020-00597-6}
  {\path{doi:10.1007/s10009-020-00597-6}}.

\bibitem{2022:Potteigerb}
B.~Potteiger, A.~Dubey, F.~Cai, X.~Koutsoukos, Z.~Zhang, Moving target defense
  for the security and resilience of mixed time and event triggered
  cyber--physical systems, Journal of Systems Architecture 125 (2022) 102420,
  \url{https://www.sciencedirect.com/science/article/pii/S1383762122000212}.
\newblock \href {http://dx.doi.org/10.1016/j.sysarc.2022.102420}
  {\path{doi:10.1016/j.sysarc.2022.102420}}.

\bibitem{2012:Salloum}
C.~E. Salloum, M.~Elshuber, O.~Höftberger, H.~Isakovic, A.~Wasicek, The across
  mpsoc -- a new generation of multi-core processors designed for
  safety-critical embedded systems, in: 2012 15th Euromicro Conference on
  Digital System Design, 2012, pp. 105--113.
\newblock \href {http://dx.doi.org/10.1109/DSD.2012.126}
  {\path{doi:10.1109/DSD.2012.126}}.

\bibitem{2018:Lentaris}
G.~Lentaris, K.~Maragos, I.~Stratakos, L.~Papadopoulos, O.~Papanikolaou,
  D.~Soudris, M.~Lourakis, X.~Zabulis, D.~Gonzalez-Arjona, G.~Furano,
  High-performance embedded computing in space: Evaluation of platforms for
  vision-based navigation, Journal of Aerospace Information Systems 15~(4)
  (2018) 178--192.
\newblock \href {http://arxiv.org/abs/https://doi.org/10.2514/1.I010555}
  {\path{arXiv:https://doi.org/10.2514/1.I010555}}, \href
  {http://dx.doi.org/10.2514/1.I010555} {\path{doi:10.2514/1.I010555}}.

\bibitem{2014:Lakshminarayanan}
V.~Lakshminarayanan, N.~Sriraam, The effect of temperature on the reliability
  of electronic components, in: 2014 IEEE International Conference on
  Electronics, Computing and Communication Technologies (CONECCT), 2014, pp.
  1--6.
\newblock \href {http://dx.doi.org/10.1109/CONECCT.2014.6740182}
  {\path{doi:10.1109/CONECCT.2014.6740182}}.

\bibitem{2019:Paul}
S.~Paul, N.~Chatterjee, P.~Ghosal, Dynamic task mapping and scheduling with
  temperature-awareness on network-on-chip based multicore systems, Journal of
  Systems Architecture 98 (2019) 271--288.
\newblock \href
  {http://dx.doi.org/https://doi.org/10.1016/j.sysarc.2019.08.002}
  {\path{doi:https://doi.org/10.1016/j.sysarc.2019.08.002}}.

\bibitem{2013:Zhuravlev}
S.~Zhuravlev, J.~C. Saez, S.~Blagodurov, A.~Fedorova, M.~Prieto, Survey of
  energy-cognizant scheduling techniques, IEEE Transactions on Parallel and
  Distributed Systems 24~(7) (2013) 1447--1464.
\newblock \href {http://dx.doi.org/10.1109/TPDS.2012.20}
  {\path{doi:10.1109/TPDS.2012.20}}.

\bibitem{2017:Pi}
K.~Pi, S.~Ghosh, K.~Khan, P.~Bikki, A fully differential operational amplifier
  with slew rate enhancer and adaptive bias for ultra low power, Journal of Low
  Power Electronics 13 (2017) 67--75.
\newblock \href {http://dx.doi.org/10.1166/jolpe.2017.1467}
  {\path{doi:10.1166/jolpe.2017.1467}}.

\bibitem{2006:Huang}
W.~Huang, S.~Ghosh, S.~Velusamy, K.~Sankaranarayanan, K.~Skadron, M.~Stan,
  Hotspot: A compact thermal modeling methodology for early-stage vlsi design,
  IEEE Transactions on Very Large Scale Integration Systems - VLSI 14 (2006)
  501--513.
\newblock \href {http://dx.doi.org/10.1109/TVLSI.2006.876103}
  {\path{doi:10.1109/TVLSI.2006.876103}}.

\bibitem{2017:Yoon}
C.~Yoon, S.~Lee, Y.~Choi, R.~Ha, H.~Cha, Accurate power modeling of modern
  mobile application processors, Journal of Systems Architecture 81 (2017)
  17--31.
\newblock \href
  {http://dx.doi.org/https://doi.org/10.1016/j.sysarc.2017.10.001}
  {\path{doi:https://doi.org/10.1016/j.sysarc.2017.10.001}}.

\bibitem{2016:Zhou}
J.~Zhou, T.~Wei, M.~Chen, J.~Yan, X.~S. Hu, Y.~Ma, Thermal-{{Aware Task
  Scheduling}} for {{Energy Minimization}} in {{Heterogeneous Real-Time MPSoC
  Systems}}, IEEE Transactions on Computer-Aided Design of Integrated Circuits
  and Systems 35~(8) (2016) 1269--1282.
\newblock \href {http://dx.doi.org/10.1109/TCAD.2015.2501286}
  {\path{doi:10.1109/TCAD.2015.2501286}}.

\bibitem{2016:Chien}
T.-H. Chien, R.-G. Chang, A thermal-aware scheduling for multicore
  architectures, Journal of Systems Architecture 62 (2016) 54--62.
\newblock \href
  {http://dx.doi.org/https://doi.org/10.1016/j.sysarc.2015.12.003}
  {\path{doi:https://doi.org/10.1016/j.sysarc.2015.12.003}}.

\bibitem{2018:Zhou}
J.~Zhou, J.~Yan, K.~Cao, Y.~Tan, T.~Wei, M.~Chen, G.~Zhang, X.~Chen, S.~Hu,
  Thermal-aware correlated two-level scheduling of real-time tasks with reduced
  processor energy on heterogeneous mpsocs, Journal of Systems Architecture 82
  (2018) 1--11.
\newblock \href
  {http://dx.doi.org/https://doi.org/10.1016/j.sysarc.2017.09.007}
  {\path{doi:https://doi.org/10.1016/j.sysarc.2017.09.007}}.

\bibitem{2023:Wu}
Z.~Wu, L.~Han, J.~Liu, Y.~Robert, F.~Vivien,
  \href{https://www.sciencedirect.com/science/article/pii/S0743731523000199}{Energy-aware
  mapping and scheduling strategies for real-time workflows under reliability
  constraints}, Journal of Parallel and Distributed Computing 176 (2023) 1--16.
\newblock \href {http://dx.doi.org/10.1016/j.jpdc.2023.02.004}
  {\path{doi:10.1016/j.jpdc.2023.02.004}}.
\newline\urlprefix\url{https://www.sciencedirect.com/science/article/pii/S0743731523000199}

\bibitem{2024:Xu}
H.~Xu, B.~Zhang, C.~Pan, K.~Li,
  \href{https://www.sciencedirect.com/science/article/pii/S0743731524000790}{Energy-efficient
  triple modular redundancy scheduling on heterogeneous multi-core real-time
  systems}, Journal of Parallel and Distributed Computing 191 (2024) 104915.
\newblock \href {http://dx.doi.org/10.1016/j.jpdc.2024.104915}
  {\path{doi:10.1016/j.jpdc.2024.104915}}.
\newline\urlprefix\url{https://www.sciencedirect.com/science/article/pii/S0743731524000790}

\bibitem{2023:Schieber}
B.~Schieber, B.~Samineni, S.~Vahidi, Interweaving {{Real-Time Jobs}}
  with~{{Energy Harvesting}} to~{{Maximize Throughput}}, in: C.-C. Lin,
  B.~M.~T. Lin, G.~Liotta (Eds.), {{WALCOM}}: {{Algorithms}} and
  {{Computation}}, Springer Nature Switzerland, Cham, 2023, pp. 305--316.
\newblock \href {http://dx.doi.org/10.1007/978-3-031-27051-2_26}
  {\path{doi:10.1007/978-3-031-27051-2_26}}.

\bibitem{2020:Li}
X.~Li, Z.~Li, W.~Zhou, Z.~Duan, Accurate on-chip temperature sensing for
  multicore processors using embedded thermal sensors, IEEE Transactions on
  Very Large Scale Integration (VLSI) Systems 28~(11) (2020) 2328--2341.
\newblock \href {http://dx.doi.org/10.1109/TVLSI.2020.3012833}
  {\path{doi:10.1109/TVLSI.2020.3012833}}.

\bibitem{2022:Sadiqbatcha}
S.~Sadiqbatcha, J.~Zhang, H.~Amrouch, S.~X.-D. Tan, Real-time full-chip thermal
  tracking: A post-silicon, machine learning perspective, IEEE Transactions on
  Computers 71~(6) (2022) 1411--1424.
\newblock \href {http://dx.doi.org/10.1109/TC.2021.3086112}
  {\path{doi:10.1109/TC.2021.3086112}}.

\bibitem{2017:Fakih}
M.~Fakih, A.~Lenz, M.~{Azkarate-Askasua}, J.~Coronel, A.~Crespo, S.~Davidmann,
  J.~C. Diaz~Garcia, N.~G. Romero, K.~Gr{\"u}ttner, S.~Schreiner, R.~Seyyedi,
  R.~Obermaisser, A.~Maleki, J.~{\"O}berg, M.~T. Mohammadat,
  J.~{P{\'e}rez-Cerrolaza}, I.~Sander, I.~S{\"o}derquist, {{SAFEPOWER}}
  project: {{Architecture}} for safe and power-efficient mixed-criticality
  systems, Microprocessors and Microsystems 52 (2017) 89--105.
\newblock \href {http://dx.doi.org/10.1016/j.micpro.2017.05.016}
  {\path{doi:10.1016/j.micpro.2017.05.016}}.

\bibitem{imx8-mek}
NXP,
  \href{https://www.nxp.com/design/development-boards/i-mx-evaluation-and-development-boards/i-mx-8quadmax-multisensory-enablement-kit-mek:MCIMX8QM-CPU}{{i.MX
  8QuadMax/QuadPlus Multisensory Enablement Kit}} (2021).
\newline\urlprefix\url{https://www.nxp.com/design/development-boards/i-mx-evaluation-and-development-boards/i-mx-8quadmax-multisensory-enablement-kit-mek:MCIMX8QM-CPU}

\bibitem{2021:NXP-IMX8QM}
{NXP}, \href{https://www.nxp.com/webapp/Download?colCode=IMX8QMRM}{I.{{MX
  8QuadMax Applications Processor}}} (Sep. 2021).
\newline\urlprefix\url{https://www.nxp.com/webapp/Download?colCode=IMX8QMRM}

\bibitem{TX2devkit}
NVIDIA,
  \href{https://developer.nvidia.com/embedded/jetson-tx2-developer-kit}{Harness
  ai at the edge with the jetson tx2 developer kit} (2022).
\newline\urlprefix\url{https://developer.nvidia.com/embedded/jetson-tx2-developer-kit}

\bibitem{2021:Benedikt}
O.~Benedikt, M.~Sojka, P.~Zaykov, D.~Hornof, M.~Kafka, P.~Šůcha,
  Z.~Hanzálek, Thermal-aware scheduling for mpsoc in the avionics domain:
  Tooling and initial results, in: 2021 IEEE 27th International Conference on
  Embedded and Real-Time Computing Systems and Applications (RTCSA), 2021, pp.
  159--168.
\newblock \href {http://dx.doi.org/10.1109/RTCSA52859.2021.00026}
  {\path{doi:10.1109/RTCSA52859.2021.00026}}.

\bibitem{2011:Selicean}
D.~Tamaş-Selicean, P.~Pop, Optimization of time-partitions for
  mixed-criticality real-time distributed embedded systems, in: 2011 14th IEEE
  International Symposium on Object/Component/Service-Oriented Real-Time
  Distributed Computing Workshops, 2011, pp. 1--10.
\newblock \href {http://dx.doi.org/10.1109/ISORCW.2011.11}
  {\path{doi:10.1109/ISORCW.2011.11}}.

\bibitem{2015:Selicean}
D.~T\u{a}ma\c{s}-Selicean, P.~Pop,
  \href{https://doi.org/10.1145/2700103}{Design optimization of
  mixed-criticality real-time embedded systems}, ACM Trans. Embed. Comput.
  Syst. 14~(3).
\newblock \href {http://dx.doi.org/10.1145/2700103}
  {\path{doi:10.1145/2700103}}.
\newline\urlprefix\url{https://doi.org/10.1145/2700103}

\bibitem{2009:Waszniowski}
L.~Waszniowski, J.~Krákora, Z.~Hanzálek,
  \href{https://www.sciencedirect.com/science/article/pii/S0164121209001071}{Case
  study on distributed and fault tolerant system modeling based on timed
  automata}, Journal of Systems and Software 82~(10) (2009) 1678--1694, sI:
  YAU.
\newblock \href {http://dx.doi.org/https://doi.org/10.1016/j.jss.2009.04.042}
  {\path{doi:https://doi.org/10.1016/j.jss.2009.04.042}}.
\newline\urlprefix\url{https://www.sciencedirect.com/science/article/pii/S0164121209001071}

\bibitem{2016:Chen}
J.~Chen, C.~Du, P.~Han,
  \href{https://doi.org/10.1371/journal.pone.0168064}{Scheduling independent
  partitions in integrated modular avionics systems}, PLOS ONE 11~(12) (2016)
  1--18.
\newblock \href {http://dx.doi.org/10.1371/journal.pone.0168064}
  {\path{doi:10.1371/journal.pone.0168064}}.
\newline\urlprefix\url{https://doi.org/10.1371/journal.pone.0168064}

\bibitem{chen07:energ_effic_sched_for_real}
J.-J. Chen, C.-F. Kuo,
  \href{https://ieeexplore.ieee.org/abstract/document/4296833?casa_token=zjEXwqptQU8AAAAA:0YJkdHQB6u06174YeGNxPxt36sGtXh9hJIMsi7AvUH8VK2WDnDTWR6cK4OVh7pSxzl5GLTpK}{Energy-{{Efficient
  Scheduling}} for {{Real-Time Systems}} on {{Dynamic Voltage Scaling}}
  ({{DVS}}) {{Platforms}}}, in: 13th {{IEEE International Conference}} on
  {{Embedded}} and {{Real-Time Computing Systems}} and {{Applications}}
  ({{RTCSA}} 2007), 2007, pp. 28--38.
\newblock \href {http://dx.doi.org/10.1109/RTCSA.2007.37}
  {\path{doi:10.1109/RTCSA.2007.37}}.
\newline\urlprefix\url{https://ieeexplore.ieee.org/abstract/document/4296833?casa_token=zjEXwqptQU8AAAAA:0YJkdHQB6u06174YeGNxPxt36sGtXh9hJIMsi7AvUH8VK2WDnDTWR6cK4OVh7pSxzl5GLTpK}

\bibitem{2011:Fisher}
N.~Fisher, J.-J. Chen, S.~Wang, L.~Thiele, Thermal-aware global real-time
  scheduling and analysis on multicore systems, Journal of Systems Architecture
  57~(5) (2011) 547--560.
\newblock \href {http://dx.doi.org/10.1016/j.sysarc.2010.09.010}
  {\path{doi:10.1016/j.sysarc.2010.09.010}}.

\bibitem{2016:Gerards}
M.~E.~T. Gerards, J.~L. Hurink, P.~K.~F. H{\"o}lzenspies, A survey of offline
  algorithms for energy minimization under deadline constraints, Journal of
  Scheduling 19~(1) (2016) 3--19.
\newblock \href {http://dx.doi.org/10.1007/s10951-015-0463-8}
  {\path{doi:10.1007/s10951-015-0463-8}}.

\bibitem{2019:Perez}
J.~Perez~Rodriguez, P.~Meumeu~Yomsi, Thermal-aware schedulability analysis for
  fixed-priority non-preemptive real-time systems, in: 2019 IEEE Real-Time
  Systems Symposium (RTSS), 2019, pp. 154--166.
\newblock \href {http://dx.doi.org/10.1109/RTSS46320.2019.00024}
  {\path{doi:10.1109/RTSS46320.2019.00024}}.

\bibitem{nvidiaJetsonTX2Series2019}
{NVIDIA},
  \href{http://developer.nvidia.com/embedded/dlc/jetson-tx2-series-thermal-design-guide}{Jetson
  {{TX2 Series Thermal Design Guide}} ({{TDG-09420-001}}\_v1.0)} (2019).
\newline\urlprefix\url{http://developer.nvidia.com/embedded/dlc/jetson-tx2-series-thermal-design-guide}

\bibitem{2018:Manna}
K.~Manna, P.~Mukherjee, S.~Chattopadhyay, I.~Sengupta, Thermal-{{Aware
  Application Mapping Strategy}} for {{Network-on-Chip Based System Design}},
  IEEE Transactions on Computers 67~(4) (2018) 528--542.
\newblock \href {http://dx.doi.org/10.1109/TC.2017.2770130}
  {\path{doi:10.1109/TC.2017.2770130}}.

\bibitem{2018:Kanduri}
A.~Kanduri, M.~Haghbayan, A.~M. Rahmani, M.~Shafique, A.~Jantsch, P.~Liljeberg,
  {{adBoost}}: {{Thermal Aware Performance Boosting Through Dark Silicon
  Patterning}}, IEEE Transactions on Computers 67~(8) (2018) 1062--1077.
\newblock \href {http://dx.doi.org/10.1109/TC.2018.2805683}
  {\path{doi:10.1109/TC.2018.2805683}}.

\bibitem{2020:Sojka}
M.~Sojka, O.~Benedikt, Z.~Hanzálek, P.~Zaykov, Testbed for thermal and
  performance analysis in mpsoc systems, in: 2020 15th Conference on Computer
  Science and Information Systems (FedCSIS), 2020, pp. 683--692.
\newblock \href {http://dx.doi.org/10.15439/2020F174}
  {\path{doi:10.15439/2020F174}}.

\bibitem{2022:Ara}
G.~Ara, T.~Cucinotta, A.~Mascitti, Simulating execution time and power
  consumption of real-time tasks on embedded platforms, in: Proceedings of the
  37th {{ACM}}/{{SIGAPP Symposium}} on {{Applied Computing}}, {{SAC}} '22,
  {Association for Computing Machinery}, {New York, NY, USA}, 2022, pp.
  491--500.
\newblock \href {http://dx.doi.org/10.1145/3477314.3507030}
  {\path{doi:10.1145/3477314.3507030}}.

\bibitem{2014:Chen}
G.~Chen, K.~Huang, A.~Knoll, Energy optimization for real-time multiprocessor
  system-on-chip with optimal {DVFS} and {DPM} combination, ACM Transactions on
  Embedded Computing Systems 13~(3s) (2014) 111:1--111:21.
\newblock \href {http://dx.doi.org/10.1145/2567935}
  {\path{doi:10.1145/2567935}}.

\bibitem{2020:Zhou}
J.~Zhou, J.~Sun, P.~Cong, Z.~Liu, X.~Zhou, T.~Wei, S.~Hu, Security-{Critical}
  {Energy}-{Aware} {Task} {Scheduling} for {Heterogeneous} {Real}-{Time}
  {MPSoCs} in {IoT}, IEEE Transactions on Services Computing 13~(4) (2020)
  745--758, conference Name: IEEE Transactions on Services Computing.
\newblock \href {http://dx.doi.org/10.1109/TSC.2019.2963301}
  {\path{doi:10.1109/TSC.2019.2963301}}.

\bibitem{2016:Salami}
B.~Salami, H.~Noori, F.~Mehdipour, M.~Baharani, Physical-aware predictive
  dynamic thermal management of multi-core processors, Journal of Parallel and
  Distributed Computing 95 (2016) 42--56.
\newblock \href {http://dx.doi.org/10.1016/j.jpdc.2016.03.008}
  {\path{doi:10.1016/j.jpdc.2016.03.008}}.

\bibitem{2019:Balsinia}
A.~Balsini, L.~Pannocchi, T.~Cucinotta, Modeling and simulation of power
  consumption and execution times for real-time tasks on embedded heterogeneous
  architectures, ACM SIGBED Review 16~(3) (2019) 51--56.
\newblock \href {http://dx.doi.org/10.1145/3373400.3373408}
  {\path{doi:10.1145/3373400.3373408}}.

\bibitem{2021:Sahid}
A.~Shahid, M.~Fahad, R.~R. Manumachu, A.~Lastovetsky, Improving the accuracy of
  energy predictive models for multicore {CPUs} by combining utilization and
  performance events model variables, Journal of Parallel and Distributed
  Computing 151 (2021) 38--51.
\newblock \href {http://dx.doi.org/10.1016/j.jpdc.2021.01.007}
  {\path{doi:10.1016/j.jpdc.2021.01.007}}.

\bibitem{2018:Zhang}
K.~Zhang, A.~Guliani, S.~{Ogrenci-Memik}, G.~Memik, K.~Yoshii, R.~Sankaran,
  P.~Beckman, Machine {{Learning-Based Temperature Prediction}} for {{Runtime
  Thermal Management Across System Components}}, IEEE Transactions on Parallel
  and Distributed Systems 29~(2) (2018) 405--419.
\newblock \href {http://dx.doi.org/10.1109/TPDS.2017.2732951}
  {\path{doi:10.1109/TPDS.2017.2732951}}.

\bibitem{2002:Cofera}
D.~D. Cofer, M.~Rangarajan, Formal {{Modeling}} and {{Analysis}} of {{Advanced
  Scheduling Features}} in an {{Avionics RTOS}}, in: Proceedings of the
  {{Second International Conference}} on {{Embedded Software}}, {{EMSOFT}} '02,
  {Springer-Verlag}, {Berlin, Heidelberg}, 2002, pp. 138--152.

\bibitem{2018:Li}
T.~Li, G.~Yu, J.~Song, Minimizing energy by thermal-aware task assignment and
  speed scaling in heterogeneous {{MPSoC}} systems, Journal of Systems
  Architecture 89 (2018) 118--130.
\newblock \href {http://dx.doi.org/10.1016/j.sysarc.2018.08.003}
  {\path{doi:10.1016/j.sysarc.2018.08.003}}.

\bibitem{2015:Zhu}
D.~Zhu, L.~Chen, T.~M. Pinkston, M.~Pedram, {{TAPP}}: {{Temperature-aware}}
  application mapping for {{NoC-based}} many-core processors, in: 2015
  {{Design}}, {{Automation}} \& {{Test}} in {{Europe Conference}} \&
  {{Exhibition}} ({{DATE}}), 2015, pp. 1241--1244.

\bibitem{2021:Abdollahi}
M.~Abdollahi, Y.~Firouzabadi, F.~Dehghani, S.~Mohammadi, {{THAMON}}:
  {{Thermal-aware High-performance Application Mapping}} onto
  {{Opto-electrical}} network-on-chip, Journal of Systems Architecture 121
  (2021) 102315.
\newblock \href {http://dx.doi.org/10.1016/j.sysarc.2021.102315}
  {\path{doi:10.1016/j.sysarc.2021.102315}}.

\bibitem{2011:Chantem}
T.~Chantem, X.~S. Hu, R.~P. Dick, Temperature-{{Aware Scheduling}} and
  {{Assignment}} for {{Hard Real-Time Applications}} on {{MPSoCs}}, IEEE
  Transactions on Very Large Scale Integration (VLSI) Systems 19~(10) (2011)
  1884--1897.
\newblock \href {http://dx.doi.org/10.1109/TVLSI.2010.2058873}
  {\path{doi:10.1109/TVLSI.2010.2058873}}.

\bibitem{2018:Jiang}
X.~Jiang, K.~Huang, X.~Zhang, R.~Yan, K.~Wang, D.~Xiong, X.~Yan,
  Energy-{{Efficient Scheduling}} of {{Periodic Applications}} on
  {{Safety-Critical Time-Triggered Multiprocessor Systems}}, Electronics 7~(6)
  (2018) 98.
\newblock \href {http://dx.doi.org/10.3390/electronics7060098}
  {\path{doi:10.3390/electronics7060098}}.

\bibitem{2019:Li}
T.~Li, T.~Zhang, G.~Yu, J.~Song, J.~Fan, Minimizing temperature and energy of
  real-time applications with precedence constraints on heterogeneous {{MPSoC}}
  systems, Journal of Systems Architecture 98 (2019) 79--91.
\newblock \href {http://dx.doi.org/10.1016/j.sysarc.2019.07.001}
  {\path{doi:10.1016/j.sysarc.2019.07.001}}.

\bibitem{2015:Kuo}
C.-F. Kuo, Y.-F. Lu, Task assignment with energy efficiency considerations for
  non-dvs heterogeneous multiprocessor systems, SIGAPP Appl. Comput. Rev.
  14~(4) (2015) 8–18.
\newblock \href {http://dx.doi.org/10.1145/2724928.2724929}
  {\path{doi:10.1145/2724928.2724929}}.

\bibitem{2019:Cao}
K.~Cao, J.~Zhou, T.~Wei, M.~Chen, S.~Hu, K.~Li, A survey of optimization
  techniques for thermal-aware {{3D}} processors, Journal of Systems
  Architecture 97 (2019) 397--415.
\newblock \href {http://dx.doi.org/10.1016/j.sysarc.2019.01.003}
  {\path{doi:10.1016/j.sysarc.2019.01.003}}.

\bibitem{2018:Liu}
H.~Liu, B.~Liu, L.~T. Yang, M.~Lin, Y.~Deng, K.~Bilal, S.~U. Khan,
  Thermal-{{Aware}} and {{DVFS-Enabled Big Data Task Scheduling}} for {{Data
  Centers}}, IEEE Transactions on Big Data 4~(2) (2018) 177--190.
\newblock \href {http://dx.doi.org/10.1109/TBDATA.2017.2763612}
  {\path{doi:10.1109/TBDATA.2017.2763612}}.

\bibitem{lucasMEMPowerDataAwareGPU2019}
J.~Lucas, B.~Juurlink, {{MEMPower}}: {{Data-Aware GPU Memory Power Model}}, in:
  M.~Schoeberl, C.~Hochberger, S.~Uhrig, J.~Brehm, T.~Pionteck (Eds.),
  Architecture of {{Computing Systems}} \textendash{} {{ARCS}} 2019, Lecture
  {{Notes}} in {{Computer Science}}, {Springer International Publishing}, 2019,
  pp. 195--207.

\bibitem{mantovaniPerformancePowerAnalysis2018}
F.~Mantovani, E.~Calore, Performance and {{Power Analysis}} of {{HPC
  Workloads}} on {{Heterogeneous Multi-Node Clusters}}, Journal of Low Power
  Electronics and Applications 8~(2) (2018) 13.
\newblock \href {http://dx.doi.org/10.3390/jlpea8020013}
  {\path{doi:10.3390/jlpea8020013}}.

\bibitem{leeThermalAwareSchedulingIntegratedCPUs2019}
Y.~Lee, K.~G. Shin, H.~S. Chwa, Thermal-aware scheduling for integrated
  {CPU}s--{GPU} platforms, ACM Trans. Embed. Comput. Syst. 18~(5s).
\newblock \href {http://dx.doi.org/10.1145/3358235}
  {\path{doi:10.1145/3358235}}.

\bibitem{2015:Pallister}
J.~Pallister, S.~J. Hollis, J.~Bennett, Identifying {{Compiler Options}} to
  {{Minimize Energy Consumption}} for {{Embedded Platforms}}, The Computer
  Journal 58~(1) (2015) 95--109.
\newblock \href {http://dx.doi.org/10.1093/comjnl/bxt129}
  {\path{doi:10.1093/comjnl/bxt129}}.

\bibitem{1990:Garey}
M.~R. Garey, D.~S. Johnson, Computers and Intractability; A Guide to the Theory
  of NP-Completeness, W. H. Freeman \& Co., USA, 1990.

\bibitem{2024:Hanen}
C.~Hanen, A.~M. Kordon,
  \href{https://doi.org/10.1007/s10951-023-00788-4}{Fixed-parameter
  tractability of scheduling dependent typed tasks subject to release times and
  deadlines}, J. Sched. 27~(2) (2024) 119--133.
\newblock \href {http://dx.doi.org/10.1007/S10951-023-00788-4}
  {\path{doi:10.1007/S10951-023-00788-4}}.
\newline\urlprefix\url{https://doi.org/10.1007/s10951-023-00788-4}

\bibitem{2014:Pagani}
S.~Pagani, H.~Khdr, W.~Munawar, J.-J. Chen, M.~Shafique, M.~Li, J.~Henkel,
  {TSP}: Thermal safe power: Efficient power budgeting for many-core systems in
  dark silicon, in: Proceedings of the 2014 International Conference on
  Hardware/Software Codesign and System Synthesis, CODES '14, Association for
  Computing Machinery, New York, NY, USA, 2014.
\newblock \href {http://dx.doi.org/10.1145/2656075.2656103}
  {\path{doi:10.1145/2656075.2656103}}.

\bibitem{2019:Guo}
X.~Guo, H.~Wang, C.~Zhang, H.~Tang, Y.~Yuan,
  \href{https://wanghaiuestc.github.io/papers/ASPDAC19_pwl.pdf}{Leakage-aware
  thermal management for multi-core systems using piecewise linear model based
  predictive control}, 2019, pp. 64--69.
\newblock \href {http://dx.doi.org/10.1145/3287624.3287665}
  {\path{doi:10.1145/3287624.3287665}}.
\newline\urlprefix\url{https://wanghaiuestc.github.io/papers/ASPDAC19_pwl.pdf}

\bibitem{2020:Shahid}
A.~Shahid, M.~Fahad, R.~R. Manumachu, A.~Lastovetsky, A {{Comparative Study}}
  of {{Techniques}} for {{Energy Predictive Modeling Using Performance
  Monitoring Counters}} on {{Modern Multicore CPUs}}, IEEE Access 8 (2020)
  143306--143332.
\newblock \href {http://dx.doi.org/10.1109/ACCESS.2020.3013812}
  {\path{doi:10.1109/ACCESS.2020.3013812}}.

\bibitem{gurobi}
{Gurobi Optimization, LLC}, \href{https://www.gurobi.com}{{Gurobi Optimizer
  Reference Manual}} (2024).
\newline\urlprefix\url{https://www.gurobi.com}

\bibitem{2022:Julia-Evolutionary}
Wildart,
  \href{https://wildart.github.io/Evolutionary.jl/stable/}{Evolutionary.jl},
  online, julia package hosted on Github. (08 2022).
\newline\urlprefix\url{https://wildart.github.io/Evolutionary.jl/stable/}

\bibitem{2017:NVIDIA-TX2-TRM}
{NVIDIA},
  \href{https://developer.nvidia.com/embedded/downloads#?search=Tegra%20X2}{Tegra
  {{X2 Series SoC}}}, Technical Reference Manual Parker\_TRM\_DP07821001p (Jun.
  2017).
\newline\urlprefix\url{https://developer.nvidia.com/embedded/downloads#?search=Tegra%20X2}

\bibitem{imx8-toradex}
Toradex,
  \href{https://www.toradex.com/computer-on-modules/apalis-arm-family/nxp-imx-8}{{NXP
  i.MX 8 Computer on Module}} (2021).
\newline\urlprefix\url{https://www.toradex.com/computer-on-modules/apalis-arm-family/nxp-imx-8}

\bibitem{workswellWIC336}
{Workswell},
  \href{https://workswell-thermal-camera.com/workswell-infrared-camera-wic/#specifications}{Workswell
  infrared camera} (2022).
\newline\urlprefix\url{https://workswell-thermal-camera.com/workswell-infrared-camera-wic/#specifications}

\bibitem{2022:EEMBC-Autobench}
{EEMBC}, \href{https://www.eembc.org/autobench/}{Autobench 2.0} (2022).
\newline\urlprefix\url{https://www.eembc.org/autobench/}

\bibitem{2009:Poovey}
J.~A. Poovey, T.~M. Conte, M.~Levy, S.~Gal-On, A benchmark characterization of
  the eembc benchmark suite, IEEE Micro 29~(5) (2009) 18--29.
\newblock \href {http://dx.doi.org/10.1109/MM.2009.74}
  {\path{doi:10.1109/MM.2009.74}}.

\bibitem{2020:zhang}
J.~Zhang, S.~Sadiqbatcha, W.~Jin, S.~X.-D. Tan, Accurate power density map
  estimation for commercial multi-core microprocessors, in: 2020 Design,
  Automation Test in Europe Conference Exhibition (DATE), 2020, pp. 1085--1090.
\newblock \href {http://dx.doi.org/10.23919/DATE48585.2020.9116545}
  {\path{doi:10.23919/DATE48585.2020.9116545}}.

\bibitem{2019:Hooker}
J.~N. Hooker, Logic-based benders decomposition for large-scale optimization,
  Large scale optimization in supply chains and smart manufacturing: Theory and
  applications (2019) 1--26.

\end{thebibliography}

\end{document}